\documentclass[11pt]{article}

\usepackage[margin=1in]{geometry}
\usepackage{amsmath,amssymb,amsthm,mathtools}
\usepackage{physics}
\usepackage{enumitem}
\usepackage{microtype}
\usepackage{needspace}
\usepackage{algorithm}
\usepackage{algpseudocode}
\usepackage{authblk}
\usepackage{tikz}
\usetikzlibrary{arrows.meta,decorations.pathreplacing}
\usepackage[numbers,comma,sort&compress]{natbib}
\usepackage[colorlinks=true,allcolors=blue]{hyperref}

\newtheorem{theorem}{Theorem}[section]
\newtheorem{proposition}[theorem]{Proposition}
\newtheorem{lemma}[theorem]{Lemma}
\theoremstyle{definition}
\newtheorem{definition}[theorem]{Definition}
\newtheorem{remark}[theorem]{Remark}

\newcommand{\cH}{\mathcal H}
\newcommand{\cL}{\mathcal L}
\newcommand{\cE}{\mathcal E}
\newcommand{\cK}{\mathcal K}
\newcommand{\Id}{\mathrm{Id}}
\providecommand{\Tr}{}
\renewcommand{\Tr}{\operatorname{Tr}}
\renewcommand{\ket}[1]{\lvert #1\rangle}
\renewcommand{\bra}[1]{\langle #1\rvert}
\newcommand{\proj}[1]{\lvert #1\rangle\!\langle #1\rvert}
\renewcommand{\norm}[1]{\left\lVert #1\right\rVert}
\title{Query-Optimal and Gate-Efficient Lindbladian Simulation}
\author[1]{Boyang Chen}
\author[2,3]{Minbo Gao}
\author[4,5]{Xinzhao Wang$^{*}$}
\author[4,5]{Shuo Zhou}

\affil[1]{Department of Computer Science and Technology,
Tsinghua University, Beijing, China}
\affil[2]{Institute of Software,
Chinese Academy of Sciences, Beijing, China}
\affil[3]{University of Chinese Academy of Sciences,
Beijing, China}
\affil[4]{Center on Frontiers of Computing Studies,
Peking University, Beijing, China}
\affil[5]{School of Computer Science,
Peking University, Beijing, China}
\date{}

\begin{document}
\maketitle

\begingroup
\renewcommand{\thefootnote}{}
\renewcommand{\theHfootnote}{authors-alphabetical}
\footnotemark
\footnotetext{\begin{tabular}[t]{@{}l@{}}
Authors are listed alphabetically.\\
$^{*}$Corresponding author: \texttt{xinzhaowang3@gmail.com}
\end{tabular}}
\endgroup

\begin{abstract}
We give a quantum algorithm for Lindbladian simulation given a block encoding
of the Hamiltonian $H$ and a projected unitary encoding of the stacked jump operator
$B=\sum_{k=1}^m\ket{k}\otimes L_k$, with normalization factors
$\alpha_H$ and $\alpha_B$, respectively.  For evolution time $t$, set
$\tau=(\alpha_H+\alpha_B^2)t$.  The algorithm approximates the evolution
channel to diamond-norm error $\varepsilon$ using
\[
    O\!\left(
        \tau+
        \frac{\log(1/\varepsilon)}
        {\log\!\left(e+\log(1/\varepsilon)/\tau\right)}
    \right)
\]
oracle queries, matching the query lower bound for Hamiltonian simulation.
The number of additional one- and two-qubit gates is linear in the query
complexity up to polylogarithmic factors.
The query- and gate-complexity bounds extend to
Lipschitz-continuous time-dependent Lindbladians under coherent time-indexed
oracle access.
Our construction uses a one-query transducer that implements a product of
rational approximations to short-time evolution when supplied with a
catalyst.  We bound
the error from omitting the catalyst by exploiting orthogonality between
different sequences of Kraus labels.
The gate implementation combines a compressed Kraus-label representation,
which stores only the positions and values of the nonzero labels, with the
rotation factorization of Chen et al.~\cite{ChenEtAl2026Gate}.
\end{abstract}

\newpage
\section{Introduction}

Quantum simulation is a fundamental problem in quantum computation. While
closed quantum systems are described by Hamiltonian dynamics, Markovian
open-system dynamics is commonly described by Lindblad master
equations~\cite{GoriniKossakowskiSudarshan1976,Lindblad1976}. Lindbladian
dynamics describes dissipative processes such as decoherence, relaxation,
and thermalization~\cite{BreuerPetruccione2002}. Beyond modeling physical
dissipation, Lindblad evolution has been used for dissipative quantum
computation and ground- and thermal-state
preparation~\cite{VWI09,KlieschEtAl2011,DingChenLin2024,ChenEtAl2025Thermal}.
It also serves as an algorithmic primitive for solving linear differential
equations, nonconvex optimization, and optimal control of open quantum
systems~\cite{SGAZ25,ChenEtAl2025Langevin,ChenEtAl2025LocalMinima,HeEtAl2024Control}. Efficient
simulation of Lindbladian evolution is therefore a basic problem in the
study of open-system quantum dynamics and quantum algorithms.

A natural question is to determine the query complexity of Lindbladian simulation as a function of the evolution time and the target precision. Hamiltonian simulation provides a natural comparison: when the dissipative terms vanish, Lindbladian evolution reduces to Hamiltonian evolution, for which the optimal joint dependence on evolution time and precision is known.
For Hamiltonian simulation, let $t$ denote the evolution time and
$\tau=\alpha_H t$ the normalized evolution time, where $\alpha_H$ is the
block-encoding normalization of $H$. For $\tau\geq1$, the optimal
query complexity
is~\cite{LowChuang2019,GilyenEtAl2019}
\begin{equation}
\Theta\!\left(
\tau+
\frac{\log(1/\varepsilon)}
{\log\left(e+\log(1/\varepsilon)/\tau\right)}
\right).
\label{eq:intro-ham-bound}
\end{equation}
This lower bound also applies to Lindbladian simulation. The question is whether it can be achieved in the presence of arbitrary dissipation, or whether dissipation introduces an additional asymptotic query cost. To summarize, we ask:

\begin{quote}
    \emph{What is the optimal query complexity of simulating Lindbladian dynamics?}
\end{quote}

A sequence of works has progressively improved the complexity of Lindbladian
simulation. In the following comparison, $\tau$ denotes evolution time
normalized according to each algorithm's input model. We display only the
time and precision dependence, suppressing other model-dependent factors, and
our normalization is specified in Section~\ref{sec:model}.
Kliesch et al.~\cite{KlieschEtAl2011} established an $O(\tau^2/\varepsilon)$
simulation algorithm for local Lindbladian dynamics, and Childs and
Li~\cite{ChildsLi2017} developed improved algorithms with complexity
$O(\tau^{3/2}/\sqrt{\varepsilon})$ in the sparse setting, together with an
$\Omega(\tau)$ query lower bound. Cleve and Wang~\cite{CleveWang2017} gave the first algorithm with nearly
linear dependence on the evolution time and polylogarithmic dependence on
the inverse error. Li and
Wang~\cite{LiWang2023} subsequently obtained the query complexity
$$
O\!\left(
\tau\frac{\log(\tau/\varepsilon)}
{\log\log(\tau/\varepsilon)}
\right).
$$
For time-dependent Lindbladians, He et al.~\cite{HeEtAl2024Control} obtained
\(O\!\left(
\tau\log^2(\tau/\varepsilon)/\log^2\log(\tau/\varepsilon)
\right)\) queries.
These results established that high-precision Lindbladian simulation is
possible with nearly linear dependence on the evolution time, but they did
not attain the optimal joint time--precision dependence known for Hamiltonian
simulation. More recently, Borras and Marvian~\cite{BorrasMarvian2026}
showed that an additive dependence on time and precision in the number of
queries to the jump operators can be achieved for Lindbladians satisfying
\(\sum_k L_k^\dagger L_k\propto I\).
For general Lindbladians, however, the optimal joint dependence on time and
precision remained open.

We resolve this question by giving an algorithm for general Lindbladian
simulation that matches the Hamiltonian query lower bound in
\eqref{eq:intro-ham-bound}. In the oracle model specified below, with
normalized evolution time $\tau$, we approximate time-independent
Lindbladian evolution to diamond-norm error $\varepsilon$ using
$$
O\!\left(
\tau+
\frac{\log(1/\varepsilon)}
{\log\!\left(e+\log(1/\varepsilon)/\tau\right)}
\right)
$$
oracle queries.
The same query bound extends to Lipschitz-continuous time-dependent Lindbladians under coherent time-indexed oracle access. The number of additional one- and two-qubit gates is linear in the query complexity up to polylogarithmic factors, with only logarithmic dependence on the number of jump operators.

Our algorithm builds on the transducer framework of Belovs, Jeffery, and Yolcu~\cite{BelovsJefferyYolcu2024} and the Hamiltonian simulation algorithm of Chen et al.~\cite{ChenEtAl2026}. We extend this approach to Lindbladian simulation through a new transducer construction and a catalyst-removal error analysis that establishes the optimal query bound.

\subsection{Model and main result}
\label{sec:model}

Unless a subscript is shown, $\|\cdot\|$ denotes the Euclidean norm for
vectors and the induced operator norm for linear maps.
We write $\log$ for the natural logarithm and $\log_2$ for the base-two
logarithm used in register sizes.

Let \(\cH_{\mathsf S}\) be the system Hilbert space, let \(H\) be Hermitian, and let
\(L_1,\ldots,L_m\) be the jump operators of the Lindbladian. Introduce an
\((m+1)\)-dimensional Kraus-label register \(\mathsf E\), with distinguished state
\(\ket0_{\mathsf E}\) and a nonzero-label subspace
$$
\cH_{\rm L}=\operatorname{span}\{\ket{k}_{\mathsf E}:1\leq k\leq m\},
\qquad
\cH_{\mathsf E}=\mathbb C\ket0_{\mathsf E}\oplus\cH_{\rm L}.
$$
Collect the jump operators into
$$
B=\sum_{k=1}^m\ket{k}_{\mathsf E}\otimes L_k:
\cH_{\mathsf S}\longrightarrow\cH_{\mathsf E}\otimes\cH_{\mathsf S},
\qquad
\operatorname{ran}B\subseteq\cH_{\rm L}\otimes\cH_{\mathsf S}.
$$
Then
\begin{equation}
\cL(\rho)=-i[H,\rho]
+\Tr_{\mathsf E}(B\rho B^\dagger)
-\frac12\{B^\dagger B,\rho\}.
\label{eq:lindbladian}
\end{equation}
For a simulation time \(t\geq0\), our goal is to approximate \(e^{t\cL}\)
in diamond norm.

\begin{definition}[Block encoding~\cite{GilyenEtAl2019}]
\label{def:encoding}
For \(\alpha>0\), an \((s+a)\)-qubit unitary \(U\) is an
\emph{\((\alpha,a,0)\)-block encoding} of an \(s\)-qubit operator \(A\) if
\begin{equation}
(\bra0^{\otimes a}\otimes I)U(\ket0^{\otimes a}\otimes I)=A/\alpha.
\label{eq:block-encoding}
\end{equation}
We call this an \emph{\(\alpha\)-block encoding} when the ancilla count
is not needed.
\end{definition}

For a rectangular map \(A\), we use a projected unitary encoding specified
by input and output isometries. Let \(U\) be unitary, and let
\(\iota_{\rm in},\iota_{\rm out}\) be isometries embedding the domain
and codomain of \(A\) into the space on which \(U\) acts.
The triple \((U,\iota_{\rm in},\iota_{\rm out})\) is a
\emph{projected unitary encoding} of \(A/\alpha\) if
$$
\iota_{\rm out}^\dagger U\iota_{\rm in}=A/\alpha.
$$
Its normalization is \(\alpha\).

We assume access to a Hermitian \(\alpha_H\)-block encoding \(U_H\) of
\(H\) and a projected unitary encoding \(U_B\) of \(B/\alpha_B\), where
\(\alpha_H,\alpha_B>0\). Let \(\mathsf A_H\) and \(\mathsf A_B\) denote
their ancilla registers, which contain \(a\) qubits in total. The Hermitian block
encoding satisfies
\begin{equation}
U_H=U_H^\dagger,\qquad U_H^2=I,\qquad
(\bra0_{\mathsf A_H}\otimes I_{\mathsf S})U_H
(\ket0_{\mathsf A_H}\otimes I_{\mathsf S})=H/\alpha_H.
\label{eq:H-encoding}
\end{equation}
The Hermiticity requirement incurs only constant overhead.  Let $\mathsf F$
be a flag qubit, and let $\operatorname{Had}_{\mathsf F}$ denote the Hadamard
gate on $\mathsf F$.  If \(U\) is any \(\alpha_H\)-block encoding of the
Hermitian operator \(H\), then
$$
\widetilde U=(\operatorname{Had}_{\mathsf F}\otimes I)
\left[(\ket{0}\!\bra{1})_{\mathsf F}\otimes U
+(\ket{1}\!\bra{0})_{\mathsf F}\otimes U^\dagger\right]
(\operatorname{Had}_{\mathsf F}\otimes I)
$$
is a Hermitian \(\alpha_H\)-block encoding of \(H\) using one additional
flag qubit \(\mathsf F\).

For the jump-operator encoding, let
$$
P_{\rm L}:=\left(\sum_{k=1}^m\proj{k}_{\mathsf E}\right)\otimes I_{\mathsf S}
=(I_{\mathsf E}-\proj0_{\mathsf E})\otimes I_{\mathsf S}
$$
be the orthogonal projection onto
\(\cH_{\rm L}\otimes\cH_{\mathsf S}\). The projected unitary encoding
\(U_B\) satisfies
\begin{equation}
(\bra0_{\mathsf A_B}\otimes P_{\rm L})U_B
(\ket0_{\mathsf A_B}\ket0_{\mathsf E}\otimes I_{\mathsf S})=B/\alpha_B.
\label{eq:B-encoding}
\end{equation}

To use a single query operator, we package the oracle unitaries into
$$
\Omega=U_H\oplus U_B\oplus U_B^\dagger.
$$
We count each application of \(\Omega\), \(\Omega^\dagger\), or a controlled
version as one query. Equivalently, one may count calls to
\(U_H,U_B,U_B^\dagger\) separately, which changes the query complexity by at
most a constant factor. For the gate complexity, we count all one- and
two-qubit gates other than oracle calls. The total normalization and
normalized evolution time used below are
\begin{equation}
\alpha=\alpha_H+\alpha_B^2,\qquad \tau=\alpha t.
\label{eq:scale}
\end{equation}

\Needspace{10\baselineskip}
\begin{theorem}[Query-optimal simulation]
\label{thm:main}
For \(0<\varepsilon\leq1/2\), there is a quantum algorithm implementing a
channel \(\widetilde{\cE}\) satisfying
$$
\norm{\widetilde{\cE}-e^{t\cL}}_\diamond\leq\varepsilon.
$$
It makes
\begin{equation}
O\!\left(
\tau+
\frac{\log(1/\varepsilon)}
{\log\!\bigl(e+\log(1/\varepsilon)/\tau\bigr)}
\right)
\label{eq:upper-complexity}
\end{equation}
queries. With \(\ell_\varepsilon=\log(e(1+\tau)/\varepsilon)\), it uses
\begin{equation}
O\!\left[
\left(\tau+
\frac{\log(1/\varepsilon)}
{\log\!\bigl(e+\log(1/\varepsilon)/\tau\bigr)}\right)
\bigl(a+\ell_\varepsilon[\log(m+1)+\ell_\varepsilon]\bigr)
\right]
\label{eq:main-gate-complexity}
\end{equation}
one- and two-qubit gates, excluding oracle calls.
\end{theorem}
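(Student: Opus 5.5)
We plan to follow the transducer route sketched in the abstract. First, split the evolution into $r=\lceil\tau\rceil$ segments of normalized length $\delta=\tau/r\le1$ (when $\tau$ is small, use a single segment). Within a segment we approximate $e^{\delta\cL/\alpha}$ by a short-time propagator built from a rational approximation. The natural choice is a Kraus-type expansion in the Dyson series of $e^{s\cL}$ written with the effective non-Hermitian generator $K=-iH-\tfrac12 B^\dagger B$. Its Kraus operators are indexed by sequences of jump labels in $\cH_{\rm L}$ interleaved with no-jump evolutions $e^{sK}$. The key algebraic object is the isometry
$$V_s:\cH_{\mathsf S}\to \mathcal F(\cH_{\rm L})\otimes\cH_{\mathsf S}$$
(Fock-type register of label sequences, compressed by storing positions and values), with $\mathcal E_s(\rho)=\Tr_{\rm labels}(V_s\rho V_s^\dagger)$. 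We realize the no-jump part through a rational function of the operator $\begin{pmatrix}-iH & -B^\dagger/\sqrt2\\ B/\sqrt2 & 0\end{pmatrix}$, which is anti-Hermitian up to the dissipative coupling. A resolvent (Cayley-type) factor of the form $(I-\tfrac{s}{2n}X)^{-1}(I+\tfrac{s}{2n}X)$ is contractive and has a transducer implementation with one query to $\Omega$.

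Second, invoke the one-query transducer lemma from the paper body (together with the Belovs--Jeffery--Yolcu composition of transducers). The plan is to show that a product of $n$ such rational factors is implemented with $n$ queries provided a catalyst state is present. The bound we aim for is that the rational product approximates $e^{\delta\cL}$ with error $(c\delta)^{n}/n!$ type decay per segment. Choosing $n$ per segment as $O\bigl(1+\log(r/\varepsilon)/\log(e+\log(r/\varepsilon)/r)\bigr)$ and summing over segments then gives the stated query count. This is the standard balancing used in Chen et al.\ for Hamiltonian simulation: the total is $O(r+\log(1/\varepsilon)/\log(e+\log(1/\varepsilon)/\tau))$. The $\log(r/\varepsilon)$ versus $\log(1/\varepsilon)$ discrepancy is absorbed by distributing the error budget nonuniformly, or by noting that it only matters when $\log(1/\varepsilon)\lesssim\tau$, where the $\tau$ term dominates.

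Third, and this is the main obstacle, we must remove the catalyst. Running the transducer without the catalyst introduces a leakage term, and in the Hamiltonian case this is controlled by a norm bound on the transducer's internal work state. For channels the danger is that errors add coherently across Kraus branches. The plan is to expand the implemented operator as a sum over Kraus-label sequences. The error terms for distinct label sequences land in orthogonal label subspaces, since the recorded positions and values differ. Hence the diamond-norm error is bounded by the square root of a sum of squared branch errors, rather than by their sum. Using this orthogonality, we would show that the omitted-catalyst error per segment is bounded by the $\ell_2$ mass of the high-order tail. That mass is again $O((c\delta)^n/n!)$, so the query count does not worsen. If the orthogonality bound falls short in some regime, we would first try enlarging $n$ by a constant factor.

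Finally, for the gate count we implement each query's surrounding unitaries on the compressed label register. That register holds at most $O(\ell_\varepsilon)$ nonzero labels, each consisting of $\log(m+1)$ value bits and $O(\ell_\varepsilon)$ position bits. We use the rotation factorization of Chen et al.~\cite{ChenEtAl2026Gate} to synthesize the rational-factor rotations. This gives $O(a+\ell_\varepsilon[\log(m+1)+\ell_\varepsilon])$ gates per query, and multiplying by the query count yields \eqref{eq:main-gate-complexity}.
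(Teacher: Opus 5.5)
There is a genuine gap: the query accounting in your second step cannot produce the stated bound.

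\textbf{The discretization does not give factorial decay.} You make the precision come from the discretization. You assert that a product of $n$ Cayley factors $(I-\tfrac{s}{2n}X)^{-1}(I+\tfrac{s}{2n}X)$ approximates $e^{\delta\cL}$ with error of order $(c\delta)^n/n!$, and you charge $n$ queries for $n$ factors. But identical Cayley factors form a fixed low-order scheme: $(1-z/2)^{-1}(1+z/2)-e^z=z^3/12+O(z^4)$. The error therefore decays only polynomially in $1/n$, and you would need $\mathrm{poly}(1/\varepsilon)$ queries.

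Worse, with a fixed $X$ and step $s/(2n)$ in both blocks, the jump amplitude per factor is $O(s/n)$, so the dissipation disappears as $n\to\infty$. A valid step needs fresh label registers and $\sqrt\delta$ jump coupling, as in the paper's $J_\delta=\sqrt\delta\,BR_\delta$. It is then only first order: $\|\Phi_\delta^J-e^{t\cL}\|_\diamond\le10\tau^2/J$.

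\textbf{The segment arithmetic is also off.} Summing your per-segment $n$ over $r=\lceil\tau\rceil$ unit segments gives $O\bigl(r+r\log(r/\varepsilon)/\log(e+\log(r/\varepsilon)/r)\bigr)$. That is the multiplicative Li--Wang scaling, not the additive bound. Redistributing the error budget does not fix this: most segments must be $O(\varepsilon/r)$-accurate, and the Hamiltonian lower bound charges each such constant-time segment $\Omega\bigl(\log(r/\varepsilon)/\log(e+\log(r/\varepsilon))\bigr)$ queries.

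\textbf{The missing idea is the decoupling the paper exploits.} Sequential composition makes the whole $J$-step product a single \emph{one-query} transducer $S$: one controlled $\Omega$ acts on all private summands $\cK_j$ at once. So $J\ge20\tau^2/\varepsilon$ costs nothing in queries, and there is one global catalyst $\Gamma$ with $\Gamma^\dagger\Gamma\preceq\tau I$.

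\textbf{Catalyst removal is where all the precision cost lies, and your third step gives no mechanism for it.} Omitting the catalyst costs $\|\Gamma\|\le\sqrt\tau$, and plain reuse with $N$ calls gives only $\sqrt{\tau/N}$. The resulting error is not the tail of your rational or Dyson expansion. The paper uses three ingredients:
\begin{enumerate}
\item the reuse identity $W_J-P_N=S_{01}g_N(S_{11})\Gamma$;
\item an LCU over reuse lengths with coefficients from $\mathcal Q_q=\mathcal F_q g_{12q}$, whose coefficient $\ell_1$-norm is exactly $2$, so OAA adds only constant overhead;
\item the bound $\|\mathcal F_q(S_{11})\|\le(C_0\tau/q)^q$.
\end{enumerate}
That bound rests on the structural identity $p(S_{11}(0))=0$ for $p(\zeta)=\zeta^2(1+\zeta^2)$. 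Hence every term of $\mathcal F_q(S_{11})$ has at least $2q$ factors of $\Delta=S_{11}-S_{11}(0)$. Each such factor costs $\sqrt\theta$ or $\theta$, with $J\theta\le2\tau$, and the time indices along each term are nondecreasing.

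Only at this point does your orthogonality idea enter. Terms that create nonzero Kraus labels at different positions have orthogonal ranges, which reduces a $\binom{J+r}{r}$ count to its square root, so $J$ survives only through $\tau$. Orthogonality of label sequences alone gives no decay in the number of calls.

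\textbf{The gate-count bound on stored labels also needs a different justification.} The $O(\ell_\varepsilon)$ bound comes from the circuit, not from the Dyson expansion. Each call to $S$ or $S^\dagger$ changes the number of nonzero labels by at most one, over $60q$ calls. Segments of normalized length at most $\ell_\varepsilon$ then keep $q=O(\ell_\varepsilon)$.
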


The error guarantee and query bound in Theorem~\ref{thm:main} are proved in
Section~\ref{sec:oaa}, after the construction in
Sections~\ref{sec:rational}--\ref{sec:lcu}.  The gate bound
\eqref{eq:main-gate-complexity} is proved in
Section~\ref{sec:gate-complexity}.

For Lipschitz-continuous time-dependent Lindbladians on \([0,T]\),
Theorem~\ref{thm:time-dependent} achieves the query bound
\eqref{eq:upper-complexity}
under the coherent time-indexed oracle model of
Section~\ref{sec:time-dependent}. Theorem~\ref{thm:gate-complexity}
gives the corresponding gate bound.
\begin{remark}[Individual jump-operator access]
\label{rem:individual-jump-access}
Some previous Lindbladian simulation algorithms, including Li and Wang~\cite{LiWang2023},
assume a separate \(\alpha_k\)-block encoding for each jump operator \(L_k\).
Our access model also covers this setting when these block encodings can be
applied coherently conditioned on \(k\).  Preparing
\[
\frac{1}{\alpha_B}\sum_{k=1}^m \alpha_k\ket{k}_{\mathsf E},
\qquad
\alpha_B^2=\sum_{k=1}^m\alpha_k^2,
\]
and applying the corresponding block encoding of \(L_k\) conditioned on \(k\)
gives a projected unitary encoding of
\[
\frac{1}{\alpha_B}\sum_{k=1}^m\ket{k}_{\mathsf E}\otimes L_k
=\frac{B}{\alpha_B}.
\]
Chen et al.~\cite{ChenEtAl2025Thermal} instead assume a block encoding
of the stacked operator \(\sum_k\ket{k}\otimes L_k\), which is essentially the
access model used here.
\end{remark}

For $\tau\geq1$, the Hamiltonian lower bound in \eqref{eq:intro-ham-bound}
\cite{LowChuang2019,GilyenEtAl2019} applies here by setting \(B=0\).
In this Hamiltonian-only case, the jump-operator oracle can be omitted,
and the corresponding normalized evolution time is $\tau=\alpha_H t$.
This lower bound matches the upper bound \eqref{eq:upper-complexity}, establishing
query optimality in this regime.

\subsection{Algorithm overview}
\label{sec:overview}

We approximate $e^{t\cL}$ by a product of $J$ completely positive and
trace-preserving (CPTP) steps. Increasing $J$
reduces the discretization error; the main task is to implement this product
with a query count that does not grow with $J$.
We follow the simulation framework of~\cite{ChenEtAl2026}, based on the
transducers of~\cite{BelovsJefferyYolcu2024}.  The Lindbladian-specific
ingredients are the local transducer, the error analysis for omitting
the catalyst, and the implementation and gate-complexity analysis of
the Kraus-label updates.

\paragraph{A CPTP approximation.}
We first approximate the evolution over one time step $\delta=t/J$ by a CPTP
map. Define
\begin{equation}
 K=iH+\frac12B^\dagger B,
 \qquad
 R_\delta=(I+\delta K/2)^{-1},
 \label{eq:overview-resolvent}
\end{equation}
\begin{equation}
 N_\delta=(I-\delta K/2)R_\delta,
 \qquad
 J_\delta=\sqrt\delta\,BR_\delta.
 \label{eq:overview-kraus}
\end{equation}
Since $K+K^\dagger=B^\dagger B$,
$N_\delta^\dagger N_\delta+J_\delta^\dagger J_\delta=I$, and hence
\begin{equation}
 W_\delta\psi
 =\ket0_{\mathsf E}\otimes N_\delta\psi+J_\delta\psi
 \label{eq:overview-step}
\end{equation}
is a Stinespring isometry for the quantum channel
\[
\begin{aligned}
\Phi_\delta(\rho)
:=\Tr_{\mathsf E}(W_\delta\rho W_\delta^\dagger)=N_\delta\rho N_\delta^\dagger
  +\sum_{k=1}^m\delta L_kR_\delta\rho R_\delta^\dagger L_k^\dagger.
\end{aligned}
\]
Its Kraus operators are \(N_\delta\) and
\(\sqrt\delta L_kR_\delta\), \(1\leq k\leq m\). Accordingly, the register
\(\mathsf E\) stores the corresponding \emph{Kraus label}: \(0\) for
\(N_\delta\) and \(k\) for \(\sqrt\delta L_kR_\delta\).
Applying $W_\delta$ successively with fresh registers
$\mathsf E_0,\ldots,\mathsf E_{J-1}$ gives a Stinespring isometry $W_J$ for
$\Phi_\delta^J$, with output space
$\cH_J=\cH_{\mathsf E_0}\otimes\cdots\otimes
\cH_{\mathsf E_{J-1}}\otimes\cH_{\mathsf S}$.
Section~\ref{sec:rational} proves that, when $\alpha\delta\leq1/2$,
\begin{equation}
 \|\Phi_\delta^J-e^{t\cL}\|_\diamond\leq\frac{10\tau^2}{J}.
 \label{eq:overview-discretization}
\end{equation}

\paragraph{Implementing the product with a catalyst.}
To implement \(W_J\), we first construct a one-query
transducer~\cite{BelovsJefferyYolcu2024} for the single-step isometry
\(W_\delta\).  In the transducer formalism,
\(\cH_{\mathsf E}\otimes\cH_{\mathsf S}\) is the public space, and we
introduce a private space \(\cH_{\rm priv}\).  We construct a linear map
\(\gamma:\cH_{\mathsf S}\to\cH_{\rm priv}\) and an oracle-independent
unitary \(G\) on the direct sum of these two spaces.  The local transducer
\(S_\delta\) first applies
\(\Omega=U_H\oplus U_B\oplus U_B^\dagger\) to the private space and then
applies \(G\).  It satisfies
\begin{equation}
 S_\delta\bigl((\ket0_{\mathsf E}\otimes\psi)\oplus\gamma\psi\bigr)
 =W_\delta\psi\oplus\gamma\psi,
 \qquad
 \gamma^\dagger\gamma\preceq\alpha\delta I.
 \label{eq:overview-local}
\end{equation}
Here \(\ket0_{\mathsf E}\otimes\psi\) and \(W_\delta\psi\) are the public
input and output, respectively, while \(\gamma\psi\) is the \emph{catalyst} in the
private space and is returned unchanged.

The construction extends the Cayley transducer for Hamiltonian simulation in
\cite[Sec.~4.2]{ChenEtAl2026} to Lindbladian simulation.
When \(B=0\), we have \(J_\delta=0\).
In this case,
\[
 \left.W_\delta\psi\right|_{B=0}
 =\ket0_{\mathsf E}\otimes
 (I-i\delta H/2)(I+i\delta H/2)^{-1}\psi,
\]
where the operator on the right is the Cayley transform of \(H\).  For general
\(B\), the Kraus operator
\[
 N_\delta=(I-\delta K/2)(I+\delta K/2)^{-1}
\]
is obtained from this Cayley transform by replacing \(iH\) with
\(K=iH+B^\dagger B/2\).  As in the Hamiltonian Cayley transducer, the
\(U_H\) summand of \(\Omega\) supplies the \(iH\) term in \(K\).  The
\(U_B^\dagger\) summand supplies the \(B^\dagger B/2\) term, while the
\(U_B\) summand produces the jump output
\(J_\delta\psi=\sqrt\delta\,BR_\delta\psi\).
Figure~\ref{fig:local-catalyst-action} illustrates these roles.
Section~\ref{sec:local-step} gives the full construction and proves that the
catalyst is restored.

In Section~\ref{sec:full-product}, we use sequential composition of
transducers~\cite[Prop.~9.9]{BelovsJefferyYolcu2024}, as in
\cite[Sec.~4.2]{ChenEtAl2026}, to combine \(J\) copies of the local transducer
\(S_\delta\), each implementing one application of \(W_\delta\), into a
one-query transducer \(S\) implementing the product isometry \(W_J\).
The catalyst of $S$ is the direct sum of the
catalysts for all local transducers. Let $\mathsf T$ denote the register
labeling these private-space components, with basis label
$\ket{j}_{\mathsf T}$ corresponding to step $j$, $0\leq j<J$.
A single query applies $\Omega$ to
the private factor of every direct-sum component, after which the fixed
local unitaries act in time order.  Writing $\cK$ for the combined private
space, $S$ acts on the public space $\cH_J$ and the private space $\cK$,
with a catalyst map $\Gamma$ satisfying
\begin{equation}
 S(\psi\oplus\Gamma\psi)=W_J\psi\oplus\Gamma\psi,
 \qquad
 \Gamma^\dagger\Gamma\preceq\tau I.
 \label{eq:overview-global}
\end{equation}

\paragraph{Reuse without preparing the catalyst.}
We now approximate $W_J$ without preparing $\Gamma\psi$.
The reuse construction~\cite[Theorem~3.2]{BelovsJefferyYolcu2024}
distributes the amplitude of the public input uniformly over \(N\) mutually
orthogonal copies of \(\cH_J\) and applies \(S\) successively to them.  For
comparison, suppose that this circuit is supplied with the scaled catalyst
\(\Gamma\psi/\sqrt N\).  By linearity of the global transducer identity
\eqref{eq:overview-global}, each call satisfies
\[
 S\left(\frac{\psi}{\sqrt N}\oplus\frac{\Gamma\psi}{\sqrt N}\right)
 =\frac{W_J\psi}{\sqrt N}\oplus\frac{\Gamma\psi}{\sqrt N}.
\]
The private output equals the private input, so it can be passed unchanged to
the next call.  After all \(N\) calls, coherently recombining the public
outputs produces \(W_J\psi\) exactly.  The algorithm instead runs the same
circuit with the zero vector as its private input.  Let
\(P_N:\cH_{\mathsf S}\to\cH_J\) denote
the map obtained by recombining its public outputs.
Section~\ref{sec:reuse} shows that this circuit provides a projected unitary
encoding of \(P_N\).  For \(\norm\psi=1\), supplying the scaled catalyst
instead of the zero private vector changes the circuit input by
\(\Gamma\psi/\sqrt N\).
Unitarity preserves this distance, and projecting onto the public output
cannot increase it.  Hence
\[
 \|W_J-P_N\|\leq\frac{\|\Gamma\|}{\sqrt N}
 \leq\sqrt{\frac{\tau}{N}}.
\]
In the nontrivial regime $2\tau>\varepsilon$, this estimate gives
\(O(\tau/\varepsilon^2)\) calls to reach error \(\varepsilon\), with
quadratic dependence on \(1/\varepsilon\).

To improve the precision dependence, we use a more precise expression for
\(W_J-P_N\).  Define the blocks
\(S_{01}:\cK\to\cH_J\) and \(S_{11}:\cK\to\cK\) by
\[
 S(0\oplus z)=S_{01}z\oplus S_{11}z.
\]
Because \(S\) is unitary, \(\norm{S_{01}},\norm{S_{11}}\leq1\).  For the
polynomial \(g_N(\zeta):=N^{-1}\sum_{\ell=0}^{N-1}\zeta^\ell\), the reuse
identity proved in Section~\ref{sec:reuse}, following
\cite[Lemma~4]{ChenEtAl2026}, gives
\begin{equation}
 W_J-P_N=S_{01}g_N(S_{11})\Gamma.
 \label{eq:overview-reuse}
\end{equation}
Consequently, \(\|W_J-P_N\|\leq\sqrt\tau\,\|g_N(S_{11})\|\).
The norm $\|g_N(S_{11})\|$ need not decrease rapidly with $N$, so
this estimate alone does not give rapid error decay for $P_N$.
Following \cite[Sec.~7]{ChenEtAl2026}, we use linear combination of unitaries
(LCU)~\cite{GilyenEtAl2019} to construct a projected unitary encoding of a
linear combination of the \(P_N\).  Writing
\(\Lambda=\sum_N|\lambda_N|\), the resulting projected block is
\(\Lambda^{-1}\sum_N\lambda_NP_N\).
For coefficients satisfying $\sum_N\lambda_N=1$,
\begin{equation}
 W_J-\sum_N\lambda_NP_N
 =S_{01}\left[\sum_N\lambda_Ng_N(S_{11})\right]\Gamma.
 \label{eq:overview-weighted-error}
\end{equation}
The coefficients should make
$\|\sum_N\lambda_Ng_N(S_{11})\|$ small while keeping the LCU normalization
$\Lambda$ bounded, since $\Lambda$ controls the cost of the subsequent
amplitude amplification.

\paragraph{Reducing the approximation error.}
To choose coefficients $\{\lambda_N\}$ for which $\sum_N\lambda_Ng_N(S_{11})$ has small norm,
we first construct a polynomial whose norm at $S_{11}$ decays rapidly.
Let $S_{11}(0)$ be obtained from $S_{11}$ by setting $\delta=0$ in the
post-query unitary $G$ of each local transducer $S_\delta$, while leaving
$\Omega$ unchanged.
Appendix~\ref{sec:private-block-proof} shows that its square is the negative of a
projection, so the polynomial $p(\zeta):=\zeta^2(1+\zeta^2)$ satisfies
\begin{equation}
 p(S_{11}(0))=S_{11}(0)^2\bigl[I+S_{11}(0)^2\bigr]=0.
 \label{eq:overview-zero-polynomial}
\end{equation}
For an integer $q\geq1$, define
$\mathcal F_q(\zeta):=(p(\zeta)/2)^{2q}$.
Write $\Delta=S_{11}-S_{11}(0)$.  Since
\eqref{eq:overview-zero-polynomial} gives $p(S_{11}(0))=0$, substituting
$S_{11}=S_{11}(0)+\Delta$ expresses $p(S_{11})$ as a sum of products whose
factors are $S_{11}(0)$ and $\Delta$, with at least one $\Delta$ in every
product.  Expanding $\mathcal F_q(S_{11})$ therefore gives a sum of products
of $S_{11}(0)$ and $\Delta$, each containing at least $2q$ factors $\Delta$.

To bound the sum of these products, Appendix~\ref{sec:private-block-proof}
tracks the time index stored in register $\mathsf T$ after each factor.  The
operator $S_{11}(0)$ preserves this index, while $\Delta$ never decreases it.
Thus the indices in every nonzero term form a nondecreasing sequence.  Bounding
the norm of such a term factor by factor gives $O(\sqrt{\alpha\delta})$ for each
occurrence of $\Delta$, improved to $O(\alpha\delta)$ when $\Delta$ leaves
the index unchanged.  Since every product contains at least $2q$ factors
$\Delta$, this yields at least a factor $(\alpha\delta)^q$ in its norm
bound.  Monotonicity limits the number of possible
sequences.  Moreover, terms that create nonzero Kraus labels at different
positions have orthogonal output spaces, so their contributions add in
squared norm.  Combining the decay from the $\Delta$ factors with these
monotonicity and orthogonality bounds leaves $J$ only through the combination
$J\alpha\delta=\tau$ and gives
\begin{equation}
 \|\mathcal F_q(S_{11})\|
 \leq\left(\frac{C_0\tau}{q}\right)^q,
 \label{eq:overview-factorial}
\end{equation}
for a universal constant $C_0$, provided $q\geq C_0\tau$ and
$J\geq\max\{2\tau,8q\}$.

\Needspace{10\baselineskip}
\paragraph{Combining reuse lengths.}
We now turn this polynomial bound into a choice of reuse coefficients with
bounded LCU normalization.
Multiplying $\mathcal F_q$ by
$g_{12q}(\zeta):=(12q)^{-1}\sum_{\ell=0}^{12q-1}\zeta^\ell$
averages the monomial coefficients
of $\mathcal F_q$ and thereby reduces the LCU
normalization~\cite[Sec.~7]{ChenEtAl2026}.
Section~\ref{sec:lcu} proves that the resulting
polynomial has the expansion
\begin{equation}
 \mathcal Q_q(\zeta)=\mathcal F_q(\zeta)g_{12q}(\zeta)
 =\sum_{N=1}^{20q}\lambda_Ng_N(\zeta),
 \qquad \sum_N\lambda_N=1,
 \qquad
 \sum_N|\lambda_N|=2.
 \label{eq:overview-lcu}
\end{equation}
The identity $\sum_N|\lambda_N|=2$ gives LCU normalization $2$,
yielding a projected unitary encoding of
$\frac12\sum_N\lambda_NP_N$.
Since $S_{11}$ is a contraction, $\|g_{12q}(S_{11})\|\leq1$.
Combining the weighted-error identity \eqref{eq:overview-weighted-error}, the polynomial bound
\eqref{eq:overview-factorial}, and using $\|S_{01}\|\leq1$ and
$\|\Gamma\|\leq\sqrt\tau$ from \eqref{eq:overview-global}, gives
\begin{align}
 \left\|W_J-\sum_N\lambda_NP_N\right\|
 &=\|S_{01}\mathcal F_q(S_{11})g_{12q}(S_{11})\Gamma\| \notag\\
 &\leq\|S_{01}\|\,\|\mathcal F_q(S_{11})\|\,
          \|g_{12q}(S_{11})\|\,\|\Gamma\| \notag\\
 &\leq\sqrt\tau\,\|\mathcal F_q(S_{11})\|
 \leq\sqrt\tau\left(\frac{C_0\tau}{q}\right)^q.
 \label{eq:overview-final-error}
\end{align}
Since $\deg\mathcal Q_q=20q-1$, only reuse lengths $N\leq20q$ occur in
\eqref{eq:overview-lcu}.  A length-$N$ reuse makes $N$ calls to $S$.
The LCU circuit selects $N$ coherently, so $20q$ controlled calls to $S$
suffice.  Since each call to $S$ uses one oracle query, the LCU circuit uses
$20q$ queries.
Oblivious amplitude amplification for isometries~\cite{CleveWang2017}
uses two calls to the LCU circuit and one to its inverse.  Its projected action is the cubic polynomial
$\frac32\widetilde W_J-\frac12\widetilde W_J\widetilde W_J^\dagger
\widetilde W_J$, which is close to $W_J$ whenever
$\widetilde W_J=\sum_N\lambda_NP_N$ is close to $W_J$.  The OAA circuit
therefore uses at most $60q$ queries.  Tracing out its auxiliary registers
gives a channel
$\widetilde{\cE}$ satisfying
\begin{equation}
 \bigl\|\widetilde{\cE}-\Phi_\delta^J\bigr\|_\diamond
 =O\!\left(\sqrt\tau\left(\frac{C_0\tau}{q}\right)^q\right).
 \label{eq:overview-channel-error}
\end{equation}
For $2\tau>\varepsilon$, Lemma~\ref{lem:factorial-inversion} shows that $q$
can be chosen so that the error in \eqref{eq:overview-channel-error} is at
most $\varepsilon/2$ and
\[
 q=O\!\left(
 \tau+
 \frac{\log(1/\varepsilon)}
 {\log\!\bigl(e+\log(1/\varepsilon)/\tau\bigr)}
 \right).
\]
Then take
\[
 J\geq\max\{2\tau,8q,20\tau^2/\varepsilon\}.
\]
The first two lower bounds on $J$ ensure the hypotheses of
\eqref{eq:overview-factorial}, while the last and
\eqref{eq:overview-discretization} give
$\|\Phi_\delta^J-e^{t\cL}\|_\diamond\leq\varepsilon/2$.
The triangle inequality therefore gives
$\|\widetilde{\cE}-e^{t\cL}\|_\diamond\leq\varepsilon$, and the bound of
$60q$ queries gives \eqref{eq:upper-complexity}.  Increasing $J$ affects
the gate complexity of implementing $S$ but not its one-query cost.

\paragraph{Reducing the gate count.}
We finally give a more gate- and qubit-efficient implementation of
$S$:
a direct implementation applies $J$ local updates and uses $J$ registers
for the Kraus labels.
Section~\ref{sec:gate-complexity} factors each update into a
rotation, a Kraus-label exchange, and a permutation with phases within
the private space.  Reordering the factors places the private-space
permutations immediately after the query, where they add only $O(1)$
gates, and separates the label exchanges from the rotations.
The exchanges change at most one nonzero Kraus label per call to $S$
or $S^\dagger$.  Over $O(q)$ calls, it therefore suffices to use a compressed
Kraus-label representation that stores the positions and values of $O(q)$
nonzero labels.  We arrange these pairs in
decreasing order of position, so an exchange inserts or removes the first
pair using controlled shifts of the registers.

For fixed Kraus labels, the rotations act only at step indices after
the last nonzero label.  We partition this range into power-of-two
intervals and use the rotation factorization
of~\cite[Lemma~10]{ChenEtAl2026Gate} on each interval.
This gives a gate count per call with only polylogarithmic dependence
on $J$.
Dividing the evolution time into shorter intervals reduces the polynomial
degree needed on each interval and yields the gate bound in
Theorem~\ref{thm:main} while preserving the total query bound.

\paragraph{Applications.}
Our algorithm also improves the complexity of solving semi-dissipative
linear ODEs.  Shang et al.~\cite{SGAZ25} reduced the preparation of the
normalized solution of a homogeneous semi-dissipative linear ODE to
Lindbladian simulation via a non-diagonal density-matrix encoding, followed
by \(O(\eta_T^{-1})\) rounds of amplitude amplification, where
\(\eta_T=\|\mu(T)\|\) is the norm of the unnormalized ODE solution at time
$T$.  In their square-root access model, replacing the coherent
Lindbladian-simulation subroutine by ours improves the query complexity from
\(O\!\left(\frac{\tau}{\eta_T}
\frac{\log(\tau/(\varepsilon\eta_T))}
{\log(e+\log(\tau/(\varepsilon\eta_T)))}\right)\) to
\(O\!\left(\frac{1}{\eta_T}\left[
\tau+\frac{\log(1/(\varepsilon\eta_T))}
{\log(e+\log(1/(\varepsilon\eta_T))/\tau)}\right]\right)\).
Here \(\tau=\alpha_VT\).  Thus the precision-dependent term no longer
multiplies \(\tau\); in the time-dependent case, this also removes the
squared logarithmic factor in the previous bound.

Chen et al.~\cite{ChenEtAl2025Thermal} prepare Gibbs states by simulating an exact detailed-balance Lindbladian for its mixing time \(t_{\mathrm{mix}}\), with query complexity \(O\!\left(\tau\log(\tau/\varepsilon)/\log\log(\tau/\varepsilon)\right)\), where \(\tau\) is the normalized evolution time. Replacing their simulation subroutine with our optimal algorithm improves this to \(O\!\left(\tau+\frac{\log(1/\varepsilon)}{\log(e+\log(1/\varepsilon)/\tau)}\right)\), without changing the mixing time or the \(\beta\)-dependent cost of constructing the Lindbladian oracles.

Ding et al.~\cite{DingChenLin2024} prepare ground states by evolving an engineered single-jump Lindbladian for its mixing time, so the task reduces to simulating this evolution for a normalized time \(\tau_{\mathrm{mix}}\). Our optimal simulator reduces the corresponding query complexity from \(O\!\left(\tau_{\mathrm{mix}}\log(\tau_{\mathrm{mix}}/\varepsilon)/\log\log(\tau_{\mathrm{mix}}/\varepsilon)\right)\) to \(O\!\left(\tau_{\mathrm{mix}}+\frac{\log(1/\varepsilon)}{\log(e+\log(1/\varepsilon)/\tau_{\mathrm{mix}})}\right)\), without changing the problem-dependent mixing time or the cost of constructing the jump-operator oracle.

\subsection{Related work}

A concurrent manuscript by Kitchen~\cite{Kitchen2026} studies time-dependent
Lindbladian simulation using a causal query-compression framework and reports
the same asymptotic query complexity. After completing this manuscript, we
became aware of concurrent work by Wang and Ye~\cite{WangYe2026}, who
independently obtain the same asymptotic dependence of the query complexity on
normalized time and precision using a closely related transducer construction.
For the polynomial degree \(q\), whose time--precision scaling is given
by~\eqref{eq:q-choice}, their algorithm uses
\(\widetilde O(mq^3/\varepsilon)\) additional one- and two-qubit gates.
Under the same oracle-access assumptions, our compressed-label implementation
and rotation factorization use \(\widetilde O(mq)\) additional one- and
two-qubit gates.

Other approaches to Lindbladian simulation use additional structural
assumptions, target different computational tasks, or incur larger
asymptotic query costs.

\paragraph{Structure-dependent simulation.}
Shang, An, and Shao~\cite{ShangAnShao2026} obtained an
additive dependence on time and precision in the query complexity for purely
dissipative Lindbladians with unitary jump operators, and showed exponential
fast-forwarding in circuit depth for block-diagonal Pauli jump operators.
Gao, Ji, and Liu~\cite{GaoJiLiu2026} used Hamiltonian twirling to identify
fast-forwardable families of Lindbladian dynamics and proved matching lower
bounds on the time dependence for the families considered.
Wang et al.~\cite{WangEtAl2026Commutator} derived commutator-based Trotter
error bounds for Lindbladians, improving product-formula bounds for locally
interacting systems. These results use additional structure not assumed in
our general setting.

\paragraph{Hamiltonian dilations and repeated interactions.}
Ding, Li, and Lin~\cite{DingLiLin2024} constructed arbitrarily high-order
approximations to Lindblad evolution using unitary dynamics on an enlarged
Hilbert space, reducing each step to Hamiltonian simulation followed by
tracing out the ancillas.
Pocrnic, Segal, and Wiebe~\cite{PocrnicSegalWiebe2025} approximated
Lindbladian evolution by repeated interactions between the system and
ancillas and analyzed their implementation using Hamiltonian simulation.

\paragraph{Randomized and trajectory-based methods.}
Borras and Marvian~\cite{BorrasMarvian2025} developed a second-order
product-formula algorithm for a class of Lindbladians that simulates the
dissipative part by sampling simple channels, reducing ancilla requirements
and the gate-complexity dependence on the number of jump operators.
Chen et al.~\cite{ChenEtAl2025Randomized} developed a qDRIFT-type
method that randomly samples, at each step, a Lindbladian containing a single
jump operator.
Peng et al.~\cite{PengEtAl2025} introduced a quantum-trajectory-inspired
short-time approximation that reduces the gate-complexity dependence on the
number of jump operators.
David, Sinayskiy, and Petruccione~\cite{DavidEtAl2026} studied randomized
first- and second-order product formulas and a qDRIFT channel implemented by
classical sampling.

\paragraph{Observable estimation.}
Observable-estimation algorithms target expectation values
\(\Tr[O\,e^{t\cL}(\rho)]\) of the state after Lindbladian evolution, for a
specified input state \(\rho\) and observable \(O\).
Kato et al.~\cite{KatoEtAl2026} developed randomized
compilation methods for estimating observables using dissipative processes
with a single jump operator and a small number of ancillas.
Yu et al.~\cite{YuEtAl2025} used linear combinations of superoperators to
compensate Trotter errors and estimate observables with logarithmic dependence
on inverse precision in circuit depth using two ancillas.
Mohammadipour and Li~\cite{MohammadipourLi2025} applied step-size extrapolation
to observable estimates obtained from first-order Lindblad simulation methods,
reducing the maximum circuit depth while retaining an
\(O(1/\varepsilon^2)\) sampling complexity.
These methods obtain such expectation values from repeated circuit executions,
whereas our algorithm approximates the full evolution channel in diamond norm.

\paragraph{Other input models.}
Patel and Wilde~\cite{PatelWilde2023I,PatelWilde2023II} study wave matrix
Lindbladization, in which jump operators are supplied through quantum program
states and the simulation cost is measured by the number of copies of these
states, rather than coherent uses of block-encoding oracles.

\section{Rational CPTP discretization}

\label{sec:rational}

We approximate the evolution over one time step $\delta=t/J$ by a CPTP map,
compose $J$ such steps, and bound the resulting diamond-norm error relative to
$e^{t\cL}$.
Define
\begin{equation}
 K=iH+\frac12B^\dagger B,
 \qquad
 R_\delta=(I+\delta K/2)^{-1}.
 \label{eq:resolvent}
\end{equation}
Define the operators
\begin{equation}
 N_\delta=(I-\delta K/2)R_\delta,
 \qquad
 J_\delta=\sqrt\delta\,BR_\delta.
 \label{eq:rational-blocks}
\end{equation}
Thus \(J_\delta:\cH_{\mathsf S}\to
\cH_{\mathsf E}\otimes\cH_{\mathsf S}\) and
\(\operatorname{ran}J_\delta\subseteq\cH_{\rm L}\otimes\cH_{\mathsf S}\).

\begin{lemma}[Rational CPTP step]
\label{lem:rational-step}
The map
\[
 W_\delta:\cH_{\mathsf S}\longrightarrow
 \cH_{\mathsf E}\otimes\cH_{\mathsf S},
 \qquad
 W_\delta\psi=\ket0_{\mathsf E}\otimes N_\delta\psi+J_\delta\psi
\]
is an isometry.  Consequently
\begin{equation}
 \Phi_\delta(\rho):=\Tr_{\mathsf E}
 (W_\delta\rho W_\delta^\dagger)
 =N_\delta\rho N_\delta^\dagger
 +\Tr_{\mathsf E}(J_\delta\rho J_\delta^\dagger)
 \label{eq:step-channel}
\end{equation}
is CPTP.
\end{lemma}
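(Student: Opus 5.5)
We first check that \(R_\delta\) is well defined, then verify the isometry identity by a direct computation. Since \(K+K^\dagger=B^\dagger B\succeq0\), the real part of \(K\) is positive semidefinite. Hence for any \(\psi\),
\[
\operatorname{Re}\langle\psi,(I+\delta K/2)\psi\rangle
=\|\psi\|^2+\tfrac{\delta}{4}\|B\psi\|^2\geq\|\psi\|^2 .
\]
So \(I+\delta K/2\) is injective, and in finite dimension invertible, for every \(\delta\geq0\). Thus \(R_\delta\) exists and no smallness condition on \(\delta\) is needed.

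Next we compute \(W_\delta^\dagger W_\delta\). The range of \(J_\delta\) lies in \(\cH_{\rm L}\otimes\cH_{\mathsf S}\), which is orthogonal to \(\ket0_{\mathsf E}\otimes\cH_{\mathsf S}\), so the cross terms vanish and
\[
W_\delta^\dagger W_\delta=N_\delta^\dagger N_\delta+J_\delta^\dagger J_\delta .
\]
Substituting the definitions gives
\[
W_\delta^\dagger W_\delta=R_\delta^\dagger\Bigl[(I-\delta K/2)^\dagger(I-\delta K/2)+\delta B^\dagger B\Bigr]R_\delta .
\]
Expanding the bracket, we get
\[
I-\tfrac{\delta}{2}(K+K^\dagger)+\tfrac{\delta^2}{4}K^\dagger K+\delta B^\dagger B
=I+\tfrac{\delta}{2}(K+K^\dagger)+\tfrac{\delta^2}{4}K^\dagger K ,
\]
using \(B^\dagger B=K+K^\dagger\). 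The right-hand side equals \((I+\delta K/2)^\dagger(I+\delta K/2)=(R_\delta^{-1})^\dagger R_\delta^{-1}\). Conjugating by \(R_\delta\) then gives \(W_\delta^\dagger W_\delta=I\), so \(W_\delta\) is an isometry.

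For the consequence, \(\rho\mapsto W_\delta\rho W_\delta^\dagger\) is completely positive and trace preserving because \(W_\delta\) is an isometry, and the partial trace over \(\mathsf E\) is CPTP. Their composition \(\Phi_\delta\) is therefore CPTP. For the explicit form \eqref{eq:step-channel}, expand \(W_\delta\rho W_\delta^\dagger\) into its four blocks. The cross terms have the form \(\ket0\!\bra{k}\) or \(\ket{k}\!\bra0\) on \(\mathsf E\) with \(k\geq1\), so they are traceless over \(\mathsf E\) and drop out. The remaining terms are \(\Tr_{\mathsf E}(\proj0\otimes N_\delta\rho N_\delta^\dagger)=N_\delta\rho N_\delta^\dagger\) and \(\Tr_{\mathsf E}(J_\delta\rho J_\delta^\dagger)\).

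The only step needing care is invertibility of \(I+\delta K/2\), which the accretivity estimate above settles. The algebraic identity is routine.
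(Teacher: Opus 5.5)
Your proof is correct and follows essentially the same route as the paper: accretivity of $I+\delta K/2$ for invertibility, orthogonality of the two ranges, and the identity $(I-\delta K/2)^\dagger(I-\delta K/2)+\delta B^\dagger B=(I+\delta K/2)^\dagger(I+\delta K/2)$ conjugated by $R_\delta$. The only difference is that the paper adds a Cauchy--Schwarz step to your accretivity estimate to record $\norm{R_\delta}\leq1$. That bound is not needed for this lemma, but the paper reuses it later in the discretization and catalyst bounds.
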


\begin{proof}
First, \(K+K^\dagger=B^\dagger B\succeq0\) implies, for
\(\varphi\in\cH_{\mathsf S}\),
\[
 \norm\varphi\,\norm{(I+\delta K/2)\varphi}
 \geq\operatorname{Re}\langle\varphi,(I+\delta K/2)\varphi\rangle
 \geq\norm\varphi^2.
\]
For nonzero \(\varphi\), dividing by \(\norm\varphi\) gives
\(\norm{(I+\delta K/2)\varphi}\geq\norm\varphi\).  Hence the smallest
singular value of \(I+\delta K/2\) is at least one, so this operator is
invertible and its inverse satisfies \(\norm{R_\delta}\leq1\).
To verify that \(W_\delta\) is an isometry, compute
\begin{align*}
 N_\delta^\dagger N_\delta+J_\delta^\dagger J_\delta
 &=R_\delta^\dagger
 \left[(I-\delta K^\dagger/2)(I-\delta K/2)
       +\delta B^\dagger B\right]R_\delta\\
 &=R_\delta^\dagger
 (I+\delta K^\dagger/2)(I+\delta K/2)R_\delta=I.
\end{align*}
Since \((\bra0_{\mathsf E}\otimes I_{\mathsf S})J_\delta=0\), the two
terms of \(W_\delta\) have orthogonal ranges.  Hence
\(W_\delta^\dagger W_\delta=N_\delta^\dagger N_\delta+
J_\delta^\dagger J_\delta=I\).
The same orthogonality makes the cross terms vanish under
\(\Tr_{\mathsf E}\), which gives the Kraus representation of
\(\Phi_\delta\) in \eqref{eq:step-channel}.  Tracing the output of an
isometry gives a CPTP map.
\end{proof}

For each step \(j=0,\ldots,J-1\), use a fresh
Kraus-label register \(\mathsf E_j\) to store the Kraus label coherently:
\[
 W_\delta\varphi
 =\ket0_{\mathsf E_j}\otimes N_\delta\varphi
 +\sqrt\delta\sum_{k=1}^m\ket k_{\mathsf E_j}\otimes L_kR_\delta\varphi.
\]
Label \(0\) corresponds to the Kraus operator \(N_\delta\), and
label \(k\geq1\) to \(\sqrt\delta L_kR_\delta\).  For
\(0\leq j\leq J\), set
\[
 \mathcal R_{<j}:=\bigotimes_{r=0}^{j-1}\cH_{\mathsf E_r},
 \qquad \mathcal R_{<0}:=\mathbb C,
 \qquad \cH_J:=\mathcal R_{<J}\otimes\cH_{\mathsf S},
\]
and
\[
 \cH_j:=\mathcal R_{<j}\otimes
 \biggl(\bigotimes_{r=j}^{J-1}\mathbb C\ket0_{\mathsf E_r}\biggr)
 \otimes\cH_{\mathsf S}\subseteq\cH_J.
\]
Thus
\[
 \cH_0\subseteq\cH_1\subseteq\cdots\subseteq\cH_J.
\]
In formulas involving \(\cH_j\), we omit the \(\ket0_{\mathsf E_r}\) factors and write
\(\cH_j\cong\mathcal R_{<j}\otimes\cH_{\mathsf S}\).  Since
\[
 \mathcal R_{<j+1}=\mathcal R_{<j}\otimes\cH_{\mathsf E_j},
\]
the \(j\)-th application of \(W_\delta\) acts as
\[
 I_{\mathcal R_{<j}}\otimes W_\delta:
 \underbrace{\mathcal R_{<j}\otimes\cH_{\mathsf S}}_{\cH_j}
 \longrightarrow
 \underbrace{\mathcal R_{<j}\otimes\cH_{\mathsf E_j}\otimes
 \cH_{\mathsf S}}_{\cH_{j+1}}.
\]
It leaves \(\mathsf E_0,\ldots,\mathsf E_{j-1}\) unchanged and writes
the new label in \(\mathsf E_j\).  For \(\xi\in\mathcal R_{<j}\) and
\(\varphi\in\cH_{\mathsf S}\),
\[
 (I_{\mathcal R_{<j}}\otimes W_\delta)(\xi\otimes\varphi)
 =\xi\otimes\bigl(\ket0_{\mathsf E_j}\otimes N_\delta\varphi
 +J_\delta\varphi\bigr).
\]
Define the product isometries recursively by
\begin{equation}
 W_0:=I_{\cH_{\mathsf S}},\qquad
 W_{j+1}:=(I_{\mathcal R_{<j}}\otimes W_\delta)W_j:
 \cH_{\mathsf S}\longrightarrow\cH_{j+1}.
 \label{eq:product-stinespring}
\end{equation}
We write $\Id_{\mathcal X}$ for the identity channel on a space $\mathcal X$,
omitting the subscript when the space is clear.
We will show by induction that tracing the first \(j\) Kraus-label registers
recovers \(j\) applications of \(\Phi_\delta\):
\begin{equation}
 \Tr_{\mathsf E_0\cdots\mathsf E_{j-1}}
 (W_j\rho W_j^\dagger)=\Phi_\delta^j(\rho).
 \label{eq:product-stinespring-channel}
\end{equation}
For \(j=0\), both sides equal \(\rho\).  Assuming the equality for \(j\),
we substitute the recursive definition of \(W_{j+1}\) in
\eqref{eq:product-stinespring} and trace out \(\mathsf E_j\) using the Kraus
representation \eqref{eq:step-channel}:
\[
\begin{aligned}
 &\Tr_{\mathsf E_0\cdots\mathsf E_j}
 (W_{j+1}\rho W_{j+1}^\dagger)\\
 &=\Tr_{\mathsf E_0\cdots\mathsf E_j}
 \left[(I_{\mathcal R_{<j}}\otimes W_\delta)
 W_j\rho W_j^\dagger
 (I_{\mathcal R_{<j}}\otimes W_\delta^\dagger)\right]\\
 &=\Tr_{\mathsf E_0\cdots\mathsf E_{j-1}}
 \left[(\Id_{\mathcal R_{<j}}\otimes\Phi_\delta)
 (W_j\rho W_j^\dagger)\right]
 =\Phi_\delta\!\left(
 \Tr_{\mathsf E_0\cdots\mathsf E_{j-1}}W_j\rho W_j^\dagger\right)
 =\Phi_\delta^{j+1}(\rho).
\end{aligned}
\]
Here the third line follows by tracing out the fresh register
$\mathsf E_j$, thereby applying $\Phi_\delta$.  The map $\Phi_\delta$ can
then be pulled through the trace over $\mathcal R_{<j}$ because it acts only
on $\mathsf S$.  The final equality uses the induction hypothesis,
completing the induction.  We next bound the difference between
\(\Phi_\delta^J\) and \(e^{t\cL}\).

\begin{proposition}[Discretization error]
\label{prop:discretization}
If \(\alpha\delta\leq1/2\), then
\begin{equation}
 \norm{\Phi_\delta-e^{\delta\cL}}_\diamond
 \leq10(\alpha\delta)^2,
 \qquad
 \norm{\Phi_\delta^J-e^{t\cL}}_\diamond
 \leq\frac{10\tau^2}{J}.
 \label{eq:discretization-error}
\end{equation}
\end{proposition}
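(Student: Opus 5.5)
The plan is to prove the one-step bound $\norm{\Phi_\delta-e^{\delta\cL}}_\diamond\leq10(\alpha\delta)^2$ by comparing both maps with the first-order map $\Id+\delta\cL$. The $J$-step bound then follows from a telescoping argument.

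\textbf{Norm bounds.} First record the operator-norm bounds $\norm{H}\leq\alpha_H$ and $\norm{B}\leq\alpha_B$, hence $\norm{K}\leq\alpha_H+\alpha_B^2/2\leq\alpha$. From these,
\[
\norm{\cL}_\diamond\leq 2\norm{H}+\norm{B}^2+\norm{B^\dagger B}\leq 2\alpha,
\]
where the dissipator is written as $\rho\mapsto B\rho B^\dagger-\frac12\{B^\dagger B,\rho\}$ followed by a partial trace, and each piece is bounded in diamond norm by the product of the operator norms.

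\textbf{Exponential versus first order.} Next,
\[
\norm{e^{\delta\cL}-\Id-\delta\cL}_\diamond\leq\sum_{n\geq2}\frac{(2\alpha\delta)^n}{n!}\leq 2(\alpha\delta)^2 e^{2\alpha\delta}\leq 2e(\alpha\delta)^2,
\]
using $\alpha\delta\le 1/2$.

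\textbf{Rational step versus first order.} Write $X=\delta K/2$, so $\norm X\leq\alpha\delta/2\leq1/4$. The series $R_\delta=\sum_n(-X)^n$ converges, and
\[
R_\delta=I-X+E_1,\qquad N_\delta=(I-X)R_\delta=I-2X+E_2,
\]
with $\norm{E_1}\leq\norm X^2/(1-\norm X)$ and $\norm{E_2}\leq 2\norm X^2/(1-\norm X)$. The latter follows from $N_\delta=I-2XR_\delta$ and $XR_\delta-X=-X^2R_\delta$, together with $\norm{R_\delta}\le 1$ from Lemma~\ref{lem:rational-step}. Expanding,
\[
N_\delta\rho N_\delta^\dagger=\rho-\delta(K\rho+\rho K^\dagger)+\mathcal{R}_1(\rho),
\qquad
\delta\Tr_{\mathsf E}(BR_\delta\rho R_\delta^\dagger B^\dagger)=\delta\Tr_{\mathsf E}(B\rho B^\dagger)+\mathcal R_2(\rho).
\]
The identity $K\rho+\rho K^\dagger=i[H,\rho]+\frac12\{B^\dagger B,\rho\}$ shows that the first-order terms together equal $\rho+\delta\cL(\rho)$. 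The remainders are bounded term by term, each as a map of the form $\rho\mapsto A\rho C^\dagger$ with diamond norm at most $\norm A\norm C$. The terms in $\mathcal{R}_1$ include $\norm{X}^2$-type pieces such as $4X\rho X^\dagger$ and the cross terms with $E_2$, each $O((\alpha\delta)^2)$. For $\mathcal R_2$, use $R_\delta-I=-XR_\delta$ with $\norm{XR_\delta}\le\alpha\delta/2$; this gives $\norm{\mathcal R_2}_\diamond\leq\delta\alpha_B^2(2\norm{X}+\norm X^2)\leq(\alpha\delta)^2(1+\alpha\delta/4)$. Collecting constants with $\norm X\le1/4$ should give roughly $\norm{\Phi_\delta-\Id-\delta\cL}_\diamond\leq c(\alpha\delta)^2$ with $c$ about $4$. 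Adding the $2e\approx5.44$ from the exponential comparison stays below $10$.

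\textbf{Telescoping.} Both $\Phi_\delta$ and $e^{\delta\cL}$ are CPTP, and CPTP maps are diamond-norm contractions. Therefore
\[
\norm{\Phi_\delta^J-e^{J\delta\cL}}_\diamond\leq\sum_{j}\norm{\Phi_\delta^{J-1-j}(\Phi_\delta-e^{\delta\cL})e^{j\delta\cL}}_\diamond\leq J\cdot10(\alpha\delta)^2=10\tau^2/J,
\]
since $\alpha\delta J=\tau$.

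\textbf{Main obstacle.} The only delicate part is the constant bookkeeping in the remainders of $N_\delta\rho N_\delta^\dagger$, which requires a careful accounting of all second-order cross terms. If the crude bounds overshoot $10$, I would first sharpen the bound on $e^{\delta\cL}$: for a Lindbladian, $\norm{e^{s\cL}}_\diamond=1$, so the integral remainder $\int_0^\delta(\delta-s)e^{s\cL}\cL^2\,ds$ gives $\frac{\delta^2}{2}\norm{\cL}_\diamond^2\leq2(\alpha\delta)^2$. That frees up a generous budget for the rational-step remainders.
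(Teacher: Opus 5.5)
Your proposal is correct and follows essentially the same route as the paper. Both arguments compare $\Phi_\delta$ and $e^{\delta\cL}$ with $\Id+\delta\cL$, bound the no-jump and jump remainders using $R_\delta-I=-\frac{\delta}{2}KR_\delta$ and $N_\delta=I-\delta K+O(\delta^2)$, bound the exponential tail by $2e(\alpha\delta)^2$ via $\norm{\cL}_\diamond\le2\alpha$, and then telescope. The constant bookkeeping you left open does close. With $\norm{N_\delta}\le1$ the paper gets $3(\alpha\delta)^2$ for the no-jump part and $(\alpha\delta)^2$ for the jump part, so the total is $(4+2e)(\alpha\delta)^2<10(\alpha\delta)^2$. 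Even your cruder bound $\norm{E_2}\le2\norm X^2/(1-\norm X)$ gives a rational-step constant of about $4.24$, which still stays under the budget.
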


\begin{proof}
We first compare $\Phi_\delta$ with its first-order approximation
$\Id+\delta\cL$.  From the definitions of $\cL$ and
$K=iH+\frac12B^\dagger B$, for every $\rho$,
\begin{equation}
 \begin{aligned}
 (\Id+\delta\cL)(\rho)
 &=\rho-i\delta[H,\rho]+\delta\Tr_{\mathsf E}(B\rho B^\dagger)
 -\frac\delta2(B^\dagger B\rho+\rho B^\dagger B)\\
 &=\rho-\delta(K\rho+\rho K^\dagger)
 +\delta\Tr_{\mathsf E}(B\rho B^\dagger).
 \end{aligned}
 \label{eq:first-order-lindblad}
\end{equation}
Comparing the Kraus representation of $\Phi_\delta$ in
\eqref{eq:step-channel} with the first-order formula
\eqref{eq:first-order-lindblad}, we estimate the $N_\delta$ and $J_\delta$
terms separately.

For $N_\delta$, we first use $(I+\delta K/2)R_\delta=I$ to write
\[
 R_\delta-I=R_\delta-(I+\delta K/2)R_\delta
 =-\frac\delta2KR_\delta.
\]
Substituting \(R_\delta-I=-\delta KR_\delta/2\) into the definition of
\(N_\delta\) gives the expansion
\begin{equation}
 \begin{aligned}
 N_\delta
 &=(I-\delta K/2)R_\delta
 =(I+\delta K/2)R_\delta-\delta KR_\delta
 =I-\delta KR_\delta\\
 &=I-\delta K+\delta K(I-R_\delta)
 =I-\delta K+\frac{\delta^2}{2}K^2R_\delta.
 \end{aligned}
 \label{eq:rational-remainders}
\end{equation}
Since $\norm K\leq\alpha$ and $\norm{R_\delta}\leq1$, the formulas above for
$R_\delta-I$ and $N_\delta-(I-\delta K)$ imply
\begin{equation}
 \norm{R_\delta-I}\leq\frac{\alpha\delta}{2},
 \qquad
 \norm{N_\delta-(I-\delta K)}\leq\frac{(\alpha\delta)^2}{2},
 \qquad
 \norm{I-\delta K}\leq1+\alpha\delta.
 \label{eq:rational-remainder-bounds}
\end{equation}
Moreover,
$N_\delta^\dagger N_\delta\preceq
N_\delta^\dagger N_\delta+J_\delta^\dagger J_\delta=I$ by
Lemma~\ref{lem:rational-step}, so $\norm{N_\delta}\leq1$.
To compare $N_\delta(\,\cdot\,)N_\delta^\dagger$ with
$\Id-\delta K(\,\cdot\,)-\delta(\,\cdot\,)K^\dagger$, add and subtract
$(I-\delta K)(\,\cdot\,)(I-\delta K)^\dagger$:
\begin{align}
 &N_\delta(\,\cdot\,)N_\delta^\dagger
 -\Id+\delta K(\,\cdot\,)+\delta(\,\cdot\,)K^\dagger
 \notag\\
 &=N_\delta(\,\cdot\,)N_\delta^\dagger
 -(I-\delta K)(\,\cdot\,)(I-\delta K)^\dagger
 +\delta^2K(\,\cdot\,)K^\dagger \notag\\
 &=\bigl[N_\delta-(I-\delta K)\bigr](\,\cdot\,)N_\delta^\dagger
 +(I-\delta K)(\,\cdot\,)
 \bigl[N_\delta-(I-\delta K)\bigr]^\dagger
 +\delta^2K(\,\cdot\,)K^\dagger.
 \label{eq:no-jump-expansion}
\end{align}
Taking diamond norms in the no-jump expansion
\eqref{eq:no-jump-expansion} and using
\(\norm{X(\,\cdot\,)Y^\dagger}_\diamond=\norm X\norm Y\) and the bounds
\eqref{eq:rational-remainder-bounds}, we obtain
\begin{align}
 &\norm{N_\delta(\,\cdot\,)N_\delta^\dagger
 -\Id+\delta K(\,\cdot\,)+\delta(\,\cdot\,)K^\dagger}_\diamond \notag\\
 &\leq\norm{N_\delta-(I-\delta K)}\norm{N_\delta}
 +\norm{I-\delta K}\norm{N_\delta-(I-\delta K)}
 +\delta^2\norm K^2 \notag\\
 &\leq\frac{(\alpha\delta)^2}{2}
 +\frac{(1+\alpha\delta)(\alpha\delta)^2}{2}
 +(\alpha\delta)^2
 \leq3(\alpha\delta)^2.
 \label{eq:no-jump-error}
\end{align}

For the $J_\delta$ term, we substitute $J_\delta=\sqrt\delta BR_\delta$ and
add and subtract $BR_\delta(\,\cdot\,)B^\dagger$:
\begin{align}
 \Tr_{\mathsf E}\bigl(J_\delta(\,\cdot\,)J_\delta^\dagger\bigr)
 -\delta\Tr_{\mathsf E}\bigl(B(\,\cdot\,)B^\dagger\bigr)
 &=\delta\Tr_{\mathsf E}\Bigl(
 BR_\delta(\,\cdot\,)R_\delta^\dagger B^\dagger
 -B(\,\cdot\,)B^\dagger\Bigr) \notag\\
 &=\delta\Tr_{\mathsf E}\Bigl(
 B(R_\delta-I)(\,\cdot\,)R_\delta^\dagger B^\dagger
 +B(\,\cdot\,)(R_\delta^\dagger-I)B^\dagger\Bigr).
 \label{eq:jump-expansion}
\end{align}
Taking diamond norms in this expansion, using contractivity of the partial
trace and the bounds \eqref{eq:rational-remainder-bounds}, gives
\begin{align}
 \norm{\Tr_{\mathsf E}\bigl(J_\delta(\,\cdot\,)J_\delta^\dagger\bigr)
 -\delta\Tr_{\mathsf E}\bigl(B(\,\cdot\,)B^\dagger\bigr)}_\diamond
 &\leq\delta\norm B^2\bigl(
 \norm{R_\delta-I}\norm{R_\delta}+\norm{R_\delta-I}\bigr) \notag\\
 &=\delta\norm B^2\norm{R_\delta-I}(1+\norm{R_\delta})
 \leq(\alpha\delta)^2.
 \label{eq:jump-error}
\end{align}
Substituting the Kraus representation \eqref{eq:step-channel} and the
first-order formula \eqref{eq:first-order-lindblad} into
\(\Phi_\delta-(\Id+\delta\cL)\) separates the difference into the no-jump
and jump terms estimated above:
\[
 \begin{aligned}
 \Phi_\delta-(\Id+\delta\cL)={}&N_\delta(\,\cdot\,)N_\delta^\dagger
 -\Id+\delta K(\,\cdot\,)+\delta(\,\cdot\,)K^\dagger+\Tr_{\mathsf E}\bigl(J_\delta(\,\cdot\,)J_\delta^\dagger\bigr)
 -\delta\Tr_{\mathsf E}\bigl(B(\,\cdot\,)B^\dagger\bigr).
 \end{aligned}
\]
Combining the no-jump bound \eqref{eq:no-jump-error} and the jump bound
\eqref{eq:jump-error} with the triangle inequality gives
\begin{equation}
 \norm{\Phi_\delta-(\Id+\delta\cL)}_\diamond
 \leq3(\alpha\delta)^2+(\alpha\delta)^2=4(\alpha\delta)^2.
 \label{eq:rational-first-order-error}
\end{equation}

It remains to compare $\Id+\delta\cL$ with $e^{\delta\cL}$.  Applying the
triangle inequality to \eqref{eq:lindbladian}, we have
\begin{align}
 \norm{\cL}_\diamond
 &\leq\norm{-i[H,\,\cdot\,]}_\diamond
 +\norm{\Tr_{\mathsf E}\bigl(B(\,\cdot\,)B^\dagger\bigr)}_\diamond
 +\frac12\norm{B^\dagger B(\,\cdot\,)
 +(\,\cdot\,)B^\dagger B}_\diamond \notag\\
 &\leq2\norm H+\norm B^2
 +\frac12\bigl(\norm{B^\dagger B}+\norm{B^\dagger B}\bigr) \notag\\
 &=2\norm H+2\norm B^2\leq2\alpha.
 \label{eq:lindbladian-diamond-norm}
\end{align}
The generator-norm bound \eqref{eq:lindbladian-diamond-norm} gives
\begin{equation}
 \begin{aligned}
 \norm{e^{\delta\cL}-(\Id+\delta\cL)}_\diamond
 &\leq\sum_{n=2}^\infty\frac{\delta^n\norm{\cL}_\diamond^n}{n!}\leq\frac12e^{\delta\norm{\cL}_\diamond}
 (\delta\norm{\cL}_\diamond)^2
 \leq2e(\alpha\delta)^2.
 \end{aligned}
 \label{eq:exponential-first-order-error}
\end{equation}
Combining the two preceding first-order error bounds,
\eqref{eq:rational-first-order-error} and
\eqref{eq:exponential-first-order-error}, with the triangle inequality gives
\begin{equation}
 \begin{aligned}
 \norm{\Phi_\delta-e^{\delta\cL}}_\diamond
 &\leq\norm{\Phi_\delta-(\Id+\delta\cL)}_\diamond
 +\norm{e^{\delta\cL}-(\Id+\delta\cL)}_\diamond\\
 &\leq(4+2e)(\alpha\delta)^2<10(\alpha\delta)^2.
 \end{aligned}
 \label{eq:one-step-discretization}
\end{equation}

Finally, we telescope the one-step error over \(J\) steps.  Since
\(J\delta=t\),
\begin{equation}
 \Phi_\delta^J-e^{t\cL}
 =\sum_{r=0}^{J-1}\Phi_\delta^{J-1-r}
 (\Phi_\delta-e^{\delta\cL})e^{r\delta\cL}.
 \label{eq:discretization-telescoping}
\end{equation}
Both $\Phi_\delta$ and $e^{s\cL}$ are CPTP and therefore have diamond norm
one.  Taking diamond norms in the telescoping identity
\eqref{eq:discretization-telescoping} and applying the one-step bound
\eqref{eq:one-step-discretization} gives
\begin{align*}
 \norm{\Phi_\delta^J-e^{t\cL}}_\diamond
 &\leq\sum_{r=0}^{J-1}
 \norm{\Phi_\delta}_\diamond^{J-1-r}
 \norm{\Phi_\delta-e^{\delta\cL}}_\diamond
 \norm{e^{r\delta\cL}}_\diamond\\
 &=J\norm{\Phi_\delta-e^{\delta\cL}}_\diamond
 \leq J\,10(\alpha\delta)^2
 =\frac{10(\alpha t)^2}{J}
 =\frac{10\tau^2}{J}.
\end{align*}
\end{proof}

\section{A one-query transducer for a rational step}
\label{sec:local-step}

To implement \(W_\delta\) with one query, we construct a transducer
\(S_\delta\) with public space
\(\cH_{\mathsf E}\otimes\cH_{\mathsf S}\) and private space
\[
 \cH_{\rm priv}=
 (\cH_{\mathsf A_H}\otimes\cH_{\mathsf S})
 \oplus
 (\cH_{\mathsf A_B}\otimes\cH_{\mathsf E}
 \otimes\cH_{\mathsf S})^{\oplus2}.
\]
On the three summands of \(\cH_{\rm priv}\),
\(\Omega=U_H\oplus U_B\oplus U_B^\dagger\) acts as \(U_H\), \(U_B\), and
\(U_B^\dagger\), respectively.  The transducer \(S_\delta\) first applies
\(\Omega\) to the private space and then a unitary \(G\) to the public and
private spaces together.

To write the target transformation in terms of the normalized operators, set
\[
 \bar H=H/\alpha_H,
 \qquad
 \bar B=B/\alpha_B.
\]
For a step of length \(\delta\), set
\begin{equation}
 \kappa=\sqrt{\delta\alpha_H/2},\qquad
 \beta=\sqrt\delta\,\alpha_B/2,\qquad
 \mu=\kappa^2+\beta^2,\qquad
 2\kappa^2+4\beta^2=\alpha\delta=\frac{\tau}{J}.
 \label{eq:local-scalars}
\end{equation}
For \(\psi\in\cH_{\mathsf S}\), let \(y=R_\delta\psi\).  Then
the definitions of \(R_\delta\), \(N_\delta\), and \(J_\delta\) in
\eqref{eq:resolvent} and \eqref{eq:rational-blocks} become
\begin{align}
 \psi
 &=\bigl(I+i\kappa^2\bar H
   +\beta^2\bar B^\dagger\bar B\bigr)y,
 \nonumber\\
 N_\delta\psi
 &=\bigl(I-i\kappa^2\bar H
   -\beta^2\bar B^\dagger\bar B\bigr)y,
 \nonumber\\
 J_\delta\psi&=2\beta\bar B y.
 \label{eq:local-targets}
\end{align}
\begin{lemma}[One-query transducer for a rational step]
\label{lem:local-step}
There exist a unitary \(G\) on
\((\cH_{\mathsf E}\otimes\cH_{\mathsf S})\oplus\cH_{\rm priv}\), depending
only on \(\kappa,\beta\), and a linear map
\(\gamma:\cH_{\mathsf S}\to\cH_{\rm priv}\), such that for every
\(\psi\in\cH_{\mathsf S}\),
\begin{equation}
 G\bigl((\ket0_{\mathsf E}\otimes\psi)
 \oplus\Omega\gamma\psi\bigr)
 =W_\delta\psi\oplus\gamma\psi.
 \label{eq:local-transducer}
\end{equation}
Moreover,
\begin{equation}
 \gamma^\dagger\gamma\preceq\alpha\delta I.
 \label{eq:local-catalyst-bound}
\end{equation}
\end{lemma}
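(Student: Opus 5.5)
The plan is to build the catalyst from the vector \(y=R_\delta\psi\) so that each private component is nearly invariant under the oracle. The structure I would use is that \(U_H^2=I\) and that the pair \((U_B,U_B^\dagger)\) acts on the last two summands. With this, \(\Omega\) only permutes pieces of the catalyst, up to oracle-independent bookkeeping. The unitary \(G\) then does two things. It undoes this permutation. It also performs a fixed mixing, depending only on \(\kappa,\beta\), between the public register and the ancilla-zero parts of the private space. I expect that mixing to reproduce the three identities in \eqref{eq:local-targets}.

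\textbf{Step 1: the \(H\)-summand.} On this summand I would take
\[
 \gamma_H\psi=c_H\kappa\bigl(\ket0_{\mathsf A_H}\otimes y+U_H(\ket0_{\mathsf A_H}\otimes y)\bigr)
\]
for a suitable constant \(c_H\), possibly complex to carry the factor \(i\). Since \(U_H^2=I\), this vector is fixed by \(U_H\). Its projection onto \(\bra0_{\mathsf A_H}\) is \(c_H\kappa(y+\bar Hy)\), which supplies the \(\kappa^2\bar H y\) term.

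\textbf{Step 2: the two \(B\)-summands.} On these I would take the pair
\[
 \bigl(a,\;U_Ba\bigr),\qquad a=\ket0_{\mathsf A_B}\ket0_{\mathsf E}\otimes\beta y+(\text{a term built from }\ket0_{\mathsf A_B}\otimes\beta\bar B y).
\]
After \(\Omega\) this pair becomes \((U_Ba,\,a)\), so a fixed swap of the two summands inside \(G\) restores it. Two projections of \(U_Ba\) matter. By \eqref{eq:B-encoding}, its projection by \(\bra0_{\mathsf A_B}\otimes P_{\rm L}\) gives \(\beta\bar By\), which is the jump output \(J_\delta\psi=2\beta\bar By\) up to a factor. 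By the adjoint of \eqref{eq:B-encoding}, the component \(\bra0_{\mathsf A_B}\bra0_{\mathsf E}\) of \(U_B^\dagger(\ket0_{\mathsf A_B}\otimes\bar By)\) gives \(\bar B^\dagger\bar By\), which supplies the \(\beta^2\bar B^\dagger\bar B y\) terms.

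\textbf{Step 3: the unitary \(G\).} I would write \(G=V\,X\). Here \(X\) is the oracle-independent swap or reflection that returns \(\Omega\gamma\psi\) to \(\gamma\psi\) on the components orthogonal to the ancilla-zero subspaces. The factor \(V\) acts nontrivially only on a finite direct sum of ancilla-zero subspaces together with the public space, each isomorphic to \(\cH_{\mathsf S}\) or to \(\cH_{\mathsf E}\otimes\cH_{\mathsf S}\). On these copies \(V\) acts as a constant matrix \(M(\kappa,\beta)\otimes I\).

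The equation \eqref{eq:local-transducer} then reduces to finitely many linear conditions on \(M\). These are obtained by reading off coefficients of \(y\), \(\bar Hy\), \(\bar By\) and \(\bar B^\dagger\bar By\) from \eqref{eq:local-targets}. I would choose \(M\) to be a Householder-type reflection or a Givens rotation and check that it is unitary. A consistency check is available here: the inputs and outputs have equal Gram data for every \(\psi\). This holds because \(W_\delta\) is an isometry (Lemma~\ref{lem:rational-step}), \(\Omega\) is unitary, and the catalyst appears on both sides. So an isometric \(M\) on the relevant span must exist. The remaining work is to extend it unitarily in an oracle-independent way.

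\textbf{Step 4: the bound \eqref{eq:local-catalyst-bound}.} The norm \(\|\gamma\psi\|^2\) is a sum of terms of size \(O(\kappa^2\|y\|^2)\), \(O(\beta^2\|y\|^2)\) and \(O(\beta^2\|\bar By\|^2)\). Using \(\|\bar B\|\le1\), \(\|R_\delta\|\le1\) (Lemma~\ref{lem:rational-step}) and \(2\kappa^2+4\beta^2=\alpha\delta\) from \eqref{eq:local-scalars}, this should give \(\gamma^\dagger\gamma\preceq\alpha\delta I\), provided the constants \(c_H\) and the \(B\)-coefficients come out as in the natural normalization.

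\textbf{Main obstacle.} The hard step is Step 3. The difficulty is showing that the garbage parts of \(\Omega\gamma\psi\) are handled by \(X\) exactly, while \(M\) simultaneously produces \(N_\delta\psi\) in the \(\ket0_{\mathsf E}\) slot and \(2\beta\bar By\) in the label slots. If the simple ansatz above does not close, I would adjust the relative phases and weights in the catalyst. The first thing to try is putting \(i\) on the \(H\)-component and a minus sign on the \(U_B\) half.
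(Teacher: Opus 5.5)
There is a genuine gap, and it cannot be fixed by tuning constants. Steps 1--2 make the catalyst query-invariant up to bookkeeping: $U_H$ fixes $c_H\kappa(I+U_H)(\ket0_{\mathsf A_H}\otimes y)$, and $\Omega$ sends $(a,U_Ba)$ to $(U_Ba,a)$. So $\Omega\gamma\psi=\sigma\gamma\psi$ for a fixed unitary $\sigma$, and then no oracle-independent $G$ can exist.

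To see this, set $G':=G(I\oplus\sigma)$. It is oracle-independent and satisfies $G'\bigl((\ket0_{\mathsf E}\otimes\psi)\oplus\gamma\psi\bigr)=W_\delta\psi\oplus\gamma\psi$, so its private row gives $(I-G'_{11})\gamma\psi=G'_{10}(\ket0_{\mathsf E}\otimes\psi)$. On $F=\ker(I-G'_{11})$ the contraction $G'_{11}$ also satisfies $(G'_{11})^\dagger v=v$. Unitarity of $G'$ then forces $G'_{01}F=0$ and $\operatorname{ran}G'_{10}\perp F$. Since $I-G'_{11}$ is invertible on the invariant subspace $F^\perp$, the $F^\perp$-part of $\gamma\psi$ is a fixed linear function of $\psi$. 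Hence so is the public output $G'_{00}(\ket0_{\mathsf E}\otimes\psi)+G'_{01}\gamma\psi$.

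But $W_\delta$ depends on the oracle even at fixed $\kappa,\beta$. For example, $U_H=\pm I$ with $U_B=I$ (which encodes $B=0$) gives $N_\delta=(1\mp i\kappa^2)(1\pm i\kappa^2)^{-1}I$. Concretely, your $H$-component has the same zero-ancilla part $c_H\kappa(I+\bar H)y$ before and after the query. Nothing can then turn the $+i\kappa^2\bar Hy$ in $\psi$ into the $-i\kappa^2\bar Hy$ in $N_\delta\psi$.

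The Gram ``consistency check'' in Step 3 does not rescue this. Equal Gram data holds for every $\gamma$ whatsoever, already for $\gamma=0$, because $\Omega$ is unitary and $W_\delta$ is an isometry. The isometry it produces is assembled from oracle-dependent vectors, so it gives no oracle-independent $M$.

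The missing idea is that the query must change the ancilla-zero components of the catalyst in a controlled, oracle-dependent way. On the orthogonal components it should act only by fixed phases and moves that $G$ can undo.

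For the Hamiltonian part, the paper uses $\kappa(I+iU_H)(\ket0_{\mathsf A_H}\otimes y)$, which is not an eigenvector of $U_H$. Under the query its zero-ancilla part goes from $\kappa(I+i\bar H)y$ to $\kappa(\bar H+iI)y$, while its orthogonal part only picks up a factor $-i$.

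For the jump part, it takes $a=\beta[\ket0_{\mathsf A_B}\ket0_{\mathsf E}\otimes y+U_B^\dagger(\ket0_{\mathsf A_B}\otimes\bar By)]$. This is paired not with $U_Ba$ but with $U_Ba-2\beta\ket0_{\mathsf A_B}\otimes\bar By=\beta[I-(\proj0_{\mathsf A_B}\otimes P_{\rm L})]U_B(\ket0_{\mathsf A_B}\ket0_{\mathsf E}\otimes y)$. After the query, the $U_B$-slot holds $2\beta\ket0_{\mathsf A_B}\otimes\bar By$, which $G$ routes to the public label space as $J_\delta\psi$. It also holds an orthogonal remainder that $G$ moves to the other slot, whose zero-ancilla part has become $\beta(I-\bar B^\dagger\bar B)y$.

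The reflection $M=\bigl(2uu^\dagger/(u^\dagger u)-I_3\bigr)\operatorname{diag}(1,-i,1)$ with $u=(1,\kappa,\beta)^\top$ then maps $(\psi,\kappa(\bar H+iI)y,\beta(I-\bar B^\dagger\bar B)y)$ to $(N_\delta\psi,\kappa(I+i\bar H)y,\beta(I+\bar B^\dagger\bar B)y)$. This follows from $x+x'=2u\otimes y$ and $(u^\dagger\otimes I)x=(u^\dagger u)y$, where $x$ is the phase-corrected input column and $x'$ the target.

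Your closing fallback of ``putting $i$ on the $H$-component'' points in this direction only if it means $I+iU_H$ rather than a phase on $c_H$. Even then, it abandons the permutation principle that Steps 1--3 rely on.
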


\begin{proof}
We write vectors in
\((\cH_{\mathsf E}\otimes\cH_{\mathsf S})\oplus\cH_{\rm priv}\)
as five-component vectors, whose components belong, in order, to
\(\ket0_{\mathsf E}\otimes\cH_{\mathsf S}\),
\(\cH_{\rm L}\otimes\cH_{\mathsf S}\),
\(\cH_{\mathsf A_H}\otimes\cH_{\mathsf S}\), and two copies of
\(\cH_{\mathsf A_B}\otimes\cH_{\mathsf E}\otimes\cH_{\mathsf S}\).
The first two entries form the public space, and the last three form the
private space.  With \(y=R_\delta\psi\), define
\begin{equation}
 \begin{aligned}
 \gamma\psi
 &=
 \begin{pmatrix}
  \kappa(I+iU_H)(\ket0_{\mathsf A_H}\otimes y)\\
  \beta\bigl[(\ket0_{\mathsf A_B}\ket0_{\mathsf E}\otimes y)
   +U_B^\dagger(\ket0_{\mathsf A_B}\otimes\bar B y)\bigr]\\
  \beta\bigl[I-(\ket0\!\bra0_{\mathsf A_B}\otimes
   P_{\rm L})\bigr]
   U_B(\ket0_{\mathsf A_B}\ket0_{\mathsf E}\otimes y)
 \end{pmatrix}
 \\[0.3em]
 &=
 \underbrace{\begin{pmatrix}
  \kappa(\ket0_{\mathsf A_H}\otimes(I+i\bar H)y)\\
  \beta(\ket0_{\mathsf A_B}\ket0_{\mathsf E}\otimes
   (I+\bar B^\dagger\bar B)y)\\
  0
 \end{pmatrix}}_{\text{\(\ket0\)-components}}
 +
 \underbrace{\begin{pmatrix}
  i\kappa\bigl[I-(\ket0\!\bra0_{\mathsf A_H}\otimes I_{\mathsf S})\bigr]
   U_H(\ket0_{\mathsf A_H}\otimes I_{\mathsf S})y\\
  \beta\bigl[I-(\ket0\!\bra0_{\mathsf A_B}\otimes
   \ket0\!\bra0_{\mathsf E}\otimes I_{\mathsf S})\bigr]
   U_B^\dagger(\ket0_{\mathsf A_B}\otimes\bar B y)\\
  \beta\bigl[I-(\ket0\!\bra0_{\mathsf A_B}\otimes
   P_{\rm L})\bigr]
   U_B(\ket0_{\mathsf A_B}\ket0_{\mathsf E}\otimes I_{\mathsf S})y
 \end{pmatrix}}_{\text{orthogonal components}}.
 \end{aligned}
 \label{eq:local-catalyst}
\end{equation}
We now compute the state after the
oracle query.
Applying \(U_H\), \(U_B\), and \(U_B^\dagger\) to the three entries of
\eqref{eq:local-catalyst} and simplifying with \(U_B^\dagger U_B=I\), the block
encoding of \(\bar H\) in \eqref{eq:H-encoding}, and the projected unitary
encoding of \(\bar B\) in \eqref{eq:B-encoding}, we obtain
\begin{align*}
&\kappa U_H(I+iU_H)(\ket0_{\mathsf A_H}\otimes I_{\mathsf S})y\\
 &=
 \kappa\bigl[U_H(\ket0_{\mathsf A_H}\otimes I_{\mathsf S})
 +i(\ket0_{\mathsf A_H}\otimes I_{\mathsf S})\bigr]y
\\
&=
 \kappa(\ket0_{\mathsf A_H}\otimes(\bar H+iI))y
 +\kappa\bigl[I-(\ket0\!\bra0_{\mathsf A_H}\otimes I_{\mathsf S})\bigr]
 U_H(\ket0_{\mathsf A_H}\otimes I_{\mathsf S})y.
\end{align*}
Similarly,
\begin{align*}
&\beta U_B\bigl[
 (\ket0_{\mathsf A_B}\ket0_{\mathsf E}\otimes I_{\mathsf S})
 +U_B^\dagger(\ket0_{\mathsf A_B}\otimes\bar B)
 \bigr]y
\\
&=
 \beta\bigl[U_B(\ket0_{\mathsf A_B}\ket0_{\mathsf E}\otimes I_{\mathsf S})
 +(\ket0_{\mathsf A_B}\otimes\bar B)\bigr]y
\\
&=
 2\beta(\ket0_{\mathsf A_B}\otimes\bar B)y
 +\beta\bigl[I-(\ket0\!\bra0_{\mathsf A_B}\otimes
 P_{\rm L})\bigr]
 U_B(\ket0_{\mathsf A_B}\ket0_{\mathsf E}\otimes I_{\mathsf S})y.
\end{align*}
For the third entry,
\begin{align*}
&\beta U_B^\dagger
 \bigl[I-(\ket0\!\bra0_{\mathsf A_B}\otimes P_{\rm L})\bigr]
 U_B(\ket0_{\mathsf A_B}\ket0_{\mathsf E}\otimes I_{\mathsf S})y
\\
&=
 \beta\bigl[(\ket0_{\mathsf A_B}\ket0_{\mathsf E}\otimes I_{\mathsf S})
 -U_B^\dagger(\ket0_{\mathsf A_B}\otimes\bar B)\bigr]y
\\
&=
 \beta(\ket0_{\mathsf A_B}\ket0_{\mathsf E}\otimes
 (I-\bar B^\dagger\bar B)y)-
 \beta\bigl[I-(\ket0\!\bra0_{\mathsf A_B}\otimes
 \ket0\!\bra0_{\mathsf E}\otimes I_{\mathsf S})\bigr]
 U_B^\dagger(\ket0_{\mathsf A_B}\otimes\bar B)y.
\end{align*}
Thus the state after applying \(\Omega\) is
\begin{equation}
 \begin{aligned}
 &(\ket0_{\mathsf E}\otimes\psi)\oplus\Omega\gamma\psi\\
 &=
 \underbrace{\begin{pmatrix}
  \ket0_{\mathsf E}\otimes\psi\\
  0\\
  \kappa(\ket0_{\mathsf A_H}\otimes(\bar H+iI)y)\\
  2\beta(\ket0_{\mathsf A_B}\otimes\bar B y)\\
  \beta(\ket0_{\mathsf A_B}\ket0_{\mathsf E}\otimes
   (I-\bar B^\dagger\bar B)y)
 \end{pmatrix}}_{=:v_0}+
 \underbrace{\begin{pmatrix}
  0\\
  0\\
  \kappa\bigl[I-(\ket0\!\bra0_{\mathsf A_H}\otimes I_{\mathsf S})\bigr]
   U_H(\ket0_{\mathsf A_H}\otimes I_{\mathsf S})y\\
  \beta\bigl[I-(\ket0\!\bra0_{\mathsf A_B}\otimes
   P_{\rm L})\bigr]
   U_B(\ket0_{\mathsf A_B}\ket0_{\mathsf E}\otimes I_{\mathsf S})y\\
  -\beta\bigl[I-(\ket0\!\bra0_{\mathsf A_B}\otimes
   \ket0\!\bra0_{\mathsf E}\otimes I_{\mathsf S})\bigr]
   U_B^\dagger(\ket0_{\mathsf A_B}\otimes\bar B y)
 \end{pmatrix}}_{=:v_\perp}.
 \end{aligned}
 \label{eq:post-query-vector}
\end{equation}
We first construct the part of \(G\) that produces the no-jump output and
restores the zero-ancilla components of the catalyst.  We combine the first,
third, and fifth entries of \(v_0\) in the queried state
\eqref{eq:post-query-vector} to produce
\(\ket0_{\mathsf E}\otimes N_\delta\psi\) in the public space and the
zero-ancilla components of the first two private entries of \(\gamma\psi\).
In the input and target triples, the
fixed ancilla factors are \(\ket0_{\mathsf E}\),
\(\ket0_{\mathsf A_H}\), and
\(\ket0_{\mathsf A_B}\ket0_{\mathsf E}\), in that order.  Removing these
factors from both triples identifies them with columns in
\(\cH_{\mathsf S}^{\oplus3}\).  The second component of the input column
comes from the third entry of \(v_0\).  We multiply this component by
\(-i\); after this phase, the input column \(x\) and target column \(x'\) are
\begin{equation}
 \begin{aligned}
 x&=
 \begin{pmatrix}
  \psi\\
  \kappa(I-i\bar H)y\\
  \beta(I-\bar B^\dagger\bar B)y
 \end{pmatrix},
 &
 x'&=
 \begin{pmatrix}
  N_\delta\psi\\
  \kappa(I+i\bar H)y\\
  \beta(I+\bar B^\dagger\bar B)y
 \end{pmatrix}.
 \end{aligned}
 \label{eq:reflection-columns}
\end{equation}
For \(u=(1,\kappa,\beta)^\top\), the formulas for \(\psi\) and
\(N_\delta\psi\) in \eqref{eq:local-targets} give
\begin{equation}
 x+x'=
 \begin{pmatrix}\psi+N_\delta\psi\\2\kappa y\\2\beta y\end{pmatrix}
 =\begin{pmatrix}2y\\2\kappa y\\2\beta y\end{pmatrix}
 =2u\otimes y.
 \label{eq:reflection-sum}
\end{equation}
Substituting the formula for \(\psi\) from \eqref{eq:local-targets} gives
\begin{equation}
 \begin{aligned}
 (u^\dagger\otimes I_{\mathsf S})x
 &=\psi+\kappa^2(I-i\bar H)y
 +\beta^2(I-\bar B^\dagger\bar B)y\\
 &=(1+\kappa^2+\beta^2)y=(u^\dagger u)y.
 \end{aligned}
 \label{eq:reflection-conditions}
\end{equation}
Let \(P_u=(uu^\dagger/u^\dagger u)\otimes I_{\mathsf S}\).  Equations
\eqref{eq:reflection-sum} and \eqref{eq:reflection-conditions} imply
\begin{equation}
 P_u x=u\otimes y=\frac{x+x'}2,
 \qquad
 x'=2P_u x-x.
 \label{eq:reflection-action}
\end{equation}
Thus \(2P_u-I\) maps \(x\) to \(x'\).  Composing \(2P_u-I\) with
\(\operatorname{diag}(1,-i,1)\) gives the coefficient matrix \(M\):
\begin{equation}
 \begin{aligned}
 M
 &=\left[
  \frac{2}{1+\mu}
  \begin{pmatrix}1\\\kappa\\\beta\end{pmatrix}
  \begin{pmatrix}1&\kappa&\beta\end{pmatrix}-I_3
  \right]
  \begin{pmatrix}1&0&0\\0&-i&0\\0&0&1\end{pmatrix}
 \\
 &=\frac1{1+\mu}
  \begin{pmatrix}
   1-\mu&-2i\kappa&2\beta\\
   2\kappa&i(1-\kappa^2+\beta^2)&2\kappa\beta\\
   2\beta&-2i\kappa\beta&-1-\kappa^2+\beta^2
  \end{pmatrix}.
 \end{aligned}
 \label{eq:reflection-matrix}
\end{equation}
Since \(uu^\dagger/(u^\dagger u)\) is an orthogonal projection, both factors
in the definition of \(M\) in \eqref{eq:reflection-matrix} are unitary, and hence
\(M^\dagger M=I_3\).
Applying \(M\) to the three system vectors obtained from the first, third,
and fifth entries of \(v_0\), we obtain
\begin{equation}
 \begin{aligned}
 (M\otimes I_{\mathsf S})
 \begin{pmatrix}
  \psi\\
  \kappa(\bar H+iI)y\\
  \beta(I-\bar B^\dagger \bar B)y
 \end{pmatrix}
 &=(2P_u-I)
 \begin{pmatrix}
  \psi\\
  -i\kappa(\bar H+iI)y\\
  \beta(I-\bar B^\dagger\bar B)y
 \end{pmatrix}\\
 &=(2P_u-I)x=x'.
 \end{aligned}
 \label{eq:reflection-matrix-action}
\end{equation}

We next produce the jump output.  By Lemma~\ref{lem:rational-step}, the
second public component of \(W_\delta\psi\) is \(J_\delta\psi\).  Projecting
the fourth entry of the queried state \eqref{eq:post-query-vector} onto
\(\ket0_{\mathsf A_B}\otimes\cH_{\rm L}\otimes\cH_{\mathsf S}\) gives
\begin{equation}
 \begin{aligned}
 &(\bra0_{\mathsf A_B}\otimes P_{\rm L})
 \left\{2\beta(\ket0_{\mathsf A_B}\otimes\bar B y)
 +\beta[I-(\proj0_{\mathsf A_B}\otimes P_{\rm L})]
 U_B(\ket0_{\mathsf A_B}\ket0_{\mathsf E}\otimes I_{\mathsf S})y\right\}\\
 &=2\beta\bar B y=J_\delta\psi.
 \end{aligned}
 \label{eq:local-jump-output}
\end{equation}

We now define \(G\) on all five summands and verify its unitarity.
Let \(s\in\cH_{\mathsf S}\),
\(\ell\in\cH_{\rm L}\otimes\cH_{\mathsf S}\),
\(h\in\cH_{\mathsf A_H}\otimes\cH_{\mathsf S}\), and
\(z,b\in\cH_{\mathsf A_B}\otimes\cH_{\mathsf E}\otimes\cH_{\mathsf S}\).
The matrix \(M\) acts on \(s\), the \(\ket0_{\mathsf A_H}\)-component of
\(h\), and the
\(\ket0_{\mathsf A_B}\ket0_{\mathsf E}\)-component of \(b\).  Write the
corresponding system vectors and orthogonal remainders as
\[
 \begin{aligned}
 h_0&=(\bra0_{\mathsf A_H}\otimes I_{\mathsf S})h,
 &h^\perp&=h-(\ket0_{\mathsf A_H}\otimes I_{\mathsf S})h_0,\\
 b_0&=(\bra0_{\mathsf A_B}\bra0_{\mathsf E}\otimes I_{\mathsf S})b,
 &b^\perp&=b-(\ket0_{\mathsf A_B}\ket0_{\mathsf E}\otimes I_{\mathsf S})b_0,
 \end{aligned}
\]
and let
\[
 \begin{pmatrix}s'\\h_0'\\b_0'\end{pmatrix}
 =(M\otimes I_{\mathsf S})
 \begin{pmatrix}s\\h_0\\b_0\end{pmatrix}.
\]
Define \(G\) on the five summands by
\begin{equation}
 G\begin{pmatrix}\ket0_{\mathsf E}\otimes s\\\ell\\h\\z\\b\end{pmatrix}
 =\begin{pmatrix}
  \ket0_{\mathsf E}\otimes s'\\
  (\bra0_{\mathsf A_B}\otimes P_{\rm L})z\\
  (\ket0_{\mathsf A_H}\otimes I_{\mathsf S})h_0'+ih^\perp\\
  (\ket0_{\mathsf A_B}\ket0_{\mathsf E}\otimes I_{\mathsf S})b_0'-b^\perp\\
  \ket0_{\mathsf A_B}\otimes\ell
  +[I-(\ket0\!\bra0_{\mathsf A_B}\otimes P_{\rm L})]z
 \end{pmatrix}.
 \label{eq:G-full-action}
\end{equation}
To prove that \(G\) is unitary, we decompose its domain and codomain into
mutually orthogonal subspaces and verify that each restriction in
\eqref{eq:G-full-action} is a unitary map between the corresponding
subspaces.  The definitions of \(h^\perp\) and
\(b^\perp\) give
\[
 (\bra0_{\mathsf A_H}\otimes I_{\mathsf S})h^\perp=0,
 \qquad
 (\bra0_{\mathsf A_B}\bra0_{\mathsf E}\otimes I_{\mathsf S})b^\perp=0.
\]
Thus \((\ket0_{\mathsf A_H}\otimes I_{\mathsf S})h_0\) is orthogonal to
\(h^\perp\), and
\((\ket0_{\mathsf A_B}\ket0_{\mathsf E}\otimes I_{\mathsf S})b_0\) is
orthogonal to \(b^\perp\).  Equation~\eqref{eq:G-full-action} maps
\((s,h_0,b_0)\) by \(M\otimes I_{\mathsf S}\), maps \(h^\perp\) to the
corresponding orthogonal subspace of the third output summand with phase
\(i\), and maps \(b^\perp\) to the corresponding orthogonal subspace of the
fourth output summand with phase \(-1\).  These restrictions are unitary.

It remains to check the action of \(G\) on \(\ell\) and \(z\).  The
decomposition
\[
 z=\ket0_{\mathsf A_B}\otimes
 (\bra0_{\mathsf A_B}\otimes P_{\rm L})z
 +[I-(\proj0_{\mathsf A_B}\otimes P_{\rm L})]z
\]
is orthogonal because \(\proj0_{\mathsf A_B}\otimes P_{\rm L}\) and its
complement are orthogonal projections.  The second and fifth entries in
\eqref{eq:G-full-action} send the
\(\ket0_{\mathsf A_B}\otimes\cH_{\rm L}\otimes\cH_{\mathsf S}\) subspace of
the fourth input summand to the second output summand, the second input
summand to the corresponding subspace of the fifth output summand, and the
remaining subspace of the fourth input summand to its orthogonal complement
in the fifth output summand.  These three restrictions are isometries with
mutually orthogonal ranges spanning the second and fifth output summands.
Hence this part of \(G\) is unitary.

Together with the unitary actions on \((s,h_0,b_0)\), \(h^\perp\), and
\(b^\perp\) described above, this proves that \(G\) is unitary.

Figure~\ref{fig:local-catalyst-action} summarizes the query and the action of
\(G\).

\begin{figure}[H]
 \centering
 \begin{tikzpicture}[
  x=1cm,y=1cm,font=\footnotesize,
  space/.style={draw=black!35,fill=white,rounded corners=2pt,line width=0.6pt},
  amplitude/.style={inner sep=0pt,minimum width=4.2cm,align=center},
  wire/.style={-{Latex[length=1.6mm]},line width=0.8pt,rounded corners=2pt,
   preaction={draw=white,line width=2.6pt}},
  mix wire/.style={wire,draw=blue!60!black},
  jump wire/.style={wire,draw=teal!65!black},
  other wire/.style={wire,draw=black!65}
 ]
 \node[font=\small] at (2.25,1.4)
  {$\bigl(\ket0_{\mathsf E}\otimes\psi\bigr)\oplus\gamma\psi$};
 \node[font=\small] at (7.55,1.4)
  {$\bigl(\ket0_{\mathsf E}\otimes\psi\bigr)\oplus\Omega\gamma\psi$};
 \node[font=\small] at (14.05,1.4)
 {$W_\delta\psi\oplus\gamma\psi$};
 \draw[-{Latex[length=1.6mm]},line width=.65pt]
  (4.62,1.4)--(5.18,1.4) node[midway,above=2pt] {$I\oplus\Omega$};
 \draw[-{Latex[length=1.6mm]},line width=.65pt]
  (9.95,1.4)--(11.65,1.4) node[midway,above=2pt] {$G$};
 % Space labels on the left apply to all three aligned columns.
 \foreach \x in {0,5.3,11.8} {
  \draw[space] (\x,0.8) rectangle (\x+4.5,-1.1);
  \draw[draw=black!25,fill=black!2,rounded corners=3pt]
   (\x,-1.55) rectangle (\x+4.5,-8.95);
  \foreach \top/\bottom in {-2.3/-4.15,-4.6/-6.45,-6.9/-8.75} {
   \draw[space] (\x+0.08,\top) rectangle (\x+4.42,\bottom);
  }
  \foreach \y in {-0.2,-3.35,-5.65,-7.95} {
   \node[text=black!50] at (\x+2.25,\y) {$+$};
  }
 }
 \node[fill=white,inner xsep=5pt,inner ysep=0pt] at (2.25,0.8)
  {$\cH_{\mathsf E}\otimes\cH_{\mathsf S}$};
 \node[fill=white,inner xsep=5pt,inner ysep=0pt] at (2.25,-2.3)
  {$\cH_{\mathsf A_H}\otimes\cH_{\mathsf S}$};
 \foreach \y in {-4.6,-6.9} {
  \node[fill=white,inner xsep=5pt,inner ysep=0pt] at (2.25,\y)
   {$\cH_{\mathsf A_B}\otimes\cH_{\mathsf E}\otimes\cH_{\mathsf S}$};
 }
 \node at (2.25,-1.85) {$\gamma\psi\in\cH_{\rm priv}$};
 \node at (7.55,-1.85) {$\Omega\gamma\psi$};
 \node at (14.05,-1.85) {$\gamma\psi$};
 % The initial and final catalyst columns are identical term by term.
 \foreach \x in {2.25,14.05} {
  \node[amplitude,text=blue!60!black] at (\x,-2.95)
   {$\kappa\ket0_{\mathsf A_H}\otimes(I+i\bar H)y$};
  \node[amplitude] at (\x,-3.75) {$ih^\perp$};
  \node[amplitude,text=blue!60!black] at (\x,-5.25)
   {$\beta\ket0_{\mathsf A_B}\ket0_{\mathsf E}\otimes
    (I+\bar B^\dagger\bar B)y$};
  \node[amplitude] at (\x,-6.05) {$-b^\perp$};
  \node[amplitude,text=black!45] at (\x,-7.55) {$0$};
  \node[amplitude] at (\x,-8.35)
   {$z-\ket0_{\mathsf A_B}\otimes J_\delta\psi$};
 }
 \node[amplitude,text=blue!60!black] at (2.25,0.2)
  {$\ket0_{\mathsf E}\otimes\psi$};
 \foreach \x in {2.25,7.55} {
  \node[amplitude,text=black!45] at (\x,-0.6) {$0$};
 }
 % Query arrows act on whole boxes, not on their individual terms.
 \foreach \y/\op in {-0.2/I,-3.35/U_H,-5.65/U_B,-7.95/U_B^\dagger} {
  \draw[other wire] (4.5,\y)--(5.3,\y)
   node[midway,above=2pt,inner sep=0pt] {$\op$};
 }
 \node[amplitude,text=blue!60!black] (s) at (7.55,0.2)
  {$\ket0_{\mathsf E}\otimes\psi$};
 \node[amplitude,text=blue!60!black] (sp) at (14.05,0.2)
  {$\ket0_{\mathsf E}\otimes N_\delta\psi$};
 \node[amplitude,text=teal!65!black] (jp) at (14.05,-0.6)
  {$J_\delta\psi$};
 \node[amplitude,text=blue!60!black] (h) at (7.55,-2.95)
  {$\kappa\ket0_{\mathsf A_H}\otimes(\bar H+iI)y$};
 \node[amplitude] (ho) at (7.55,-3.75) {$h^\perp$};
 \node[amplitude,text=teal!65!black] (zj) at (7.55,-5.25)
  {$2\beta\ket0_{\mathsf A_B}\otimes\bar B y$};
 \node[amplitude] (zo) at (7.55,-6.05)
  {$z-\ket0_{\mathsf A_B}\otimes J_\delta\psi$};
 \node[amplitude,text=blue!60!black] (b) at (7.55,-7.55)
  {$\beta\ket0_{\mathsf A_B}\ket0_{\mathsf E}\otimes
   (I-\bar B^\dagger\bar B)y$};
 \node[amplitude] (bo) at (7.55,-8.35) {$b^\perp$};
 % One M box mixes three inputs and delivers three outputs.
 \draw[mix wire] (s.east)--(10.5,0.2);
 \draw[mix wire] (h.east)--(10.15,-2.95)--(10.5,-1.3);
 \draw[mix wire] (b.east)--(10,-7.55)--(10,-3.35)--(10.5,-2.8);
 \draw[mix wire] (11.1,0.2)--(sp.west);
 \draw[mix wire] (11.1,-1.3)--(11.4,-2.95)--(11.95,-2.95);
 \draw[mix wire] (11.1,-2.8)--(11.55,-3.2)--(11.55,-5.25)--(11.95,-5.25);
 \draw[draw=blue!60!black,fill=blue!7,rounded corners=3pt,line width=0.9pt]
  (10.5,0.5) rectangle (11.1,-3.1);
 \node[text=blue!60!black,font=\small] at (10.8,-1.3) {$M$};
 % White under-strokes distinguish crossings from connections.
 \draw[jump wire] (zj.east)--(10.3,-5.25)--(10.3,-4.3)
  --(11.3,-4.3)--(11.3,-0.6)--(jp.west);
 \draw[other wire] (ho.east)--(11.95,-3.75)
  node[pos=0.5,above=2pt,fill=white,inner sep=1pt] {$i$};
 \draw[other wire] (zo.east)--(10.3,-6.05)--(11.3,-8.35)--(11.95,-8.35);
 \draw[other wire] (bo.east)--(10.3,-8.35)--(11.3,-6.05)--(11.95,-6.05)
  node[pos=0.85,above=2pt,fill=white,inner sep=1pt] {$-1$};
 \end{tikzpicture}
 \caption{One query followed by \(G\) implements \(W_\delta\) and restores
 \(\gamma\psi\).  Here \(h,z,b\) are the three components of the private
 input to \(G\).
 Blue connections show \(M\otimes I_{\mathsf S}\) acting as in
 \eqref{eq:reflection-matrix-action}; the other connections implement the
 projections, swaps, and phases in the explicit action of \(G\) in
 \eqref{eq:G-full-action}.}
 \label{fig:local-catalyst-action}
\end{figure}

To verify the transducer identity \eqref{eq:local-transducer}, apply the
explicit action of \(G\) in \eqref{eq:G-full-action} to the queried state
\eqref{eq:post-query-vector}, taking
\begin{equation}
\begin{aligned}
 s&=\psi,\qquad \ell=0,\\
 h&=\kappa(\ket0_{\mathsf A_H}\otimes(\bar H+iI)y)
   +\kappa\bigl[I-(\ket0\!\bra0_{\mathsf A_H}\otimes I_{\mathsf S})\bigr]
    U_H(\ket0_{\mathsf A_H}\otimes I_{\mathsf S})y,\\
 z&=2\beta(\ket0_{\mathsf A_B}\otimes\bar B y)
   +\beta\bigl[I-(\ket0\!\bra0_{\mathsf A_B}\otimes
      P_{\rm L})\bigr]
      U_B(\ket0_{\mathsf A_B}\ket0_{\mathsf E}\otimes I_{\mathsf S})y,\\
 b&=\beta(\ket0_{\mathsf A_B}\ket0_{\mathsf E}\otimes
      (I-\bar B^\dagger\bar B)y)
   -\beta\bigl[I-(\ket0\!\bra0_{\mathsf A_B}\otimes
      \ket0\!\bra0_{\mathsf E}\otimes I_{\mathsf S})\bigr]
      U_B^\dagger(\ket0_{\mathsf A_B}\otimes\bar B y).
\end{aligned}
\label{eq:post-query-entries}
\end{equation}
Applying \(M\) as in \eqref{eq:reflection-matrix-action} gives
\[
 s'=N_\delta\psi,\qquad
 h_0'=\kappa(I+i\bar H)y,\qquad
 b_0'=\beta(I+\bar B^\dagger\bar B)y.
\]
The first identity, together with the relation
\((\bra0_{\mathsf A_B}\otimes P_{\rm L})z=J_\delta\psi\) from
\eqref{eq:local-jump-output} shows that the public output produced by \(G\) is
\[
 \ket0_{\mathsf E}\otimes s'
 +(\bra0_{\mathsf A_B}\otimes P_{\rm L})z
 =\ket0_{\mathsf E}\otimes N_\delta\psi+J_\delta\psi
 =W_\delta\psi.
\]
For the remaining terms in the private output, substituting the values of
\(h,z,b\) from \eqref{eq:post-query-entries} gives
\begin{align*}
 ih^\perp
 &=i\kappa[I-(\proj0_{\mathsf A_H}\otimes I_{\mathsf S})]
 U_H(\ket0_{\mathsf A_H}\otimes I_{\mathsf S})y,\\
 -b^\perp
 &=\beta[I-(\proj0_{\mathsf A_B}\otimes
 \proj0_{\mathsf E}\otimes I_{\mathsf S})]
 U_B^\dagger(\ket0_{\mathsf A_B}\otimes\bar B y),\\
 [I-(\proj0_{\mathsf A_B}\otimes P_{\rm L})]z
 &=\beta[I-(\proj0_{\mathsf A_B}\otimes P_{\rm L})]
 U_B(\ket0_{\mathsf A_B}\ket0_{\mathsf E}\otimes I_{\mathsf S})y.
\end{align*}
Combining these three identities with the values of \(h_0'\) and \(b_0'\)
above, the three private entries produced by \(G\) are
\[
 \begin{pmatrix}
  (\ket0_{\mathsf A_H}\otimes I_{\mathsf S})h_0'+ih^\perp\\
  (\ket0_{\mathsf A_B}\ket0_{\mathsf E}\otimes I_{\mathsf S})b_0'-b^\perp\\
  [I-(\proj0_{\mathsf A_B}\otimes P_{\rm L})]z
 \end{pmatrix}
 =\gamma\psi
\]
by the definition of \(\gamma\) in \eqref{eq:local-catalyst}.  Hence
\begin{equation}
 G\bigl[(\ket0_{\mathsf E}\otimes\psi)\oplus\Omega\gamma\psi\bigr]
 =W_\delta\psi\oplus\gamma\psi.
 \label{eq:G-on-post-query}
\end{equation}

It remains to prove the catalyst bound \eqref{eq:local-catalyst-bound}.
Using the first formula for \(\gamma\psi\) in \eqref{eq:local-catalyst}, we
compute the squared norms of its three direct-sum entries.  Since
\(U_H^\dagger=U_H\) and \(U_H^2=I\),
\((I-iU_H)(I+iU_H)=2I\).  Hence the first entry satisfies
\[
 \left\|\kappa(I+iU_H)
 (\ket0_{\mathsf A_H}\otimes I_{\mathsf S})y\right\|^2
 =2\kappa^2\norm{y}^2.
\]

\Needspace{7\baselineskip}
For the second entry, unitarity of \(U_B\) and the projected unitary
encoding of \(\bar B\) in \eqref{eq:B-encoding} give
\begin{align*}
 \left\langle
 \ket0_{\mathsf A_B}\ket0_{\mathsf E}\otimes y,\,
 U_B^\dagger(\ket0_{\mathsf A_B}\otimes\bar B y)\right\rangle
 &=\left\langle
 U_B(\ket0_{\mathsf A_B}\ket0_{\mathsf E}\otimes y),\,
 \ket0_{\mathsf A_B}\otimes\bar B y\right\rangle\\
 &=\left\langle
 (\bra0_{\mathsf A_B}\otimes P_{\rm L})
 U_B(\ket0_{\mathsf A_B}\ket0_{\mathsf E}\otimes y),\,
 \bar B y\right\rangle
 =\norm{\bar B y}^2.
\end{align*}
Together with \(\norm{U_B^\dagger
(\ket0_{\mathsf A_B}\otimes\bar B y)}=\norm{\bar B y}\), we obtain
\begin{align*}
\left\|\beta\left[(\ket0_{\mathsf A_B}\ket0_{\mathsf E}\otimes I_{\mathsf S})
 +U_B^\dagger(\ket0_{\mathsf A_B}\otimes\bar B)\right]y
 \right\|^2&=\beta^2\bigl(\norm{y}^2+\norm{\bar B y}^2
 +2\operatorname{Re}\langle\bar B y,\bar B y\rangle\bigr)\\
&=\beta^2\bigl(\norm{y}^2+3\norm{\bar B y}^2\bigr).
\end{align*}
For the third entry, the same projected unitary encoding gives
\[
 (\proj0_{\mathsf A_B}\otimes P_{\rm L})
 U_B(\ket0_{\mathsf A_B}\ket0_{\mathsf E}\otimes y)
 =\ket0_{\mathsf A_B}\otimes\bar B y.
\]
The two projections \(\proj0_{\mathsf A_B}\otimes P_{\rm L}\) and
\(I-\proj0_{\mathsf A_B}\otimes P_{\rm L}\) have orthogonal ranges.
Together with unitarity of \(U_B\), this gives
\begin{align*}
 &\left\|\beta\bigl[I-(\proj0_{\mathsf A_B}\otimes P_{\rm L})\bigr]
 U_B(\ket0_{\mathsf A_B}\ket0_{\mathsf E}\otimes y)\right\|^2\\
 &=\beta^2\left(
 \left\|U_B(\ket0_{\mathsf A_B}\ket0_{\mathsf E}\otimes y)\right\|^2
 -\left\|(\proj0_{\mathsf A_B}\otimes P_{\rm L})
 U_B(\ket0_{\mathsf A_B}\ket0_{\mathsf E}\otimes y)\right\|^2\right)\\
 &=\beta^2\bigl(\norm{y}^2-\norm{\bar B y}^2\bigr).
\end{align*}
The projected unitary encoding of \(\bar B\) in \eqref{eq:B-encoding} and
unitarity of \(U_B\) imply
\[
 \norm{\bar B}
 =\norm{(\bra0_{\mathsf A_B}\otimes P_{\rm L})U_B
 (\ket0_{\mathsf A_B}\ket0_{\mathsf E}\otimes I_{\mathsf S})}\leq1.
\]
Thus \(\norm{\bar B y}\leq\norm{y}\).
Lemma~\ref{lem:rational-step} gives \(\norm{R_\delta}\leq1\), so
\(y=R_\delta\psi\) satisfies \(\norm{y}\leq\norm{\psi}\).
Because the three entries of \(\gamma\psi\) belong to orthogonal direct-sum
summands, their squared norms add.  Combining the three calculations above
and using \(2\kappa^2+4\beta^2=\alpha\delta\) from
\eqref{eq:local-scalars}, we obtain
\begin{equation}
 \begin{aligned}
 \norm{\gamma\psi}^2
 &=2\kappa^2\norm{y}^2
 +2\beta^2\bigl(\norm{y}^2+\norm{\bar B y}^2\bigr)\\
 &\leq(2\kappa^2+4\beta^2)\norm{y}^2
 \leq\alpha\delta\norm{\psi}^2.
 \end{aligned}
\end{equation}
Since this holds for every \(\psi\), it proves the catalyst bound
\eqref{eq:local-catalyst-bound}.
\end{proof}

\section{A one-query transducer for the full product}
\label{sec:full-product}

We now compose \(J\) copies of the local transducer \(S_\delta\) into a one-query
transducer \(S\) implementing \(W_J\), the Stinespring isometry for
\(\Phi_\delta^J\).  We use sequential composition of
transducers~\cite[Prop.~9.9]{BelovsJefferyYolcu2024}, as in
\cite[Sec.~4.2]{ChenEtAl2026}, placing the catalyst for each step in a
separate direct-sum component of the private space.

Recall from Section~\ref{sec:rational} that
\[
 \begin{aligned}
 \mathcal R_{<j}&=\bigotimes_{r=0}^{j-1}\cH_{\mathsf E_r},
 &\cH_j&=\mathcal R_{<j}\otimes
 \biggl(\bigotimes_{r=j}^{J-1}\mathbb C\ket0_{\mathsf E_r}\biggr)
 \otimes\cH_{\mathsf S},\\
 W_0&=I_{\cH_{\mathsf S}},
 &W_{j+1}&=(I_{\mathcal R_{<j}}\otimes W_\delta)W_j:
 \cH_{\mathsf S}\longrightarrow\cH_{j+1}.
 \end{aligned}
\]
Let \(\cH_{{\rm priv},j}\) be a copy of \(\cH_{\rm priv}\) for step \(j\), and set
\[
 \cK_j:=\mathcal R_{<j}\otimes\cH_{{\rm priv},j},
 \qquad \cK:=\bigoplus_{j=0}^{J-1}\cK_j.
\]
Thus \(\cH_J\) is the public space and \(\cK\) is the private space of the
global transducer.  The unitary constructed below acts on
\begin{equation}
 \cH_J\oplus\cK.
 \label{eq:global-space}
\end{equation}
For an input \(\psi\), set \(\psi_j:=W_j\psi\in\cH_j\); in particular,
\(\psi=\psi_0\in\cH_0\subseteq\cH_J\).  After omitting the fixed factors
\(\bigotimes_{r=j}^{J-1}\ket0_{\mathsf E_r}\), we regard \(\psi_j\) as a
vector in \(\mathcal R_{<j}\otimes\cH_{\mathsf S}\) and define
\begin{equation}
 x_j:=(I_{\mathcal R_{<j}}\otimes\gamma)\psi_j\in\cK_j.
 \label{eq:step-state-catalyst}
\end{equation}
Since \(\cH_j\subseteq\cH_{j+1}\), the pair
\(\psi_j\oplus x_j\) belongs to \(\cH_{j+1}\oplus\cK_j\).  Omitting the
fixed one-dimensional factors
\(\bigotimes_{r=j+1}^{J-1}\mathbb C\ket0_{\mathsf E_r}\) identifies this
space as
\begin{equation}
\begin{aligned}
 \cH_{j+1}\oplus\cK_j
 &=\left[\mathcal R_{<j+1}\otimes
 \biggl(\bigotimes_{r=j+1}^{J-1}\mathbb C\ket0_{\mathsf E_r}\biggr)
 \otimes\cH_{\mathsf S}\right]
 \oplus\left[\mathcal R_{<j}\otimes\cH_{{\rm priv},j}\right]\\
 &\cong(\mathcal R_{<j+1}\otimes\cH_{\mathsf S})
 \oplus(\mathcal R_{<j}\otimes\cH_{{\rm priv},j})\\
 &=\mathcal R_{<j}\otimes\left[
 (\cH_{\mathsf E_j}\otimes\cH_{\mathsf S})
 \oplus\cH_{{\rm priv},j}\right].
\end{aligned}
\label{eq:step-active-space}
\end{equation}
Under this identification, define
\[
 \left.G_j\right|_{\cH_{j+1}\oplus\cK_j}
 :=I_{\mathcal R_{<j}}\otimes G,
 \qquad
 \left.G_j\right|_{(\cH_{j+1}\oplus\cK_j)^\perp}:=I.
\]
Figure~\ref{fig:gj-register-spaces} displays this decomposition and the
action of \(G_j\) on \(\cH_{j+1}\oplus\cK_j\).
To distinguish the private copy of \(\mathsf E\) from the public Kraus-label
register \(\mathsf E_j\), the figure denotes the former by \(\mathsf F\).

% Inserted in Sec. 4 immediately after the displayed definition of G_j.
% The manuscript's original Figure 1 is unchanged.
\begin{figure}[tbp]
\centering
\begingroup
\colorlet{GJlocal}{blue!60!black}
\begin{tikzpicture}[
x=1cm,y=1cm,font=\footnotesize,
every node/.style={inner sep=1pt},
reg/.style={draw=black!32,fill=black!2,rounded corners=2pt,
line width=.6pt,minimum height=.60cm,inner xsep=4pt},
local/.style={draw=black!35,fill=white,rounded corners=2pt,
line width=.6pt,minimum height=.65cm,minimum width=5.05cm},
brace/.style={decorate,decoration={brace,amplitude=4pt},line width=.55pt}
]

\node[anchor=west,font=\small] at (.05,1.22)
{$\left.G_j\right|_{\cH_{j+1}\oplus\cK_j}
=I_{\mathcal R_{<j}}\otimes{\color{GJlocal}G}$};

% Public and private groupings.  The local five-summand column spans both.
\draw[draw=black!24,rounded corners=3pt,line width=.5pt]
(1.32,.13) rectangle (16.05,-1.97);
\draw[draw=black!24,fill=black!1,rounded corners=3pt,line width=.5pt]
(1.32,-2.47) rectangle (9.93,-5.86);
\filldraw[fill=GJlocal!4,draw=GJlocal!65,rounded corners=3pt,line width=.8pt]
(4.35,.22) rectangle (9.92,-5.88);

\node[text=black!60,font=\scriptsize] at (2.88,.66) {unchanged};
\node[text=GJlocal,font=\small] at (7.135,.66) {$G$};
\node[text=black!60,font=\scriptsize] at (13.32,.66) {fixed public suffix};

\draw[brace] (1.12,-1.89)--(1.12,.05)
node[midway,left=7pt,align=center] {$\cH_{j+1}$};
\draw[brace] (1.12,-5.78)--(1.12,-2.55)
node[midway,left=7pt] {$\cK_j$};

% Five mutually orthogonal summands, in the order used in Figure 1.
\foreach \y/\name in {-.40/s,-1.44/\ell,-3.04/h,-4.19/z,-5.34/b} {
\node at (1.64,\y) {$\name$};
\node[reg,minimum width=1.86cm,text=black!65] at (2.88,\y)
{$\mathcal R_{<j}$};
\node at (4.08,\y) {$\otimes$};
}
\foreach \y in {-.92,-2.22,-3.615,-4.765} {
\node[text=black!50] at (1.64,\y) {$\oplus$};
\node[text=GJlocal!75] at (7.135,\y) {$\oplus$};
}
\node[local] at (7.135,-.40)
{$\mathbb C\ket0_{\mathsf E_j}\otimes\cH_{\mathsf S}$};
\node[local] at (7.135,-1.44)
{$\cH_{\rm L}\otimes\cH_{\mathsf S}$};
\node[local] at (7.135,-3.04)
{$\cH_{\mathsf A_H}\otimes\cH_{\mathsf S}$};
\foreach \y in {-4.19,-5.34} {
\node[local] at (7.135,\y)
{$\cH_{\mathsf A_B}\otimes\cH_{\mathsf F}\otimes\cH_{\mathsf S}$};
}
\foreach \y in {-.40,-1.44} {
\node at (10.48,\y) {$\otimes$};
\node[reg,minimum width=4.75cm,text=black!65] at (13.32,\y)
{$\mathbb C\ket{0\cdots0}_{\mathsf E_{j+1}\cdots\mathsf E_{J-1}}$};
}

% One brace selects the whole local direct sum, not a separate G per row.
\draw[brace,draw=GJlocal] (10.08,.14)--(10.08,-5.80);
\node[anchor=west,align=left,text=GJlocal] at (10.66,-2.83)
{unitary $G$};
\draw[-{Latex[length=1.6mm]},draw=GJlocal,line width=.65pt]
(10.53,-2.83)--(10.20,-2.83);
\node[anchor=west,text=black!65] at (10.66,-3.45)
{$\cH_{\mathsf F}\cong\cH_{\mathsf E}$};
\end{tikzpicture}
\endgroup

\caption{The five summands of $\cH_{j+1}\oplus\cK_j$.  The highlighted
column is the space on which the unitary $G$ constructed in
Lemma~\ref{lem:local-step} acts; its action is shown in
Figure~\ref{fig:local-catalyst-action}.  The unitary $G_j$ applies $G$ to this
column and leaves $\mathcal R_{<j}$ unchanged.  The first row is $\cH_j$.
Fixed public suffixes are displayed after regrouping tensor factors;
$\mathsf F$ denotes the private copy of $\mathsf E$.}
\label{fig:gj-register-spaces}
\end{figure}

Since \(G\) is unitary, so is \(G_j\).  Tensoring the local transducer
identity \eqref{eq:local-transducer} with \(I_{\mathcal R_{<j}}\) gives
\begin{equation}
 \begin{aligned}
 G_j\bigl(\psi_j\oplus
 (I_{\mathcal R_{<j}}\otimes\Omega)x_j\bigr)
 &=(I_{\mathcal R_{<j}}\otimes W_\delta)\psi_j\oplus x_j\\
 &=\psi_{j+1}\oplus x_j.
 \end{aligned}
 \label{eq:local-step-with-labels}
\end{equation}

Define the global catalyst map by
\begin{equation}
 \Gamma:\cH_{\mathsf S}\longrightarrow\cK,
 \qquad
 \Gamma\psi:=\bigoplus_{j=0}^{J-1}x_j.
 \label{eq:global-catalyst}
\end{equation}

\begin{proposition}[One-query transducer for the full product]
\label{prop:full-product}
There is a unitary \(S\) using one controlled application of
\(\Omega\) such that
\begin{equation}
 S\bigl(\psi\oplus\Gamma\psi\bigr)
 =W_J\psi\oplus\Gamma\psi,
 \label{eq:global-transducer}
\end{equation}
Moreover,
\begin{equation}
 \Gamma^\dagger\Gamma\preceq\tau I.
 \label{eq:global-catalyst-bound}
\end{equation}
\end{proposition}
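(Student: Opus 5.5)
Take \(S:=G_{J-1}G_{J-2}\cdots G_0\,(I_{\cH_J}\oplus\Omega_{\cK})\).  Here \(\Omega_{\cK}\) applies \(I_{\mathcal R_{<j}}\otimes\Omega\) on each summand \(\cK_j\).  This is a single application of \(\Omega\) with the label registers and the direct-sum index \(\mathsf T\) acting only as controls/spectators, hence one controlled query.  Each \(G_j\) is oracle-independent and unitary, so \(S\) is unitary.  The plan is to check \eqref{eq:global-transducer} by tracking the state through the \(G_j\) in time order, and to obtain \eqref{eq:global-catalyst-bound} by summing the local catalyst bound \eqref{eq:local-catalyst-bound} along the trajectory.

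\textbf{Transducer identity.}  After the query the state is \(\psi_0\oplus\bigoplus_j(I\otimes\Omega)x_j\).  I will prove by induction on \(j\) that after \(G_{j-1}\cdots G_0\) the state is
\[
\psi_j\oplus\bigoplus_{r<j}x_r\oplus\bigoplus_{r\ge j}(I\otimes\Omega)x_r .
\]
For the inductive step, \(G_j\) acts as \(I_{\mathcal R_{<j}}\otimes G\) on \(\cH_{j+1}\oplus\cK_j\) and as the identity elsewhere.  The current public vector \(\psi_j\) lies in \(\cH_j\subseteq\cH_{j+1}\), and the \(\cK_j\) component is \((I\otimes\Omega)x_j\).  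By \eqref{eq:local-step-with-labels}, this pair is mapped to \(\psi_{j+1}\oplus x_j\).  The other private components \(\cK_r\), \(r\ne j\), lie in \((\cH_{j+1}\oplus\cK_j)^\perp\), since the summands of \(\cK\) are orthogonal to each other and to \(\cH_J\).  They are therefore untouched.  At \(j=J\) the state is \(W_J\psi\oplus\Gamma\psi\).

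\textbf{Main point of care.}  The identification \eqref{eq:step-active-space} must be compatible with the earlier steps.  Specifically, \(\psi_j\) viewed in \(\cH_{j+1}\) carries \(\ket0_{\mathsf E_j}\), so it sits exactly in the first summand \(\mathcal R_{<j}\otimes(\ket0_{\mathsf E_j}\otimes\cH_{\mathsf S})\) on which \(G\) expects its public input.  The local identity \eqref{eq:local-transducer} holds for every \(\psi\in\cH_{\mathsf S}\) and is linear.  Tensoring it with \(I_{\mathcal R_{<j}}\) therefore extends it to arbitrary (entangled) \(\psi_j\in\mathcal R_{<j}\otimes\cH_{\mathsf S}\), with \(x_j=(I\otimes\gamma)\psi_j\).

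\textbf{Catalyst bound.}  The summands \(\cK_j\) are orthogonal, so
\[
\|\Gamma\psi\|^2=\sum_j\|(I\otimes\gamma)\psi_j\|^2 .
\]
Lemma~\ref{lem:local-step} gives \(I\otimes\gamma^\dagger\gamma\preceq\alpha\delta I\), hence \(\|(I\otimes\gamma)\psi_j\|^2\le\alpha\delta\|\psi_j\|^2\).  Since \(W_j\) is an isometry (a product of isometries, by Lemma~\ref{lem:rational-step}), \(\|\psi_j\|^2=\|\psi\|^2\).  Summing over \(J\) steps gives
\[
\|\Gamma\psi\|^2\le J\alpha\delta\|\psi\|^2=\tau\|\psi\|^2,
\]
which is \eqref{eq:global-catalyst-bound}.
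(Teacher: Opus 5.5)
Your proposal is correct and follows essentially the same route as the paper. Both use the factorization $S=(G_{J-1}\cdots G_0)Q$ and the same induction carrying $\psi_\ell$ and the restored catalysts $x_r$ through the time-ordered updates via the tensored local identity. The catalyst bound is likewise obtained from orthogonality of the $\cK_j$, the local bound $\gamma^\dagger\gamma\preceq\alpha\delta I$, and the isometry property of $W_j$.
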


\begin{proof}
Define
\begin{equation}
 Q:=I_{\cH_J}\oplus
 \bigoplus_{j=0}^{J-1}(I_{\mathcal R_{<j}}\otimes\Omega).
 \label{eq:global-query}
\end{equation}
On each \(\cK_j\), the factor
\(I_{\mathcal R_{<j}}\otimes\Omega\) applies \(\Omega\) to
\(\cH_{{\rm priv},j}\) and leaves the Kraus-label registers
\(\mathsf E_0,\ldots,\mathsf E_{j-1}\) unchanged.  Thus \(Q\) is one
controlled application of \(\Omega\) to the private space and counts as
one query under the convention in Section~\ref{sec:model}.  Set
\begin{equation}
 S:=(G_{J-1}\cdots G_0)Q.
 \label{eq:global-transducer-factorization}
\end{equation}
Since \(Q\) and each \(G_j\) are unitary, \(S\) is unitary.
Applying \(Q\) to the public input and the catalyst gives
\[
 Q(\psi\oplus\Gamma\psi)
 =\psi\oplus\bigoplus_{j=0}^{J-1}
 (I_{\mathcal R_{<j}}\otimes\Omega)x_j.
\]
We prove by induction on \(\ell\) that, for \(0\leq\ell\leq J\),
\begin{equation}
\begin{aligned}
 (G_{\ell-1}\cdots G_0)
 Q(\psi\oplus\Gamma\psi)
 &=\psi_\ell\oplus\left[
 \bigoplus_{j<\ell}x_j
 \oplus\bigoplus_{j\geq\ell}
 (I_{\mathcal R_{<j}}\otimes\Omega)x_j\right].
\end{aligned}
 \label{eq:global-induction-state}
\end{equation}
At \(\ell=0\), the empty product is the identity and \(\psi_0=\psi\), so the
induction claim follows from the definition of \(Q\).  Assume it holds for
\(\ell<J\).  Applying the local update
\eqref{eq:local-step-with-labels} at \(j=\ell\), while \(G_\ell\) acts as the
identity on \(\cK_j\) for \(j\ne\ell\), gives
\[
\begin{aligned}
&G_\ell\left\{\psi_\ell\oplus\left[
 \bigoplus_{j<\ell}x_j\oplus
 (I_{\mathcal R_{<\ell}}\otimes\Omega)x_\ell\oplus
 \bigoplus_{j>\ell}(I_{\mathcal R_{<j}}\otimes\Omega)x_j\right]\right\}\\
&=\psi_{\ell+1}\oplus\left[
 \bigoplus_{j<\ell}x_j\oplus x_\ell\oplus
 \bigoplus_{j>\ell}(I_{\mathcal R_{<j}}\otimes\Omega)x_j\right]\\
&=\psi_{\ell+1}\oplus\left[
 \bigoplus_{j<\ell+1}x_j\oplus
 \bigoplus_{j\geq\ell+1}(I_{\mathcal R_{<j}}\otimes\Omega)x_j\right].
\end{aligned}
\]
This proves the induction step.
At \(\ell=J\), the right-hand side is
\(\psi_J\oplus\Gamma\psi\).  Since \(\psi_J=W_J\psi\), this proves
the global transducer identity \eqref{eq:global-transducer}.

For every \(j\), the local catalyst bound
\eqref{eq:local-catalyst-bound} and the fact that \(W_j\) is an isometry give
\[
 \norm{x_j}^2
 =\langle\psi_j,
 (I_{\mathcal R_{<j}}\otimes\gamma^\dagger\gamma)\psi_j\rangle
 \leq\alpha\delta\norm{\psi_j}^2
 =\alpha\delta\norm\psi^2.
\]
Since \(\Gamma\psi=\bigoplus_{j=0}^{J-1}x_j\), the components \(x_j\) are
orthogonal, and hence
\[
 \norm{\Gamma\psi}^2
 =\sum_{j=0}^{J-1}\norm{x_j}^2
 \leq J\alpha\delta\norm\psi^2
 =\tau\norm\psi^2.
\]
Since this holds for every \(\psi\), it proves the global catalyst bound
\eqref{eq:global-catalyst-bound}.
\end{proof}

\section{Reuse without preparing the catalyst}
\label{sec:reuse}

The global transducer identity \eqref{eq:global-transducer} assumes that the
input-dependent catalyst \(\Gamma\psi\) is supplied, so it does not implement \(W_J\) alone.  We remove this requirement using the
reuse circuit of
Belovs, Jeffery, and Yolcu~\cite[Theorem~3.2]{BelovsJefferyYolcu2024},
following the analysis in~\cite[Sec.~5]{ChenEtAl2026}.  For an integer
\(N\geq1\), the
circuit distributes the amplitude of the public input uniformly over \(N\)
mutually orthogonal copies of \(\cH_J\), applies \(S\) successively to them
while passing the private output of each call to the next, and finally
recombines the public outputs.  We initialize the shared private input to the
zero vector in \(\cK\) and call \(N\) the reuse length.  We first describe
the reuse unitary \(U_N\) and show that it is a projected unitary encoding of
\(P_N\).  We then derive an expression for the approximation error
\(W_J-P_N\).

For a fixed reuse length \(N\), introduce a register \(\mathsf M\) with basis
states \(\ket\ell_{\mathsf M}\), \(0\leq\ell<N\).  The reuse circuit acts on
\[
 (\cH_{\mathsf M}\otimes\cH_J)\oplus\cK,
\]
where \(\mathsf M\) labels \(N\) copies of the public space and the single
private space \(\cK\) is shared by all \(N\) calls.  We use the following
input and output isometries for the projected unitary encoding:
\[
 \begin{aligned}
 \jmath_{\rm in}\psi
 &=\bigl(\ket0_{\mathsf M}\ket0_{\mathsf E_0}\cdots
 \ket0_{\mathsf E_{J-1}}\otimes\psi\bigr)\oplus0,
 \qquad \psi\in\cH_{\mathsf S},\\
 \jmath_{\rm out}\varphi
 &=(\ket0_{\mathsf M}\otimes\varphi)\oplus0,
 \qquad \varphi\in\cH_J.
 \end{aligned}
\]
Fix \(\psi\in\cH_{\mathsf S}\) and set
\(\psi_0:=\ket0_{\mathsf E_0}\cdots
\ket0_{\mathsf E_{J-1}}\otimes\psi\in\cH_0\).  Thus the circuit input is
\(\jmath_{\rm in}\psi=(\ket0_{\mathsf M}\otimes\psi_0)\oplus0\).
To prepare the public input uniformly over the \(N\) copies, choose a
unitary \(F_N\) satisfying
\[
 F_N\ket0_{\mathsf M}
 =\frac1{\sqrt N}\sum_{\ell=0}^{N-1}\ket\ell_{\mathsf M}.
\]
Applying \(F_N\) to \(\mathsf M\) in \(\jmath_{\rm in}\psi\) prepares
\[
 \left(\frac1{\sqrt N}\sum_{\ell=0}^{N-1}
 \ket\ell_{\mathsf M}\otimes\psi_0\right)\oplus z_0,
 \qquad z_0=0.
\]
At the \(\ell\)-th call, \(S\) acts on the public component labeled \(\ell\)
together with the shared private space \(\cK\).  Denote the public output in
the copy labeled \(\ell\) by \(y_\ell\) and the updated private component by
\(z_{\ell+1}\).  The complete state then changes as follows:
\[
 \begin{aligned}
 &\left(\sum_{r<\ell}\ket r_{\mathsf M}\otimes y_r
 +\frac1{\sqrt N}\sum_{r=\ell}^{N-1}
 \ket r_{\mathsf M}\otimes\psi_0\right)\oplus z_\ell\longmapsto
 \left(\sum_{r\leq\ell}\ket r_{\mathsf M}\otimes y_r
 +\frac1{\sqrt N}\sum_{r=\ell+1}^{N-1}
 \ket r_{\mathsf M}\otimes\psi_0\right)\oplus z_{\ell+1}.
 \end{aligned}
\]
The input to the two summands on which \(S\) acts is
\(N^{-1/2}\psi_0\oplus z_\ell\).  Hence the two updated components satisfy
\begin{equation}
 y_\ell\oplus z_{\ell+1}
 =S\bigl(N^{-1/2}\psi_0\oplus z_\ell\bigr),
 \qquad 0\leq\ell<N.
 \label{eq:reuse-recurrence}
\end{equation}
After the \(N\) calls, applying \(F_N^\dagger\) to \(\mathsf M\) maps
\[
 \left(\sum_{\ell=0}^{N-1}\ket\ell_{\mathsf M}\otimes y_\ell\right)
 \oplus z_N
 \longmapsto
 \left((F_N^\dagger\otimes I_{\cH_J})
 \sum_{\ell=0}^{N-1}\ket\ell_{\mathsf M}\otimes y_\ell\right)
 \oplus z_N.
\]
Let \(U_N\) denote the unitary implemented on this space by the reuse circuit,
including \(F_N\) and \(F_N^\dagger\).  Its projected block with respect to
\(\jmath_{\rm in}\) and \(\jmath_{\rm out}\) acts as
\begin{equation}
 \begin{aligned}
 \jmath_{\rm out}^\dagger U_N\jmath_{\rm in}\psi
 &=\left(\bra0_{\mathsf M}F_N^\dagger\otimes I_{\cH_J}\right)
 \sum_{\ell=0}^{N-1}\ket\ell_{\mathsf M}\otimes y_\ell\\
 &=\frac1{\sqrt N}\sum_{\ell=0}^{N-1}y_\ell
 =:P_N\psi.
 \end{aligned}
 \label{eq:reuse-block-encoding}
\end{equation}
Thus \((U_N,\jmath_{\rm in},\jmath_{\rm out})\) is a projected unitary
encoding of \(P_N:\cH_{\mathsf S}\to\cH_J\).  The circuit uses \(N\) calls to
\(S\).

We next derive an exact expression for the error of \(P_N\) as an
approximation to \(W_J\).  To this end, write \(S\) in blocks with respect to
\(\cH_J\oplus\cK\):
\begin{equation}
 S=
 \begin{pmatrix}
  S_{00}&S_{01}\\
  S_{10}&S_{11}
 \end{pmatrix},
 \label{eq:transducer-blocks}
\end{equation}
where
\[
 S_{00}:\cH_J\to\cH_J,\qquad S_{01}:\cK\to\cH_J,\qquad
 S_{10}:\cH_J\to\cK,\qquad S_{11}:\cK\to\cK.
\]
Comparing the public and private components in the global transducer identity
\eqref{eq:global-transducer} gives
\begin{equation}
 W_J\psi=S_{00}\psi_0+S_{01}\Gamma\psi,
 \qquad
 S_{10}\psi_0=(I-S_{11})\Gamma\psi.
 \label{eq:block-transducer-identity}
\end{equation}
Define the averaging polynomial
\[
 g_N(\zeta):=\frac1N\sum_{k=0}^{N-1}\zeta^k.
\]

\begin{lemma}[Approximation error of \(P_N\)]
\label{lem:reuse}
For every \(N\geq1\),
\begin{equation}
 W_J-P_N=S_{01}g_N(S_{11})\Gamma.
 \label{eq:reuse-error}
\end{equation}
\end{lemma}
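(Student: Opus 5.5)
We will solve the reuse recurrence \eqref{eq:reuse-recurrence} explicitly in terms of the blocks \eqref{eq:transducer-blocks}, and then eliminate \(S_{00}\) and \(S_{10}\) using the block form \eqref{eq:block-transducer-identity} of the transducer identity. Write \(\psi_0\) for the padded input as in the reuse construction, and set \(c=N^{-1/2}\). In block form, the recurrence reads
\[
 y_\ell=cS_{00}\psi_0+S_{01}z_\ell,\qquad
 z_{\ell+1}=cS_{10}\psi_0+S_{11}z_\ell,\qquad z_0=0.
\]
By induction on \(\ell\), this gives \(z_\ell=c\sum_{k=0}^{\ell-1}S_{11}^kS_{10}\psi_0\).

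The key step is to substitute \(S_{10}\psi_0=(I-S_{11})\Gamma\psi\) from \eqref{eq:block-transducer-identity}. The sum then telescopes to
\[
 z_\ell=c\,(I-S_{11}^\ell)\Gamma\psi .
\]
This expresses the private state after \(\ell\) calls without any inverse of \(I-S_{11}\). That matters, since \(I-S_{11}\) need not be invertible, and this is the only point requiring care. Substituting \(S_{00}\psi_0=W_J\psi-S_{01}\Gamma\psi\), also from \eqref{eq:block-transducer-identity}, we then obtain
\[
 y_\ell=c\bigl(W_J\psi-S_{01}\Gamma\psi\bigr)+cS_{01}(I-S_{11}^\ell)\Gamma\psi
 =c\,W_J\psi-c\,S_{01}S_{11}^\ell\Gamma\psi .
\]

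Finally, by \eqref{eq:reuse-block-encoding}, \(P_N\psi=c\sum_{\ell=0}^{N-1}y_\ell\). Since \(c^2N=1\), this equals
\[
 W_J\psi-S_{01}\Bigl(\tfrac1N\sum_{\ell=0}^{N-1}S_{11}^\ell\Bigr)\Gamma\psi
 =W_J\psi-S_{01}g_N(S_{11})\Gamma\psi .
\]
Because this holds for every \(\psi\), the operator identity \eqref{eq:reuse-error} follows. Beyond tracking the identification of \(\psi\) with \(\psi_0\in\cH_0\subseteq\cH_J\), the argument is purely algebraic, and the telescoping substitution is the main (mild) obstacle.
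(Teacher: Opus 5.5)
Your proposal is correct and follows the paper's proof essentially step for step. You solve the private recurrence, telescope it via \(S_{10}\psi_0=(I-S_{11})\Gamma\psi\) to get \(z_\ell=N^{-1/2}(I-S_{11}^\ell)\Gamma\psi\), and then eliminate \(S_{00}\psi_0\) using the public half of the block transducer identity. Computing each \(y_\ell\) in closed form before summing, rather than summing first as the paper does, is only a cosmetic difference.
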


\begin{proof}
Following the reuse analysis in
\cite[Sec.~5, Lemma~4]{ChenEtAl2026}, we first solve the recurrence for the
private state \(z_\ell\) and then substitute the result into the expression
for \(P_N\) in \eqref{eq:reuse-block-encoding}.
Applying the block decomposition \eqref{eq:transducer-blocks} to the
\(\ell\)-th call in \eqref{eq:reuse-recurrence} gives
\[
 y_\ell=N^{-1/2}S_{00}\psi_0+S_{01}z_\ell,
 \qquad
 z_{\ell+1}=N^{-1/2}S_{10}\psi_0+S_{11}z_\ell.
\]
Iterating the recurrence for \(z_\ell\) from \(z_0=0\) and using
\(S_{10}\psi_0=(I-S_{11})\Gamma\psi\) from
\eqref{eq:block-transducer-identity} gives
\[
 \begin{aligned}
 z_\ell
 &=\frac1{\sqrt N}\sum_{r=0}^{\ell-1}S_{11}^rS_{10}\psi_0\\
 &=\frac1{\sqrt N}\sum_{r=0}^{\ell-1}
 S_{11}^r(I-S_{11})\Gamma\psi\\
 &=\frac1{\sqrt N}(I-S_{11}^\ell)\Gamma\psi.
 \end{aligned}
\]
We now substitute this expression for \(z_\ell\) into
\eqref{eq:reuse-block-encoding}.  Using
\(W_J\psi=S_{00}\psi_0+S_{01}\Gamma\psi\) from
\eqref{eq:block-transducer-identity}, we obtain
\begin{align*}
 P_N\psi
 &=\frac1{\sqrt N}\sum_{\ell=0}^{N-1}
   \left(N^{-1/2}S_{00}\psi_0+S_{01}z_\ell\right)\\
 &=S_{00}\psi_0+\frac1N\sum_{\ell=0}^{N-1}
 S_{01}(I-S_{11}^\ell)\Gamma\psi\\
 &=S_{00}\psi_0+S_{01}\Gamma\psi
   -S_{01}\left(\frac1N\sum_{\ell=0}^{N-1}S_{11}^\ell\right)
   \Gamma\psi\\
 &=W_J\psi-S_{01}g_N(S_{11})\Gamma\psi.
\end{align*}
Since this equality holds for every \(\psi\),
\(P_N=W_J-S_{01}g_N(S_{11})\Gamma\), which is equivalent to
\eqref{eq:reuse-error}.
\end{proof}

\section{Combining reuse lengths}
\label{sec:lcu}

For every \(N\), the reuse construction in Section~\ref{sec:reuse} gives a
projected unitary encoding of \(P_N\) using \(N\) calls to \(S\).  The
error identity \eqref{eq:reuse-error} expresses the approximation error
\(W_J-P_N\) through \(g_N(S_{11})\), the average of
\(I,S_{11},\ldots,S_{11}^{N-1}\).  The norm of this average need not decrease
rapidly with \(N\).  However, taking a linear combination
\(\sum_N\lambda_NP_N\) with \(\sum_N\lambda_N=1\) replaces
\(g_N(S_{11})\) in the error formula by
\(\sum_N\lambda_Ng_N(S_{11})\), allowing cancellation among different reuse
lengths.  We first construct a projected unitary encoding of
\(\sum_N\lambda_NP_N\) for arbitrary coefficients using an LCU circuit and
then choose the coefficients so that \(\sum_N\lambda_Ng_N(S_{11})\) has
rapidly decaying norm.

\subsection{A general LCU construction}
By \eqref{eq:reuse-block-encoding}, each reuse unitary \(U_N\) is a projected
unitary encoding of \(P_N\).  Coherently selecting these unitaries
gives the following construction.

\begin{lemma}[LCU of reuse maps]
\label{lem:lcu-reuse-maps}
Fix \(R\geq1\) and coefficients \(\lambda_1,\ldots,\lambda_R\).  Set
\[
 \Lambda:=\sum_{N=1}^R|\lambda_N|>0,
 \qquad
 \widetilde W(\lambda):=\sum_{N=1}^R\lambda_NP_N.
\]
There is a normalization-\(\Lambda\) projected unitary encoding
\((\mathcal V,\iota_{\rm in},\iota_{\rm out})\) of
\(\widetilde W(\lambda)\), where \(\mathcal V\) acts on a Hilbert space
\(\cH_{\rm work}\),
\(\iota_{\rm in}:\cH_{\mathsf S}\to\cH_{\rm work}\), and
\(\iota_{\rm out}:\cH_J\to\cH_{\rm work}\).  Thus
\[
 \iota_{\rm out}^\dagger\mathcal V\iota_{\rm in}
 =\frac{\widetilde W(\lambda)}{\Lambda}.
\]
The unitary \(\mathcal V\) uses at most \(R\) calls to \(S\).
\end{lemma}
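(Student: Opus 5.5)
The plan is the standard LCU construction, adapted so that the total number of calls to \(S\) is \(R\) rather than \(\sum_N N\). Introduce a selection register \(\mathsf N\) with basis \(\ket N_{\mathsf N}\), \(1\leq N\leq R\). Let \(V_{\rm prep}\) be a state-preparation unitary with
\[
 V_{\rm prep}\ket0_{\mathsf N}=\Lambda^{-1/2}\sum_{N}\sqrt{|\lambda_N|}\ket N_{\mathsf N}.
\]
Absorb the phases of \(\lambda_N\) into a diagonal unitary \(D=\sum_N(\lambda_N/|\lambda_N|)\proj N_{\mathsf N}\), with the phase set to one when \(\lambda_N=0\).

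Next build a controlled reuse unitary \(\mathrm{SEL}=\sum_N\proj N_{\mathsf N}\otimes U_N\) acting on a common work space. To do this, fix the register \(\mathsf M\) of dimension \(R\) and embed every \(U_N\) into \((\cH_{\mathsf M}\otimes\cH_J)\oplus\cK\).

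1. Replace \(F_N\) by a controlled preparation \(F_{\mathsf N}\) satisfying \(F_{\mathsf N}\ket N\ket0_{\mathsf M}=\ket N N^{-1/2}\sum_{\ell<N}\ket\ell_{\mathsf M}\).
2. Perform the \(R\) sequential calls \(\ell=0,\dots,R-1\). Call \(\ell\) applies \(S\) to the public copy labeled \(\ell\) together with \(\cK\), controlled on \(N>\ell\), i.e. on a comparator between \(\mathsf N\) and the classical constant \(\ell\).
3. Apply \(F_{\mathsf N}^\dagger\).

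For each fixed \(N\), this circuit acts on the branch \(\ket N_{\mathsf N}\) exactly as \(U_N\): the calls with \(\ell\geq N\) are skipped, and the copies \(\ell\geq N\) of \(\cH_J\) carry zero amplitude. The branch therefore reproduces the recurrence \eqref{eq:reuse-recurrence} and the projected block \eqref{eq:reuse-block-encoding}. Since each controlled call to \(S\) counts as a call, \(\mathrm{SEL}\) uses \(R\) calls.

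Set \(\mathcal V=(V_{\rm prep}^\dagger\otimes I)\,\mathrm{SEL}\,(DV_{\rm prep}\otimes I)\), with
\[
 \iota_{\rm in}\psi=\ket0_{\mathsf N}\otimes\jmath_{\rm in}\psi,\qquad \iota_{\rm out}\varphi=\ket0_{\mathsf N}\otimes\jmath_{\rm out}\varphi .
\]
Here \(\jmath_{\rm in},\jmath_{\rm out}\) are defined with the dimension-\(R\) register \(\mathsf M\). The computation is the usual one. The prepare step produces \(\Lambda^{-1/2}\sum_N(\lambda_N/\sqrt{|\lambda_N|})\ket N\otimes\jmath_{\rm in}\psi\). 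Next, \(\mathrm{SEL}\) applies \(U_N\) on branch \(N\), and \(\jmath_{\rm out}^\dagger\) extracts \(P_N\psi\) there. Finally, \(\bra0V_{\rm prep}^\dagger\) contributes \(\Lambda^{-1/2}\sqrt{|\lambda_N|}\) on that branch. Altogether,
\[
 \iota_{\rm out}^\dagger\mathcal V\iota_{\rm in}\psi=\Lambda^{-1}\sum_N\lambda_NP_N\psi=\widetilde W(\lambda)\psi/\Lambda .
\]

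The main point to check carefully is that the embedding of \(U_N\) into the larger register is consistent. The projection \(\jmath_{\rm out}^\dagger\) onto \(\ket0_{\mathsf M}\) after \(F_{\mathsf N}^\dagger\) must return \(N^{-1/2}\sum_{\ell<N}y_\ell\) with no contribution from the copies \(\ell\geq N\). This holds because those copies have zero amplitude throughout the branch: they start at zero and \(S\) is never applied to them. One must also check that skipping the calls with \(\ell\geq N\) leaves the shared private register untouched, which holds by construction. Everything else is unitarity of the components and linearity.
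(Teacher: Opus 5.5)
Your proposal is correct and follows essentially the same route as the paper: a PREPARE--SELECT--UNPREPARE LCU over the reuse unitaries $U_N$, with SELECT implemented by a controlled $F_N$ followed by $R$ sequential calls to $S$, each conditioned on $\ell<N$, so that only $R$ calls are needed in total. The differences are cosmetic: you apply the sign phases through a separate diagonal $D$ rather than inside SELECT, and your reference state $\ket0_{\mathsf N}$ should be added to the register labels (the paper uses labels $0,\ldots,R$); neither change affects the computation.
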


\begin{proof}
We apply the LCU construction for projected unitary encodings
\cite[Lemma~52 of the full version]{GilyenEtAl2019} to the reuse unitaries
\(U_1,\ldots,U_R\).  Use a \(\lceil\log_2 R\rceil\)-qubit reuse register
\(\mathsf M\), large enough for every reuse length \(N\leq R\).  A
\(\lceil\log_2(R+1)\rceil\)-qubit coefficient
register \(\mathsf N\) stores labels \(0,1,\ldots,R\), with
\(N\geq1\) selecting reuse length \(N\).
The LCU circuit acts on
\[
 \cH_{\rm work}:=\cH_{\mathsf N}\otimes
 \bigl[(\cH_{\mathsf M}\otimes\cH_J)\oplus\cK\bigr].
\]
Adjoin \(\ket0_{\mathsf N}\) to the input and output isometries in
\eqref{eq:reuse-block-encoding}, and denote the resulting isometries by
\(\iota_{\rm in}:\cH_{\mathsf S}\to\cH_{\rm work}\) and
\(\iota_{\rm out}:\cH_J\to\cH_{\rm work}\):
\begin{equation}
 \begin{aligned}
 \iota_{\rm in}\psi
 &=\ket0_{\mathsf N}\otimes\jmath_{\rm in}\psi
 =\ket0_{\mathsf N}\otimes
 \bigl[(\ket0_{\mathsf M}\ket0_{\mathsf E_0}\cdots
 \ket0_{\mathsf E_{J-1}}\otimes\psi)\oplus0\bigr],\\
 \iota_{\rm out}\varphi
 &=\ket0_{\mathsf N}\otimes\jmath_{\rm out}\varphi
 =\ket0_{\mathsf N}\otimes
 \bigl[(\ket0_{\mathsf M}\otimes\varphi)\oplus0\bigr].
 \end{aligned}
 \label{eq:lcu-embeddings}
\end{equation}
Choose a unitary \(F_\lambda\) on \(\mathsf N\) satisfying
\begin{equation}
 F_\lambda\ket0_{\mathsf N}
 =\sum_{N=1}^R\sqrt{\frac{|\lambda_N|}{\Lambda}}\ket N_{\mathsf N}.
 \label{eq:lcu-preparation}
\end{equation}
For \(\chi\in(\cH_{\mathsf M}\otimes\cH_J)\oplus\cK\), define SELECT by
\begin{equation}
 \operatorname{SELECT}(\ket N_{\mathsf N}\otimes\chi)
 =\operatorname{sgn}(\lambda_N)\ket N_{\mathsf N}\otimes U_N\chi.
 \label{eq:lcu-select}
\end{equation}
Preparing the coefficients, applying SELECT, and undoing the preparation
defines the LCU unitary
\begin{equation}
 \mathcal V:=(F_\lambda^\dagger\otimes I)
 \operatorname{SELECT}(F_\lambda\otimes I).
 \label{eq:lcu-unitary}
\end{equation}
For \(\psi\in\cH_{\mathsf S}\), expand \(\mathcal V\) using
\eqref{eq:lcu-embeddings}--\eqref{eq:lcu-unitary} and substitute
\(\jmath_{\rm out}^\dagger U_N\jmath_{\rm in}=P_N\) from
\eqref{eq:reuse-block-encoding}:
\[
 \begin{aligned}
 \iota_{\rm out}^\dagger\mathcal V\iota_{\rm in}\psi
 &=\bigl(\bra0_{\mathsf N}F_\lambda^\dagger\otimes
   \jmath_{\rm out}^\dagger\bigr)
   \sum_{\lambda_N\ne0}\sqrt{\frac{|\lambda_N|}{\Lambda}}\,
   \operatorname{sgn}(\lambda_N)\ket N_{\mathsf N}\otimes
   U_N\jmath_{\rm in}\psi\\
 &=\sum_{\lambda_N\ne0}\frac{|\lambda_N|}{\Lambda}
   \operatorname{sgn}(\lambda_N)
   \jmath_{\rm out}^\dagger U_N\jmath_{\rm in}\psi\\
 &=\sum_{\lambda_N\ne0}\frac{|\lambda_N|}{\Lambda}
   \operatorname{sgn}(\lambda_N)P_N\psi\\
 &=\frac1\Lambda\sum_{N=1}^R\lambda_NP_N\psi
 =\frac{\widetilde W(\lambda)}\Lambda\psi.
 \end{aligned}
\]
Since \(\psi\) is arbitrary,
\(\iota_{\rm out}^\dagger\mathcal V\iota_{\rm in}
=\widetilde W(\lambda)/\Lambda\), which proves the projected-unitary-encoding
identity.  The coefficient \(\ell_1\)-norm \(\Lambda\) is the LCU
normalization.

To implement SELECT coherently, condition the reuse circuit on the
coefficient label \(N\) stored in \(\mathsf N\).  First apply
\((F_N\otimes I_{\cH_J})\oplus I_{\cK}\).  For
\(\ell=0,\ldots,R-1\), apply \(S\) to
\((\ket\ell_{\mathsf M}\otimes\cH_J)\oplus\cK\), as in
\eqref{eq:reuse-recurrence}, conditioned on \(\ell<N\).  Finally apply
\((F_N^\dagger\otimes I_{\cH_J})\oplus I_{\cK}\) and multiply by
\(\operatorname{sgn}(\lambda_N)\).  This implements
\eqref{eq:lcu-select} with \(R\) controlled calls to \(S\), so
\(\mathcal V\) uses at most \(R\) calls to \(S\).  The gate complexity of
the oracle-independent operations is analyzed in
Section~\ref{sec:gate-complexity}.
\end{proof}

\subsection{Choosing the coefficients}
We now choose a linear combination that approximates \(W_J\) while keeping
the LCU normalization \(\Lambda\) bounded.  For coefficients satisfying
\(\sum_N\lambda_N=1\), summing the error identity
\eqref{eq:reuse-error} gives
\[
 W_J-\sum_N\lambda_NP_N
 =S_{01}\left[\sum_N\lambda_Ng_N(S_{11})\right]\Gamma.
\]
Since \(S\) is unitary, \(\|S_{01}\|\leq1\), and the global catalyst bound
\eqref{eq:global-catalyst-bound} gives \(\|\Gamma\|\leq\sqrt\tau\).
Therefore
\begin{equation}
 \left\|W_J-\sum_N\lambda_NP_N\right\|
 \leq\sqrt\tau
 \left\|\sum_N\lambda_Ng_N(S_{11})\right\|.
 \label{eq:weighted-reuse-error}
\end{equation}
Thus we seek coefficients for which
\(\sum_N\lambda_Ng_N(S_{11})\) has small norm and
\(\Lambda=\sum_N|\lambda_N|\) remains bounded.  The first condition controls the
approximation error in \eqref{eq:weighted-reuse-error}; the second keeps the
normalization in Lemma~\ref{lem:lcu-reuse-maps}, and hence the amplification
overhead in Section~\ref{sec:oaa}, constant.

For an integer \(q\geq1\), define
\begin{equation}
 p(\zeta):=\zeta^2(1+\zeta^2),\qquad
 \mathcal F_q(\zeta):=\left(\frac{p(\zeta)}2\right)^{2q}.
 \label{eq:Fq}
\end{equation}
Appendix~\ref{sec:private-block-proof} proves the following bound.

\begin{proposition}[Polynomial bound]
\label{prop:private-polynomial}
There is a universal constant \(C_0\) such that, for every positive
integer \(q\) satisfying \(\alpha\delta\leq1/2\), \(J\geq8q\), and
\(q\geq C_0\tau\),
\begin{equation}
 \left\|\mathcal F_q(S_{11})\right\|
 \leq\left(\frac{C_0\tau}{q}\right)^q.
 \label{eq:factorial-bound}
\end{equation}
\end{proposition}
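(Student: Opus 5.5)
We will follow the outline sketched in the overview. Write \(S_{11}=G_{11}\,\Omega_{\cK}\), where \(\Omega_{\cK}=\bigoplus_j(I_{\mathcal R_{<j}}\otimes\Omega)\) and \(G_{11}\) is the private block of \(G_{J-1}\cdots G_0\). Let \(S_{11}(0)\) be obtained by setting \(\kappa=\beta=0\) in every \(G_j\), and put \(\Delta:=S_{11}-S_{11}(0)\).

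\textbf{Step 1: the zero-step block.} At \(\kappa=\beta=0\) the matrix \(M\) becomes \(\operatorname{diag}(1,i,-1)\), so \(G\) is a permutation with phases. It acts on the private summands as \(i\) on \(h\), \(-1\) on \(b\), and exchanges parts of \(z\) between the public and private spaces. We will check directly from \eqref{eq:G-full-action} that \(S_{11}(0)\) is block diagonal in the step index \(\mathsf T\). On each block it has the form (phase diagonal)\(\cdot\Omega\), restricted to the private space with the public-exchanging part of \(z\) removed. On the \(U_H\) summand this gives \(iU_H\), and \((iU_H)^2=-I\). On the pair of \(U_B\) summands, the map swaps \(z\) and \(b\) (up to phase) while killing \(\proj0_{\mathsf A_B}\otimes P_{\rm L}\). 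A direct computation should then give \(S_{11}(0)^2=-\Pi\) for an orthogonal projection \(\Pi\), hence \(p(S_{11}(0))=0\). This is a finite algebraic check.

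\textbf{Step 2: structure of \(\Delta\).} We will expand \(M-M|_{0}\) to first order. The off-diagonal entries are \(O(\kappa)+O(\beta)=O(\sqrt{\alpha\delta})\) and the diagonal corrections are \(O(\mu)=O(\alpha\delta)\). Since \(G_{J-1}\cdots G_0\) acts in time order, a private component at step \(j\) can leak to the public \(\ket0_{\mathsf E_j}\) slot. That slot is then acted on by \(G_{j'}\) for \(j'>j\), which can move it into private step \(j'\). The leakage amplitude is \(O(\sqrt{\alpha\delta})\) per coupling, and each coupling may also write a nonzero label into \(\mathsf E_j\). So \(\Delta\) decomposes into a \(\mathsf T\)-diagonal part of norm \(O(\alpha\delta)\) and a strictly index-increasing part. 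The index-increasing part is a sum over pairs \(j<j'\) of terms with norm \(O(\alpha\delta)\) (two \(\sqrt{\cdot}\) factors). Equivalently it can be viewed as \(O(\sqrt{\alpha\delta})\) per move, with transits through intermediate no-jump steps contributing \(\|N_\delta\|\le1\)-type factors. These transits are the \(\ket0\)-slot passing \(M\) with the \((1,1)\) entry \((1-\mu)/(1+\mu)\).

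\textbf{Step 3: combinatorial bound.} We expand \(\mathcal F_q(S_{11})=2^{-2q}p(S_{11})^{2q}\). Each \(p(S_{11})\) is written as a sum of at most \(15\) words in \(S_{11}(0)\) and \(\Delta\), each containing at least one \(\Delta\). This gives at most \(15^{2q}\) words, each with at least \(2q\) factors \(\Delta\). In each word we expand every \(\Delta\) into its diagonal part and its jumps \(j\to j'\). Along a word the index sequence is nondecreasing in \(\{0,\dots,J-1\}\), so the number of jump patterns with \(r\) increasing moves is at most \(\binom{J+r}{r}\). A naive norm bound would give \(\sum_r\binom{J}{r}(\alpha\delta)^{r/2}\cdots\), which is too large. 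We therefore use orthogonality instead. An increasing move that writes a nonzero label \(k\) into \(\mathsf E_j\) produces outputs that are orthogonal for different \(j\), because the Kraus-label registers are never erased inside \(\cK\). Summing in squared norm, the \(r\) moves with labels contribute \((\sum_j\alpha\delta)^{r}/r!\le\tau^r/r!\) to the squared operator norm. Moves through the label-\(0\) channel carry a genuine factor \(\alpha\delta\) each (\(\kappa^2\) or \(\beta^2\)), so summing over \(j'\) gives at most \(\tau\) per move. This yields \(\tau^r/r!\) with no square root. Combining both cases, each word is bounded by \(\sum_{r}(C\tau)^{r}/r!\cdot(C\alpha\delta)^{(2q-r)_+/\,\cdot}\), and the requirement of \(\ge2q\) \(\Delta\)'s forces either \(r\gtrsim q\) or many \(O(\alpha\delta)\) diagonal factors. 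Using \(J\ge8q\) (so \(\alpha\delta\le\tau/(8q)\)) together with \(q\ge C_0\tau\), we expect a bound \((C\tau/q)^q\) per word, with the \(15^{2q}2^{-2q}\) prefactor absorbed into \(C_0\).

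The main obstacle is Step 3. The difficulty is turning the orthogonality heuristic into a rigorous operator-norm bound when the word mixes \(S_{11}(0)\), which preserves labels, with leakage terms. We would first try to bound \(\|\mathcal F_q(S_{11})\xi\|^2\) for a fixed input \(\xi\) by grouping terms according to the set of positions where new nonzero labels were created. This makes cross terms between different position sets vanish, after which we apply the triangle inequality within each group.
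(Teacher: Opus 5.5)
Your overall architecture matches the paper's. It uses $S_{11}(0)^2=-\Pi$ for a projection $\Pi$; your Step~1 is correct. It expands $p(S_{11})^{2q}$ into words with at least $2q$ factors $\Delta$; there are $3+15=18$ words per factor of $p$, not $15$, but this only changes constants. It then splits $\Delta$ into no-jump and jump parts, uses nondecreasing step indices, and gets orthogonality from the positions of nonzero labels.

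Step~2, however, misstates the size of the jump part. The component $(\bra0_{\mathsf A_B}\otimes P_{\rm L})z$ leaves $\cK_j$ for the public label slot of $\mathsf E_j$ with amplitude $1$; this exit is already present at $\kappa=\beta=0$. Only the re-entry into $\cK_k$ through the first column of $M$ costs $2\sqrt\mu/(1+\mu)$. Hence the jump blocks satisfy only $\|(\Delta_1)_{kj}\|\le\sqrt\theta\,d^{k-j-1}=O(\sqrt{\alpha\delta})$, not $O(\alpha\delta)$. The label-zero blocks do obey $\|(\Delta_0)_{kj}\|\le\theta d^{k-j-1}$ and $\|(\Delta_0)_{jj}\|\le1-d\le\theta\le2\alpha\delta$. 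Your Step~3 tacitly uses the correct $\sqrt{\alpha\delta}$ scaling, and this scaling is exactly why orthogonality is needed.

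The genuine gap is Step~3, which you leave open, and the mechanism you suggest for it fails as stated. Grouping all terms, for a fixed input $\xi$, by the set of positions where new labels were created does not kill cross terms. The input $\xi$ can have components in several $\cK_{j_0}$ that already carry nonzero labels below $j_0$. So the output does not reveal which labels are new once the number of jump factors varies. For instance, take $(\Delta_1)_{63}$ on a label-free component in $\cK_3$, and $(\Delta_0)_{65}$ on a component in $\cK_5$ that already has the same label at position $3$ and zeros elsewhere. Padded by the same diagonal factors, both land in the same label sector of $\cK_6$, although their new-position sets are $\{3\}$ and $\emptyset$.

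The paper fixes this by ordering the quantifiers.
\begin{enumerate}
\item After splitting each $\Delta$ into $\Delta_0+\Delta_1$, apply the triangle inequality over the resulting ordered products $T$. This fixes the numbers $r$ of $\Delta_1$ factors and $u$ of $\Delta_0$ factors.
\item Apply the triangle inequality over the tuple $b$ of $\Delta_0$ increments.
\item For fixed $b$, the pieces $T_{a,b}$, indexed by the tuple $a$ of $\Delta_1$ increments, have mutually orthogonal ranges. The new positions $s_1<\cdots<s_r$ are the $r$ largest nonzero positions of the output, since pre-existing labels sit below $j_0\le s_1$. Together with the output block index $j_L=:s_{r+1}$, they determine $a$ via $a_i=s_{i+1}-s_i-c_i$, where $c_i$ are the $\Delta_0$ increments after the $i$-th jump factor.
\item Each $T_{a,b}$ is block diagonal in $j_0$, so $\|T_{a,b}\|\le\theta^{r/2+u}$. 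Hence $\|T\|\le\binom{J+u}{u}\sqrt{\binom{J+r}{r}}\,\theta^{r/2+u}$.
\item With $J\theta\le2\tau$ and $r,u\le8q\le J$, this gives $\|T\|\le(C\tau/n)^n$ with $n=r/2+u\ge q$. This is the estimate your informal bound $\sum_r(C\tau)^r/r!\cdots$ was meant to supply.
\end{enumerate}
Summing over the at most $18^{2q}2^{8q}$ products and using $q\ge C_0\tau$ completes the proof.
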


Expressing \(\mathcal F_q\) directly as a linear combination of the
polynomials \(g_N\) yields a coefficient \(\ell_1\)-norm that grows with
\(q\), and hence a growing LCU normalization.  Following the smoothing
technique of \cite[Sec.~7.1]{ChenEtAl2026}, we instead define
\begin{equation}
 \mathcal Q_q(\zeta)=
 \mathcal F_q(\zeta)g_{12q}(\zeta)
 =\sum_k e_k\zeta^k.
 \label{eq:Qq}
\end{equation}
Since \(S_{11}\) is a contraction,
\begin{equation}
 \norm{\mathcal Q_q(S_{11})}
 \leq\norm{\mathcal F_q(S_{11})}
       \norm{g_{12q}(S_{11})}
  \leq\norm{\mathcal F_q(S_{11})}.
 \label{eq:smoothing-bound}
\end{equation}
Thus the additional factor \(g_{12q}\) does not increase the operator norm
that controls the approximation error in \eqref{eq:weighted-reuse-error}.
At the same time, multiplying by \(g_{12q}\) replaces the monomial
coefficients of \(\mathcal F_q\) by their averages.  This smoothing
makes the coefficient \(\ell_1\)-norm constant after the change of basis to
the polynomials \(g_N\), as shown below.  Set
\(e_k=0\) for \(k<0\) or \(k>\deg\mathcal Q_q\), and define
\begin{equation}
 \lambda_N=N(e_{N-1}-e_N).
 \label{eq:lambda}
\end{equation}

\begin{lemma}[Coefficient identity]
\label{lem:reuse-combination}
The polynomial \(\mathcal Q_q\) has the representation
\begin{equation}
 \mathcal Q_q(\zeta)=
 \sum_{N=1}^{20q}\lambda_Ng_N(\zeta),
 \qquad
 \sum_N\lambda_N=1,
 \qquad
 \sum_N|\lambda_N|=2.
 \label{eq:reuse-combination}
\end{equation}
\end{lemma}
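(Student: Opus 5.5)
The plan is to verify the three assertions separately, by computing the monomial coefficients of \(\mathcal Q_q\) explicitly.

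\textbf{Representation.} Write \(\mathcal F_q(\zeta)=\sum_j f_j\zeta^j\). Since \(p(\zeta)=\zeta^2+\zeta^4\), we have
\[
\mathcal F_q(\zeta)=2^{-2q}\zeta^{4q}(1+\zeta^2)^{2q}.
\]
Hence \(f_j\geq0\), \(f_j\) is supported on \(4q\leq j\leq 8q\), and \(\deg\mathcal Q_q=8q+12q-1=20q-1\). Expanding the definition of \(g_N\) and exchanging the order of summation gives
\[
\sum_{N\geq1}\lambda_Ng_N(\zeta)=\sum_{N\geq1}(e_{N-1}-e_N)\sum_{k=0}^{N-1}\zeta^k
=\sum_{k\geq0}\zeta^k\sum_{N>k}(e_{N-1}-e_N).
\]
The inner sum telescopes to \(e_k\), because \(e_N=0\) for \(N>20q-1\). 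Therefore the right-hand side equals \(\mathcal Q_q(\zeta)\). Also \(\lambda_N=0\) for \(N>20q\), since then \(e_{N-1}=e_N=0\). This gives the representation with \(1\leq N\leq 20q\).

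\textbf{Sum of coefficients.} Evaluate the representation at \(\zeta=1\). Since \(g_N(1)=1\) for every \(N\),
\[
\sum_N\lambda_N=\mathcal Q_q(1)=\mathcal F_q(1)\,g_{12q}(1)=\left(\frac{p(1)}{2}\right)^{2q}\cdot 1=1.
\]

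\textbf{Sum of absolute values (the main step).} The key is a sign analysis of \(\lambda_N\). Because \(g_{12q}\) averages over a window of length \(12q\),
\[
e_k=\frac1{12q}\sum_{j=k-12q+1}^{k}f_j,
\qquad\text{so}\qquad
e_{N-1}-e_N=\frac{f_{N-12q}-f_N}{12q}.
\]
The two terms occupy disjoint ranges of \(N\):
\begin{itemize}
\item \(f_N\neq0\) only for \(4q\leq N\leq 8q\);
\item \(f_{N-12q}\neq0\) only for \(16q\leq N\leq 20q\).
\end{itemize}
Since these windows do not overlap, \(\lambda_N\) has a single sign on each: \(\lambda_N=-Nf_N/(12q)\leq0\) on the first, and \(\lambda_N=Nf_{N-12q}/(12q)\geq0\) on the second. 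Substituting \(j=N-12q\) in the second range,
\[
\sum_N|\lambda_N|=\frac1{12q}\Bigl[\sum_j(j+12q)f_j+\sum_j jf_j\Bigr]
=\frac{2\mathcal F_q'(1)+12q\,\mathcal F_q(1)}{12q}.
\]
We have \(\mathcal F_q(1)=1\), and by logarithmic differentiation
\[
\mathcal F_q'(1)=2q\,\frac{p'(1)}{p(1)}\,\mathcal F_q(1)=2q\cdot\frac{6}{2}=6q.
\]
Hence \(\sum_N|\lambda_N|=(12q+12q)/(12q)=2\).

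The only delicate point is confirming that the negative and positive windows are disjoint, which relies on the choice of averaging length \(12q>8q-4q\). This disjointness is what prevents cancellation in \(|\lambda_N|\) and makes the \(\ell_1\)-norm computable exactly.
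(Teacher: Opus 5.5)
Your proof is correct and follows essentially the same route as the paper: the same telescoping argument for the representation, evaluation at $\zeta=1$, the difference formula $e_{N-1}-e_N=(f_{N-12q}-f_N)/(12q)$ with disjoint supports $[4q,8q]$ and $[16q,20q]$, and the same reindexing of the $\ell_1$ sum. The only difference is cosmetic. You obtain the first moment $\sum_j jf_j=6q$ as $\mathcal F_q'(1)$ by logarithmic differentiation, whereas the paper computes it from the explicit coefficients $b_{4q+2r}=2^{-2q}\binom{2q}{r}$ using the identity $r\binom{2q}{r}=2q\binom{2q-1}{r-1}$.
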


\begin{proof}
Since \(\deg\mathcal Q_q\leq8q+(12q-1)=20q-1\),
\(e_N=0\) for \(N\geq20q\).  By the definition in \eqref{eq:lambda},
\(\lambda_N=0\) for \(N>20q\).
We first verify that the coefficients defined in \eqref{eq:lambda} give the
claimed expansion of \(\mathcal Q_q\) in the polynomials \(g_N\).  Using the
definition of \(g_N\) and then telescoping the inner sum, we obtain
\begin{align*}
 \sum_{N=1}^{20q}\lambda_Ng_N(\zeta)
 &=\sum_{N=1}^{20q}(e_{N-1}-e_N)
   \sum_{k=0}^{N-1}\zeta^k\\
 &=\sum_{k=0}^{20q-1}\zeta^k
   \sum_{N=k+1}^{20q}(e_{N-1}-e_N)\\
 &=\sum_{k=0}^{20q-1}(e_k-e_{20q})\zeta^k
 =\sum_{k=0}^{20q-1}e_k\zeta^k
 =\mathcal Q_q(\zeta).
\end{align*}
Since \(g_N(1)=1\) and \(\mathcal F_q(1)=1\), evaluating this expansion at
\(\zeta=1\) gives
\(\sum_N\lambda_N=\mathcal Q_q(1)
=\mathcal F_q(1)g_{12q}(1)=1\).

It remains to compute the coefficient \(\ell_1\)-norm.  Expand
\[
 \mathcal F_q(\zeta)=
 \zeta^{4q}\left(\frac{1+\zeta^2}{2}\right)^{2q}
 =\sum_k b_k\zeta^k.
\]
Its coefficients are
\[
 b_{4q+2r}=2^{-2q}\binom{2q}{r}\quad(0\leq r\leq2q),
 \qquad b_k=0\quad\text{otherwise}.
\]
The coefficients satisfy \(\sum_kb_k=1\).  Using
\(r\binom{2q}{r}=2q\binom{2q-1}{r-1}\), their first moment is
\begin{align*}
 \sum_k k b_k
 &=2^{-2q}\sum_{r=0}^{2q}(4q+2r)\binom{2q}{r}\\
 &=4q+2^{1-2q}\sum_{r=1}^{2q}
   2q\binom{2q-1}{r-1}\\
 &=4q+2^{1-2q}\bigl(2q\,2^{2q-1}\bigr)=6q.
\end{align*}
Since \(g_{12q}(\zeta)=(12q)^{-1}\sum_{t=0}^{12q-1}\zeta^t\), the monomial
coefficients of \(\mathcal Q_q=\mathcal F_qg_{12q}\) are
\[
 e_N=\frac1{12q}\sum_{t=0}^{12q-1}b_{N-t}.
\]
Their consecutive differences satisfy
\begin{align*}
 e_{N-1}-e_N
 &=\frac1{12q}\left(
   \sum_{t=0}^{12q-1}b_{N-1-t}
   -\sum_{t=0}^{12q-1}b_{N-t}\right)\\
 &=\frac1{12q}\left(
   \sum_{t=1}^{12q}b_{N-t}
   -\sum_{t=0}^{12q-1}b_{N-t}\right)\\
 &=\frac{b_{N-12q}-b_N}{12q}.
\end{align*}
The sequences \(b_N\) and \(b_{N-12q}\) are nonnegative and have disjoint
supports \([4q,8q]\) and \([16q,20q]\), respectively.  Therefore
\(\lvert b_{N-12q}-b_N\rvert=b_{N-12q}+b_N\).  Substituting
\eqref{eq:lambda} and then reindexing the term involving \(b_{N-12q}\)
gives
\begin{align*}
 \sum_N|\lambda_N|
 &=\frac1{12q}\sum_N
   N\bigl(b_{N-12q}+b_N\bigr)\\
 &=\frac1{12q}\left[
   \sum_k(k+12q)b_k+\sum_k k b_k\right]\\
 &=\sum_kb_k+\frac{2}{12q}\sum_k k b_k
 =1+\frac{2(6q)}{12q}=2.
\end{align*}
\end{proof}

\begin{proposition}[LCU approximation of \(W_J\)]
\label{prop:lcu-approximation}
For the transducer \(S\) constructed in
Proposition~\ref{prop:full-product}, let \(P_N\) be the reuse maps from
Section~\ref{sec:reuse}.  Let \(q\geq1\) satisfy
\(\alpha\delta\leq1/2\), \(J\geq8q\), and \(q\geq C_0\tau\).  Define
\(\lambda_N\) by \eqref{eq:Qq} and \eqref{eq:lambda}, and set
\(\widetilde W_J:=\sum_{N=1}^{20q}\lambda_NP_N\).  Then there is a
normalization-\(2\) projected unitary encoding
\((\mathcal V,\iota_{\rm in},\iota_{\rm out})\) of \(\widetilde W_J\).
The unitary \(\mathcal V\) uses at most \(20q\) calls to \(S\), and
\begin{equation}
 \norm{W_J-\widetilde W_J}
 \leq\sqrt\tau\left(\frac{C_0\tau}{q}\right)^q.
 \label{eq:lcu-approximation-error}
\end{equation}
\end{proposition}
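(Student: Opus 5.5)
The plan is to assemble this proposition directly from the three preceding results: Lemma~\ref{lem:lcu-reuse-maps} for the encoding, Lemma~\ref{lem:reuse-combination} for the coefficients, and Proposition~\ref{prop:private-polynomial} together with the reuse identity of Lemma~\ref{lem:reuse} for the error.

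\textbf{Encoding and query count.} With \(\lambda_N\) defined by \eqref{eq:Qq} and \eqref{eq:lambda}, Lemma~\ref{lem:reuse-combination} shows that only \(N\leq20q\) carry nonzero coefficients, and that \(\Lambda=\sum_N|\lambda_N|=2>0\). Applying Lemma~\ref{lem:lcu-reuse-maps} with \(R=20q\) then yields a projected unitary encoding \((\mathcal V,\iota_{\rm in},\iota_{\rm out})\) of \(\widetilde W_J/2\), which is a normalization-\(2\) encoding of \(\widetilde W_J\). The same lemma gives that \(\mathcal V\) uses at most \(20q\) calls to \(S\).

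\textbf{Error bound.} By Lemma~\ref{lem:reuse-combination}, \(\sum_N\lambda_N=1\). Hence we may write \(W_J=\sum_N\lambda_NW_J\) and apply Lemma~\ref{lem:reuse} term by term:
\[
 W_J-\widetilde W_J=\sum_N\lambda_N(W_J-P_N)=S_{01}\Bigl[\sum_N\lambda_Ng_N(S_{11})\Bigr]\Gamma
 =S_{01}\,\mathcal Q_q(S_{11})\,\Gamma,
\]
where the last equality is the polynomial identity \eqref{eq:reuse-combination} evaluated at the operator \(S_{11}\). The norm factors are then bounded as follows:
\begin{itemize}
 \item \(\|S_{01}\|\leq1\), since \(S\) is unitary;
 \item \(\|\Gamma\|\leq\sqrt\tau\), by Proposition~\ref{prop:full-product};
 \item \(\|\mathcal Q_q(S_{11})\|\leq\|\mathcal F_q(S_{11})\|\), by \eqref{eq:smoothing-bound}, using that \(S_{11}\) is a contraction so \(\|g_{12q}(S_{11})\|\leq1\);
 \item \(\|\mathcal F_q(S_{11})\|\leq(C_0\tau/q)^q\), by Proposition~\ref{prop:private-polynomial}, whose hypotheses \(\alpha\delta\leq1/2\), \(J\geq8q\), \(q\geq C_0\tau\) are exactly those assumed here.
\end{itemize}
Multiplying these bounds gives \eqref{eq:lcu-approximation-error}.

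\textbf{Main obstacle.} Given the earlier results, there is essentially none: all of the substance sits in Proposition~\ref{prop:private-polynomial}, which is proved in the appendix. The only points needing care are that the polynomial identity for \(\mathcal Q_q\) transfers to the operator \(S_{11}\), which holds because both sides are polynomials, and that the sum over \(N\) is finite, which is what lets the normalization \(\sum_N\lambda_N=1\) absorb \(W_J\).
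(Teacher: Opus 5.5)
Your proposal is correct and follows the paper's proof essentially step for step. Lemma~\ref{lem:reuse-combination} ($\lambda_N=0$ for $N>20q$, $\Lambda=2$) combined with Lemma~\ref{lem:lcu-reuse-maps} at $R=20q$ gives the encoding, and $\sum_N\lambda_N=1$, the reuse identity evaluated at $S_{11}$, the smoothing bound, and Proposition~\ref{prop:private-polynomial} give the error estimate; the only cosmetic difference is that you rederive the weighted error identity inline instead of citing \eqref{eq:weighted-reuse-error}.
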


\begin{proof}
Lemma~\ref{lem:reuse-combination} gives
\(\lambda_N=0\) for \(N>20q\) and
\(\Lambda=\sum_N|\lambda_N|=2\).  Applying
Lemma~\ref{lem:lcu-reuse-maps} with \(R=20q\) proves the claimed projected
unitary encoding and query count.

Lemma~\ref{lem:reuse-combination} also gives
\(\sum_N\lambda_N=1\).  Evaluating the polynomial
identity in \eqref{eq:reuse-combination} at \(S_{11}\) gives
\[
 \sum_N\lambda_Ng_N(S_{11})
 =\mathcal Q_q(S_{11})
 =\mathcal F_q(S_{11})g_{12q}(S_{11}).
\]
Combining this identity with the approximation bound
\eqref{eq:weighted-reuse-error} and the smoothing bound
\eqref{eq:smoothing-bound}, and then applying
Proposition~\ref{prop:private-polynomial}, we obtain
\begin{align*}
 \norm{W_J-\widetilde W_J}
 &\leq\sqrt\tau\,\norm{\mathcal Q_q(S_{11})}\leq\sqrt\tau\,\norm{\mathcal F_q(S_{11})}\leq\sqrt\tau\left(\frac{C_0\tau}{q}\right)^q.
\end{align*}
This proves \eqref{eq:lcu-approximation-error}.
\end{proof}

The next section converts this projected unitary encoding into an
approximation of the isometry \(W_J\).

\section{The final algorithm}
\label{sec:oaa}

\subsection{Oblivious amplitude amplification}

We now apply oblivious amplitude amplification (OAA) for isometries to the
normalization-\(2\) projected unitary encoding from
Proposition~\ref{prop:lcu-approximation}, following
Cleve and Wang
\cite[Sec.~4.4 of the full version]{CleveWang2017}.
The OAA circuit uses reflections about the ranges of the input and
output isometries in \eqref{eq:lcu-embeddings}.  The corresponding orthogonal
projectors are
\[
 \Pi_{\rm in}:=\iota_{\rm in}\iota_{\rm in}^\dagger,
 \qquad
 \Pi_{\rm out}:=\iota_{\rm out}\iota_{\rm out}^\dagger.
\]

\begin{lemma}[Oblivious amplitude amplification for isometries]
\label{lem:oaa}
Define
\begin{equation}
 \mathcal U_{\rm amp}=
 -\mathcal V(2\Pi_{\rm in}-I)\mathcal V^\dagger
 (2\Pi_{\rm out}-I)\mathcal V.
 \label{eq:oaa}
\end{equation}
With respect to \(\iota_{\rm in}\) and \(\iota_{\rm out}\), the projected
block of \(\mathcal U_{\rm amp}\) is
\begin{equation}
 \iota_{\rm out}^\dagger\mathcal U_{\rm amp}\iota_{\rm in}
 =\frac32\widetilde W_J
 -\frac12\widetilde W_J\widetilde W_J^\dagger\widetilde W_J.
 \label{eq:oaa-good}
\end{equation}
If \(\norm{\widetilde W_J-W_J}\leq\eta\leq1/8\), then
\begin{equation}
 \norm{\mathcal U_{\rm amp}\iota_{\rm in}
 -\iota_{\rm out}W_J}\leq5\eta.
 \label{eq:oaa-isometry}
\end{equation}
\end{lemma}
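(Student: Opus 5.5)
We compute the projected block of \(\mathcal U_{\rm amp}\) directly and then bound the full output vector, not only its projected part. Write \(A:=\iota_{\rm out}^\dagger\mathcal V\iota_{\rm in}=\widetilde W_J/2\). Expanding \(2\Pi_{\rm in}-I\) and \(2\Pi_{\rm out}-I\) in \eqref{eq:oaa} gives
\[
 -\mathcal V(2\Pi_{\rm in}-I)\mathcal V^\dagger(2\Pi_{\rm out}-I)\mathcal V
 =\mathcal V-2\mathcal V\Pi_{\rm in}\mathcal V^\dagger\mathcal V...
\]
More carefully, we expand the product into four terms: \(-\mathcal V\mathcal V^\dagger\mathcal V=-\mathcal V\), \(+2\mathcal V\mathcal V^\dagger\Pi_{\rm out}\mathcal V=2\Pi_{\rm out}\mathcal V\), \(+2\mathcal V\Pi_{\rm in}\mathcal V^\dagger\mathcal V=2\mathcal V\Pi_{\rm in}\), and \(-4\mathcal V\Pi_{\rm in}\mathcal V^\dagger\Pi_{\rm out}\mathcal V\). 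Sandwiching with \(\iota_{\rm out}^\dagger(\cdot)\iota_{\rm in}\) and using \(\iota_{\rm in}^\dagger\iota_{\rm in}=I\) gives \(-A+2A+2A-4AA^\dagger A=3A-4AA^\dagger A\). Substituting \(A=\widetilde W_J/2\) yields \eqref{eq:oaa-good}.

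For \eqref{eq:oaa-isometry}, we split \(\mathcal U_{\rm amp}\iota_{\rm in}=\Pi_{\rm out}\mathcal U_{\rm amp}\iota_{\rm in}+(I-\Pi_{\rm out})\mathcal U_{\rm amp}\iota_{\rm in}\). For the first part, set \(f(X)=\frac32X-\frac12XX^\dagger X\) and \(E=\widetilde W_J-W_J\). Since \(W_J^\dagger W_J=I\), we have \(f(W_J)=W_J\). Expanding \(f(W_J+E)-f(W_J)\) then gives a linear part \(\frac32E-\frac12(EW_J^\dagger W_J+W_JE^\dagger W_J+W_JW_J^\dagger E)\), whose norm is at most \(3\eta\). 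The quadratic and cubic remainders have norm at most \(\frac32\eta^2+\frac12\eta^3\). Hence \(\|\iota_{\rm out}^\dagger\mathcal U_{\rm amp}\iota_{\rm in}-W_J\|\le3\eta+2\eta^2\).

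The second part is the main obstacle, because OAA controls only the projected block. We use unitarity to bound it: for unit \(\psi\), \(\|(I-\Pi_{\rm out})\mathcal U_{\rm amp}\iota_{\rm in}\psi\|^2=1-\|f(\widetilde W_J)\psi\|^2\). Since \(\|f(\widetilde W_J)\psi\|\ge1-(3\eta+2\eta^2)\), this quantity is at most \(2(3\eta+2\eta^2)\), so the orthogonal part is \(O(\sqrt\eta)\). That bound is too weak. We therefore use the spectral structure instead. Write the singular value decomposition of \(A\) with singular values \(s_i\). The standard OAA two-dimensional invariant-subspace picture (Jordan's lemma for the two projectors \(\Pi_{\rm in}\) and \(\mathcal V^\dagger\Pi_{\rm out}\mathcal V\)) shows that in each plane \(\mathcal U_{\rm amp}\) acts as a rotation. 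In the plane with singular value \(s_i=\sin\theta_i\), the output amplitude is \(\sin3\theta_i\) in the good subspace and \(\cos3\theta_i\) in the orthogonal direction. Because \(\|\widetilde W_J-W_J\|\le\eta\), every singular value of \(\widetilde W_J\) lies in \([1-\eta,1+\eta]\), so \(s_i\in[\frac{1-\eta}2,\frac{1+\eta}2]\) and \(\theta_i=\pi/6+O(\eta)\). Consequently \(|\cos3\theta_i|=|\cos(\pi/2+O(\eta))|=O(\eta)\); concretely, \(3|\theta_i-\pi/6|\le3\arcsin\)-Lipschitz\(\cdot\eta/2\le2\eta\) for \(\eta\le1/8\). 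The orthogonal component has norm \(\max_i|\cos3\theta_i|\le2\eta\), because the different planes are orthogonal.

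Combining the two parts gives \(3\eta+2\eta^2+2\eta\). This slightly exceeds \(5\eta\), so we tighten the estimate. In the plane picture, the whole error vector in plane \(i\) has norm \(|1-\sin3\theta_i|+|\cos3\theta_i|\). Alternatively, we bound its Euclidean norm as \(\sqrt{2-2\sin3\theta_i}=2|\sin((3\theta_i-\pi/2)/2)|\le|3\theta_i-\pi/2|\). This must be added to the difference between the exact \(W_J\) and the polar part of \(\widetilde W_J\), which is at most \(2\eta\) for the perturbation of the singular vectors (polar factor versus \(W_J\)). With \(|3\theta_i-\pi/2|\le3\eta/\sqrt{3}\cdot\frac{1}{\cos(\pi/6)}\cdot\frac12\approx\sqrt3\eta\cdot1.05\), the total stays below \(5\eta\) when \(\eta\le1/8\). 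The constant bookkeeping here, particularly the polar-factor perturbation bound \(\|U_{\rm polar}(\widetilde W_J)-W_J\|\le2\eta\) for near-isometries, is the step I expect to need the most care.
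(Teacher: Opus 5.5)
Your final argument is correct and is essentially the paper's: both use the polar decomposition $\widetilde W_J=U_{\rm p}\Sigma$ with $\operatorname{spec}(\Sigma)\subset[1-\eta,1+\eta]$, the bound $\norm{U_{\rm p}-W_J}\le2\eta$, and a singular-value analysis of the amplified output with $f(\sigma)=\frac32\sigma-\frac12\sigma^3$. The differences are only in bookkeeping. The paper obtains the leakage from $C^\dagger C=I-f(\Sigma)^2$, where $C=(I-\Pi_{\rm out})\mathcal U_{\rm amp}\iota_{\rm in}$, instead of Jordan's lemma, and it adds the in-range and leakage errors by the triangle inequality. You instead combine them in quadrature, and $\sqrt{2-2f(1+x)}=|x|\sqrt{3+x}\le1.8\eta$ for $|x|\le\eta$ is a cleaner replacement for your garbled $\arcsin$ estimate; it also gives a slightly better constant than the paper's $4\eta+2\eta^2$.

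The polar bound you flagged needs no singular-vector perturbation theory. Since $U_{\rm p}$ is an isometry, $\norm{U_{\rm p}-\widetilde W_J}=\norm{U_{\rm p}(I-\Sigma)}=\norm{I-\Sigma}\le\eta$, and the triangle inequality with $\norm{\widetilde W_J-W_J}\le\eta$ gives $2\eta$.
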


\begin{proof}
We first compute the projected block, then bound its error and the component
orthogonal to \(\operatorname{ran}\Pi_{\rm out}\).  Set
\(X=\iota_{\rm out}^\dagger\mathcal V\iota_{\rm in}
=\widetilde W_J/2\).  The definitions of \(\Pi_{\rm in}\) and
\(\Pi_{\rm out}\) imply
\[
 \iota_{\rm out}^\dagger\mathcal V(2\Pi_{\rm in}-I)
 =2X\iota_{\rm in}^\dagger-\iota_{\rm out}^\dagger\mathcal V,
 \qquad
 \iota_{\rm in}^\dagger\mathcal V^\dagger
 (2\Pi_{\rm out}-I)\mathcal V\iota_{\rm in}
 =2X^\dagger X-I.
\]
Using these identities, the unitarity of \(\mathcal V\), and
\(\iota_{\rm out}^\dagger(2\Pi_{\rm out}-I)
=\iota_{\rm out}^\dagger\), we expand \eqref{eq:oaa} as
\begin{align*}
 \iota_{\rm out}^\dagger\mathcal U_{\rm amp}\iota_{\rm in}
 &=-\bigl(2X\iota_{\rm in}^\dagger
   -\iota_{\rm out}^\dagger\mathcal V\bigr)
   \mathcal V^\dagger(2\Pi_{\rm out}-I)
   \mathcal V\iota_{\rm in}\\
 &=X-2X\,\iota_{\rm in}^\dagger\mathcal V^\dagger
   (2\Pi_{\rm out}-I)\mathcal V\iota_{\rm in}\\
 &=X-2X(2X^\dagger X-I)=3X-4XX^\dagger X,
\end{align*}
which proves \eqref{eq:oaa-good}.
To compare this projected block with \(W_J\), take the polar decomposition
\(\widetilde W_J=U_{\rm p}\Sigma\).  Lemma~\ref{lem:rational-step} and the
product construction \eqref{eq:product-stinespring} show that \(W_J\) is an
isometry.  Hence, for every unit vector \(\psi\),
\[
 1-\eta\leq\norm{W_J\psi}-\norm{(\widetilde W_J-W_J)\psi}
 \leq\norm{\widetilde W_J\psi}
 \leq\norm{W_J\psi}+\norm{(\widetilde W_J-W_J)\psi}
 \leq1+\eta.
\]
Since \(\Sigma=(\widetilde W_J^\dagger\widetilde W_J)^{1/2}\), these bounds
imply \(\operatorname{spec}(\Sigma)\subset[1-\eta,1+\eta]\).
In particular, \(\Sigma\) is invertible.  Thus the polar factor is
\(U_{\rm p}=\widetilde W_J\Sigma^{-1}\), and
\[
 U_{\rm p}^\dagger U_{\rm p}
 =\Sigma^{-1}\widetilde W_J^\dagger\widetilde W_J\Sigma^{-1}=I.
\]
Hence \(U_{\rm p}\) is an isometry.  The spectral bound on \(\Sigma\) gives
\(\norm{I-\Sigma}\leq\eta\), and therefore
\[
 \norm{U_{\rm p}-\widetilde W_J}
 =\norm{U_{\rm p}(I-\Sigma)}
 =\norm{I-\Sigma}\leq\eta,
\]
where the second equality uses \(U_{\rm p}^\dagger U_{\rm p}=I\).
Consequently,
\begin{equation}
 \norm{U_{\rm p}-W_J}
 \leq\norm{U_{\rm p}-\widetilde W_J}
    +\norm{\widetilde W_J-W_J}
 \leq2\eta.
 \label{eq:oaa-polar-error}
\end{equation}
Define \(f(\sigma):=\frac32\sigma-\frac12\sigma^3\).  The formula for the
projected block in \eqref{eq:oaa-good} and
\(\widetilde W_J\widetilde W_J^\dagger\widetilde W_J
=U_{\rm p}\Sigma^3\) give
\(\iota_{\rm out}^\dagger\mathcal U_{\rm amp}\iota_{\rm in}
=U_{\rm p}f(\Sigma)\).
For \(|x|\leq\eta\leq1/8\),
\[
 1-f(1+x)=\frac{x^2(3+x)}2\in[0,2\eta^2].
\]
Applying this inequality to each eigenvalue of \(\Sigma\) gives
\[
 0\preceq I-f(\Sigma)\preceq2\eta^2 I.
\]
In particular, \(0\preceq f(\Sigma)\preceq I\) and
\(\norm{I-f(\Sigma)}\leq2\eta^2\).  Combining this bound with
\eqref{eq:oaa-polar-error} gives
\begin{equation}
 \norm{U_{\rm p}f(\Sigma)-W_J}
 \leq\norm{U_{\rm p}-W_J}+\norm{f(\Sigma)-I}
 \leq2\eta+2\eta^2.
 \label{eq:oaa-output-error}
\end{equation}

Since \(\mathcal U_{\rm amp}\) is unitary and \(\iota_{\rm in}\) is an
isometry, \(\mathcal U_{\rm amp}\iota_{\rm in}\) is an isometry.  Its
projection onto \(\operatorname{ran}\Pi_{\rm out}\) is
\[
 \Pi_{\rm out}\mathcal U_{\rm amp}\iota_{\rm in}
 =\iota_{\rm out}
  (\iota_{\rm out}^\dagger\mathcal U_{\rm amp}\iota_{\rm in})
 =\iota_{\rm out}U_{\rm p}f(\Sigma).
\]
Using \(\iota_{\rm out}^\dagger\iota_{\rm out}=I\) and
\(U_{\rm p}^\dagger U_{\rm p}=I\), we obtain
\[
 \begin{aligned}
 &\bigl[(I-\Pi_{\rm out})\mathcal U_{\rm amp}\iota_{\rm in}\bigr]^\dagger
 \bigl[(I-\Pi_{\rm out})\mathcal U_{\rm amp}\iota_{\rm in}\bigr]\\
 &=I-\bigl[\iota_{\rm out}U_{\rm p}f(\Sigma)\bigr]^\dagger
       \bigl[\iota_{\rm out}U_{\rm p}f(\Sigma)\bigr]
 =I-f(\Sigma)^2=(I-f(\Sigma))(I+f(\Sigma)).
 \end{aligned}
\]
Taking operator norms and using \(\norm{I+f(\Sigma)}\leq2\), we obtain
\begin{equation}
 \norm{(I-\Pi_{\rm out})\mathcal U_{\rm amp}\iota_{\rm in}}
 =\sqrt{\norm{(I-f(\Sigma))(I+f(\Sigma))}}
 \leq\sqrt{2\norm{I-f(\Sigma)}}\leq2\eta.
 \label{eq:oaa-leakage}
\end{equation}
Since \(\iota_{\rm out}W_J\) lies in
\(\operatorname{ran}\Pi_{\rm out}\), splitting the error into its
\(\Pi_{\rm out}\) and \(I-\Pi_{\rm out}\) parts gives
\[
 \begin{aligned}
 &\Pi_{\rm out}
  (\mathcal U_{\rm amp}\iota_{\rm in}-\iota_{\rm out}W_J)
 =\iota_{\rm out}(U_{\rm p}f(\Sigma)-W_J),\\
 &(I-\Pi_{\rm out})
  (\mathcal U_{\rm amp}\iota_{\rm in}-\iota_{\rm out}W_J)
 =(I-\Pi_{\rm out})\mathcal U_{\rm amp}\iota_{\rm in}.
 \end{aligned}
\]
Therefore, the triangle inequality together with
\eqref{eq:oaa-output-error} and \eqref{eq:oaa-leakage} gives
\[
 \norm{\mathcal U_{\rm amp}\iota_{\rm in}-\iota_{\rm out}W_J}
 \leq\norm{U_{\rm p}f(\Sigma)-W_J}
    +\norm{(I-\Pi_{\rm out})\mathcal U_{\rm amp}\iota_{\rm in}}
 \leq(2\eta+2\eta^2)+2\eta
 \leq5\eta,
\]
which is \eqref{eq:oaa-isometry}.
\end{proof}

We next convert the isometry error \eqref{eq:oaa-isometry} into a
diamond-norm error.  Let \(\Tr_{\rm aux}\) denote the partial trace over all
output registers except \(\mathsf S\).  The OAA circuit implements the system
channel
\[
 \widetilde{\cE}(\rho):=\Tr_{\rm aux}\!\left[
 \mathcal U_{\rm amp}\iota_{\rm in}\rho\,
 \iota_{\rm in}^\dagger\mathcal U_{\rm amp}^\dagger\right].
\]
To relate the two errors, let \(V\) and \(W\) be isometries with the same
input and output spaces.  For any reference system \(\mathsf R\) and density
operator \(\rho\) on the input together with \(\mathsf R\),
\[
 \begin{aligned}
 &(V\otimes I_{\mathsf R})\rho(V^\dagger\otimes I_{\mathsf R})
 -(W\otimes I_{\mathsf R})\rho(W^\dagger\otimes I_{\mathsf R})\\
 &=[(V-W)\otimes I_{\mathsf R}]\rho(V^\dagger\otimes I_{\mathsf R})
 +(W\otimes I_{\mathsf R})\rho[(V-W)^\dagger\otimes I_{\mathsf R}].
 \end{aligned}
\]
Since \(\norm V=\norm W=\norm{\rho}_1=1\), the triangle inequality and
\(\norm{AXB}_1\leq\norm A\norm X_1\norm B\) imply
\[
 \begin{aligned}
 &\norm{(V\otimes I_{\mathsf R})\rho(V^\dagger\otimes I_{\mathsf R})
 -(W\otimes I_{\mathsf R})\rho(W^\dagger\otimes I_{\mathsf R})}_1\leq2\norm{V-W}.
 \end{aligned}
\]
For \(V=\mathcal U_{\rm amp}\iota_{\rm in}\) and
\(W=\iota_{\rm out}W_J\), tracing out the auxiliary registers gives
\(\widetilde{\cE}\) and \(\Phi_\delta^J\), respectively.  Partial trace does
not increase the trace norm, so taking the supremum over \(\mathsf R\) and
\(\rho\) gives
\begin{equation}
 \norm{\widetilde{\cE}-\Phi_\delta^J}_\diamond
 \leq2\norm{\mathcal U_{\rm amp}\iota_{\rm in}-\iota_{\rm out}W_J}
 \leq10\eta,
 \label{eq:oaa-channel-error}
\end{equation}
where the last step is \eqref{eq:oaa-isometry}.

We finally count oracle queries.  A call to \(\mathcal V\) uses at most
\(20q\) calls to \(S\), and a call to \(\mathcal V^\dagger\) uses at most
\(20q\) calls to \(S^\dagger\).  The OAA circuit \eqref{eq:oaa} contains two
calls to \(\mathcal V\), one call to \(\mathcal V^\dagger\), and two
oracle-independent reflections.  Since each call to \(S\) or \(S^\dagger\)
uses one oracle query, the OAA circuit uses at most \(60q\) oracle queries.

\subsection{Parameter choice}
\label{sec:parameters}

We choose \(q\) and \(J\) so that
\(\norm{\widetilde{\cE}-\Phi_\delta^J}_\diamond\leq\varepsilon/2\) and
\(\norm{\Phi_\delta^J-e^{t\cL}}_\diamond\leq\varepsilon/2\), respectively.
Set
\[
 \eta_q:=\sqrt\tau\left(\frac{C_0\tau}{q}\right)^q.
\]
If \(\alpha\delta\leq1/2\), \(J\geq8q\), \(q\geq C_0\tau\), and
\(\eta_q\leq1/8\), applying the OAA circuit \eqref{eq:oaa} to the encoding
from Proposition~\ref{prop:lcu-approximation} uses at most \(60q\) queries.
The channel-error bound \eqref{eq:oaa-channel-error} gives
\(\norm{\widetilde{\cE}-\Phi_\delta^J}_\diamond\leq10\eta_q\).
The next lemma chooses \(q\) so that
\(\eta_q\leq\varepsilon/20\) and \(q\) has the scaling stated in
Theorem~\ref{thm:main}.
\begin{lemma}[Choosing the polynomial degree]
\label{lem:factorial-inversion}
Let \(0<\varepsilon\leq1/2\), \(\tau>0\), and
\(L_\varepsilon=\log(20/\varepsilon)\).  There is a universal constant
\(c>0\) such that
\begin{equation}
 q=\left\lceil
 c\left(
 \tau+\frac{L_\varepsilon}{\log(e+L_\varepsilon/\tau)}
 \right)\right\rceil
 \label{eq:q-choice}
\end{equation}
satisfies \(q\geq C_0\tau\) and \(\eta_q\leq\varepsilon/20\).
If $2\tau>\varepsilon$, it also satisfies
\begin{equation}
 q=O\!\left(
 \tau+\frac{\log(1/\varepsilon)}
 {\log\!\bigl(e+\log(1/\varepsilon)/\tau\bigr)}
 \right).
 \label{eq:q-no-additive-constant}
\end{equation}
\end{lemma}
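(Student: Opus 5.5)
The plan is to work with $\log\eta_q=\tfrac12\log\tau-q\log(q/(C_0\tau))$ and pick $c$ large, treating the two summands of \eqref{eq:q-choice} separately. Take $c\geq e^2C_0$. Then $q\geq c\tau\geq e^2C_0\tau$, which gives $q\geq C_0\tau$ at once. It also gives $\log(q/(C_0\tau))\geq2$, so
\[
 q\log\bigl(q/(C_0\tau)\bigr)\geq 2q\geq2c\tau.
\]
We need $q\log(q/(C_0\tau))\geq L_\varepsilon+\tfrac12\log\tau$. Since $\log\tau\leq\tau$, the term $\tfrac12\log\tau$ is absorbed by $c\tau$ once $c\geq1$. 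When $\tau<1$ it is negative, and then it only helps. The task therefore reduces to showing $q\log(q/(C_0\tau))\geq L_\varepsilon+\tau$.

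Set $x=L_\varepsilon/\tau$. We split into two cases.

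\emph{Case $x\leq e^2$.} Here $L_\varepsilon\leq e^2\tau$. The bound $2c\tau\geq(e^2+1)\tau$ suffices for $c$ large.

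\emph{Case $x>e^2$.} Use $q\geq cL_\varepsilon/\log(e+x)$ together with the monotonicity of $q\mapsto q\log(q/(C_0\tau))$. Then
\[
 q\log\bigl(q/(C_0\tau)\bigr)\geq\frac{cL_\varepsilon}{\log(e+x)}\,\log\frac{cx}{C_0\log(e+x)}.
\]
For $c\geq C_0$ and $x>e^2$ we have $\log\bigl(cx/(C_0\log(e+x))\bigr)\geq\log x-\log\log(e+x)\geq\tfrac14\log(e+x)$. This is an elementary inequality in one variable; I would verify it by noting that $\log\log(e+x)\leq\tfrac12\log(e+x)$ and $\log x\geq\tfrac34\log(e+x)$ for $x\geq e^2$. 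It follows that $q\log(q/(C_0\tau))\geq cL_\varepsilon/4$. Combined with the other lower bound $2c\tau$, this gives at least $L_\varepsilon+\tau$ for $c\geq8$.

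This case analysis, and in particular the choice of monotonicity argument and the constants in the second case, is the main obstacle. Everything else is bookkeeping.

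Finally, for \eqref{eq:q-no-additive-constant} assume $2\tau>\varepsilon$. Since $\varepsilon\leq1/2$, we have $L_\varepsilon=\log 20+\log(1/\varepsilon)\leq 6\log(1/\varepsilon)$. Because $t\mapsto t/\log(e+t/\tau)$ is increasing, we may replace $L_\varepsilon$ by $6\log(1/\varepsilon)$. The denominator changes by at most a constant factor, since $\log(e+6y)\leq\log6+\log(e+y)\leq3\log(e+y)$. It remains to absorb the ceiling's $+1$. If $\tau\geq1$, it is dominated by $\tau$. If $\tau<1$, then $\log(1/\varepsilon)\geq\log2$ and
\[
 \frac{\log(1/\varepsilon)}{\log\bigl(e+\log(1/\varepsilon)/\tau\bigr)}\geq\frac{\log(1/\varepsilon)}{\log\bigl(e+\log(1/\varepsilon)\cdot2/\varepsilon\bigr)}\geq c'>0,
\]
using $\tau>\varepsilon/2$. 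The last bound holds uniformly because $\log(e+2y e^{y})=O(y)$ for $y\geq\log2$. So the additive constant is absorbed, which gives \eqref{eq:q-no-additive-constant}.
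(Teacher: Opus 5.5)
Your proof is correct. Its second half is the paper's argument. When $2\tau>\varepsilon$, the quantity inside the ceiling is bounded below by a universal constant, so the ceiling costs only a constant factor, and $L_\varepsilon=\Theta(\log(1/\varepsilon))$.

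For the first half, the paper does no computation of its own. It imports the parameter estimate from the proof of Theorem~9 of Chen et al.~\cite{ChenEtAl2026}, with $C_0$ in place of $12e$ and target error $\varepsilon/20$. You instead verify $\eta_q\le\varepsilon/20$ directly. You take logarithms and absorb the $\sqrt\tau$ prefactor via $\tfrac12\log\tau\le\tau$. You then split into $L_\varepsilon/\tau\le e^2$, handled by $q\log(q/(C_0\tau))\ge2c\tau$, and $L_\varepsilon/\tau>e^2$, handled by monotonicity and $\log x-\log\log(e+x)\ge\tfrac14\log(e+x)$.

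Your route gives a self-contained proof with an explicit constant, e.g.\ $c=\max\{e^2C_0,8\}$. It also checks explicitly that the $\sqrt\tau$ factor in $\eta_q$ is harmless, which the paper leaves implicit in ``repeating its argument.'' The paper's route is shorter but asks the reader to confirm that the cited estimate transfers.

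Two minor points. First, $q\mapsto q\log(q/(C_0\tau))$ is increasing only for $q\ge C_0\tau/e$, so the monotonicity step needs $q_1:=cL_\varepsilon/\log(e+x)\ge C_0\tau/e$. This follows from your own bound $\log(q_1/(C_0\tau))\ge\tfrac14\log(e+x)>0$. Second, in the final step the inequality you actually need is the trivial one $\log(e+6y/\tau)\ge\log(e+y/\tau)$ with $y=\log(1/\varepsilon)$. The upper bound $\log(e+6y)\le3\log(e+y)$ that you quote is not needed.
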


\begin{proof}
Set \(\alpha T=\tau\) in the parameter estimate of
\cite[proof of Theorem~9]{ChenEtAl2026}.  Repeating its argument with
\(C_0\) in place of \(12e\) and target error \(\varepsilon/20\) gives
\(q\geq C_0\tau\) and \(\eta_q\leq\varepsilon/20\) for a sufficiently large
universal \(c\).  If \(2\tau>\varepsilon\), the quantity inside the ceiling
in \eqref{eq:q-choice} is bounded below by a positive universal constant.
The ceiling therefore changes \(q\) by at most a constant factor, and
\(L_\varepsilon=\Theta(\log(1/\varepsilon))\) gives
\eqref{eq:q-no-additive-constant}.
\end{proof}

We now combine the amplification error \eqref{eq:oaa-channel-error} with
the discretization error \eqref{eq:discretization-error} and count the
resulting oracle queries.
\begin{proof}[Proof of the error and query bounds in
Theorem~\ref{thm:main}]
For \(2\tau\leq\varepsilon\), the identity channel already achieves the
required accuracy.  Indeed,
\[
 e^{t\cL}-\Id=\int_0^t\cL e^{s\cL}\dd{s}.
\]
Since \(e^{s\cL}\) is CPTP and has diamond norm one,
\[
 \norm{e^{t\cL}-\Id}_\diamond
 \leq\int_0^t\norm{\cL}_\diamond\dd{s}
 \leq2\alpha t=2\tau\leq\varepsilon,
\]
where we used the generator-norm bound
\eqref{eq:lindbladian-diamond-norm}.
For \(2\tau>\varepsilon\), choose \(q\) by
Lemma~\ref{lem:factorial-inversion} and choose an integer \(J\) satisfying
\begin{equation}
 J\geq
 \max\left\{2\tau,\,8q,\,20\tau^2/\varepsilon\right\}.
 \label{eq:J-choice}
\end{equation}
These choices ensure \(\alpha\delta=\tau/J\leq1/2\), \(J\geq8q\),
\(q\geq C_0\tau\), and \(\eta_q\leq\varepsilon/20\leq1/8\).
Proposition~\ref{prop:lcu-approximation}, the channel-error bound
\eqref{eq:oaa-channel-error}, and the discretization bound
\eqref{eq:discretization-error} therefore give
\begin{equation}
 \norm{\widetilde{\cE}-e^{t\cL}}_\diamond
 \leq10\eta_q+\frac{10\tau^2}{J}
 \leq\frac\varepsilon2+\frac\varepsilon2=\varepsilon.
 \label{eq:autonomous-error}
\end{equation}
The circuit uses at most \(60q\) queries.  Increasing \(J\) affects the
gate count and ancilla space, but not this query count, so
\eqref{eq:q-no-additive-constant} proves the query bound in Theorem~\ref{thm:main}.
\end{proof}

\section{Time-dependent Lindbladians}
\label{sec:time-dependent}

We extend the construction to \(H(s)\) and \(B(s)\) varying over
\(s\in[0,T]\), where \(T\geq0\) is the total simulation time.  The generator
at time \(s\) is
\begin{equation}
 \cL_s(\rho)=-i[H(s),\rho]
 +\Tr_{\mathsf E}\!\bigl(B(s)\rho B(s)^\dagger\bigr)
 -\frac12\{B(s)^\dagger B(s),\rho\},
 \qquad 0\leq s\leq T,
 \label{eq:td-generator}
\end{equation}
where the nonzero-label subspace \(\cH_{\rm L}\) is fixed.  We aim to
approximate the CPTP propagator
\[
 \cE(T,0):=\mathcal T\exp\!\left(\int_0^T\cL_s\dd{s}\right)
\]
to diamond-norm error \(\varepsilon\).

We assume the uniform bounds and Lipschitz condition
\begin{equation}
 \norm{H(s)}\leq\alpha_H,\qquad
 \norm{B(s)}\leq\alpha_B,\qquad
 \norm{\cL_s-\cL_r}_\diamond\leq L_{\cL}|s-r|.
 \label{eq:td-promises}
\end{equation}
Set \(\alpha=\alpha_H+\alpha_B^2\) and \(\tau=\alpha T\), as in the
time-independent setting.
For a grid \(t_j=jT/J\), we assume coherent time-indexed oracle access as
in~\cite{ChenEtAl2026}:
\begin{equation}
 U_H^{(J)}=
 \sum_{j=0}^{J-1}\proj{j}_{\mathsf T}\otimes U_{H,j},
 \qquad
 U_B^{(J)}=
 \sum_{j=0}^{J-1}\proj{j}_{\mathsf T}\otimes U_{B,j}.
 \label{eq:lind-t-oracles}
\end{equation}
For every \(j\), \(U_{H,j}\) is a Hermitian
\(\alpha_H\)-block encoding of \(H(t_j)\), and \(U_{B,j}\) is a projected
unitary encoding of \(B(t_j)/\alpha_B\).  These encodings use the same
ancilla registers \(\mathsf A_H,\mathsf A_B\) and the same Kraus-label
register \(\mathsf E\) for every \(j\), and satisfy
\begin{equation}
 (\bra0_{\mathsf A_H}\otimes I_{\mathsf S})U_{H,j}
 (\ket0_{\mathsf A_H}\otimes I_{\mathsf S})=H(t_j)/\alpha_H,
 \qquad
 (\bra0_{\mathsf A_B}\otimes P_{\rm L})U_{B,j}
 (\ket0_{\mathsf A_B}\ket0_{\mathsf E}\otimes I_{\mathsf S})
 =B(t_j)/\alpha_B.
 \label{eq:td-encodings}
\end{equation}
The coherent oracle family in \eqref{eq:lind-t-oracles} is available for
the discretization size \(J\) selected by the algorithm.
On the interval \([t_j,t_{j+1}]\), the algorithm uses the rational step of
Section~\ref{sec:rational} with \(H=H(t_j)\) and \(B=B(t_j)\).

\Needspace{18\baselineskip}
\begin{theorem}[Time-dependent extension]
\label{thm:time-dependent}
Under \eqref{eq:td-promises} and the coherent time-indexed oracle model
\eqref{eq:lind-t-oracles}--\eqref{eq:td-encodings}, for every
\(0<\varepsilon\leq1/2\),
\(\cE(T,0)\) can be approximated within diamond-norm error
\(\varepsilon\) using
\[
 O\!\left(
 \tau+
 \frac{\log(1/\varepsilon)}
 {\log\!\bigl(e+\log(1/\varepsilon)/\tau\bigr)}
 \right)
\]
queries.
\end{theorem}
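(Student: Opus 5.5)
We will reuse the time-independent pipeline step by step. The only changes are that step \(j\) uses its own rational step, and that the discretization error must absorb the Lipschitz variation of \(\cL_s\).

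For each grid step, let \(K_j=iH(t_j)+\frac12B(t_j)^\dagger B(t_j)\). Define \(R_{\delta,j},N_{\delta,j},J_{\delta,j}\), the isometry \(W_{\delta,j}\), and the channel \(\Phi_{\delta,j}\) as in Lemma~\ref{lem:rational-step}, with \(\delta=T/J\). The target is the product \(\Phi_{\delta,J-1}\cdots\Phi_{\delta,0}\). The normalizations \(\alpha_H,\alpha_B\) are uniform in \(j\), so Proposition~\ref{prop:discretization}'s one-step estimate gives \(\|\Phi_{\delta,j}-e^{\delta\cL_{t_j}}\|_\diamond\le10(\alpha\delta)^2\). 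We also compare \(e^{\delta\cL_{t_j}}\) with the exact propagator \(\cE(t_{j+1},t_j)\). The Duhamel formula
\[
 \cE(t_{j+1},t_j)-e^{\delta\cL_{t_j}}=\int_{t_j}^{t_{j+1}}\cE(t_{j+1},s)(\cL_s-\cL_{t_j})e^{(s-t_j)\cL_{t_j}}\dd{s}
\]
and the Lipschitz bound give error at most \(L_{\cL}\delta^2/2\), because all propagators are CPTP. Telescoping as in \eqref{eq:discretization-telescoping} then yields total discretization error \(O(\tau^2/J+L_{\cL}T^2/J)\). This is made at most \(\varepsilon/2\) by taking \(J\) large, which costs no queries.

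Next, the local transducer. Lemma~\ref{lem:local-step} depends on the oracle only through the block-encoding identities, and \(G\) depends only on \(\kappa,\beta\), which are \(j\)-independent. So for each \(j\) we obtain a catalyst \(\gamma_j\) with \(\gamma_j^\dagger\gamma_j\preceq\alpha\delta I\), and the same \(G\) works, once \(\Omega\) is replaced by \(\Omega_j=U_{H,j}\oplus U_{B,j}\oplus U_{B,j}^\dagger\). In Proposition~\ref{prop:full-product}, the private space \(\cK=\bigoplus_j\cK_j\) is already labeled by \(\mathsf T\). The global query \(Q=I\oplus\bigoplus_j(I\otimes\Omega_j)\) is therefore exactly one application of the coherent oracles \eqref{eq:lind-t-oracles}, with \(\mathsf T\) as the time register. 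The transducer identity and \(\Gamma^\dagger\Gamma\preceq\tau I\) follow by the same induction. Lemma~\ref{lem:reuse}, Lemma~\ref{lem:lcu-reuse-maps}, Lemma~\ref{lem:reuse-combination}, Lemma~\ref{lem:oaa} and Lemma~\ref{lem:factorial-inversion} never use time independence; they use only unitarity of \(S\), the catalyst bound, and polynomial identities. Hence the query count is again at most \(60q\) with the same \(q\).

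The main obstacle is Proposition~\ref{prop:private-polynomial} in the time-dependent setting, since its proof is deferred to the appendix. We will check that the proof uses only structural facts that hold uniformly in \(j\):
\begin{itemize}
 \item \(S_{11}(0)^2\) is minus a projection. At \(\delta=0\), \(G\) is \(j\)-independent, and this property should hold for any Hermitian \(U_{H,j}\) and any unitary \(U_{B,j}\), block by block in \(\mathsf T\).
 \item \(\Delta\) is \(O(\sqrt{\alpha\delta})\) in general and \(O(\alpha\delta)\) when it is diagonal in \(\mathsf T\).
 \item \(\Delta\) never decreases the \(\mathsf T\) index.
 \item Kraus labels created at different positions are orthogonal.
\end{itemize}
All four depend only on \(\kappa,\beta\), on \(\Omega_j\) being unitary with the stated block structure, and on the sequential ordering of the \(G_j\). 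So the bound \((C_0\tau/q)^q\) should carry over verbatim. If some step used a property of a single fixed \(\Omega\), such as commutation of \(Q\) with a shift in \(\mathsf T\), we would instead carry out the estimate blockwise with \(\Omega\) replaced by \(\Omega_j\) in block \(j\). Finally, combining the two error halves as in \eqref{eq:autonomous-error} and handling \(2\tau\le\varepsilon\) by the identity channel gives the stated bound; the identity bound uses \(\|\cL_s\|_\diamond\le2\alpha\) uniformly in \(s\).
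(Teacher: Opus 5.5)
Your proposal is correct and follows the paper's proof essentially step for step. Both use per-grid rational steps with the Duhamel/Lipschitz discretization bound \((10\tau^2+L_{\cL}T^2/2)/J\), the same \(G\) with \(\Omega_j\) and per-step catalysts, and one time-indexed query for \(Q\); both then transfer the appendix's polynomial bound on the grounds that it uses the oracle only blockwise on each \(\cK_j\). One small precision in your list of structural facts: the appendix also needs the off-diagonal no-jump part \((\Delta_0)_{kj}\), \(k>j\), to be \(O(\alpha\delta)\), not only the \(\mathsf T\)-diagonal part, because only the jump part \(\Delta_1\) benefits from the orthogonality argument; this is again a per-block fact and carries over unchanged.
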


\begin{proof}
For \(2\tau\leq\varepsilon\), the identity channel already achieves the
required accuracy.  Indeed, Duhamel's formula gives
\[
 \cE(T,0)-\Id=\int_0^T\cE(T,s)\cL_s\dd{s}.
\]
The propagator \(\cE(T,s)\) is CPTP, and the generator-norm bound
\eqref{eq:lindbladian-diamond-norm} applies pointwise to \(\cL_s\).  Hence
\begin{equation}
 \norm{\cE(T,0)-\Id}_\diamond
 \leq\int_0^T\norm{\cL_s}_\diamond\dd{s}
 \leq2\tau\leq\varepsilon.
 \label{eq:td-identity-error}
\end{equation}
For \(2\tau>\varepsilon\), choose \(q\) by
Lemma~\ref{lem:factorial-inversion} and take \(J\geq\max\{2\tau,8q\}\);
we will increase \(J\) to control the discretization error.
Let \(\delta=T/J\) and \(t_j=j\delta\).  At each \(t_j\), use the
rational channel and Stinespring isometry of Section~\ref{sec:rational}
with \(H=H(t_j)\) and \(B=B(t_j)\), denoting them by \(\Phi_j\) and \(V_j\):
\[
 V_j:\cH_{\mathsf S}\longrightarrow
 \cH_{\mathsf E_j}\otimes\cH_{\mathsf S},\qquad
 \Phi_j(\rho)=\Tr_{\mathsf E_j}(V_j\rho V_j^\dagger).
\]
First, Proposition~\ref{prop:discretization} applies for every \(j\), so
\[
 \norm{\Phi_j-e^{\delta\cL_{t_j}}}_\diamond
 \leq10(\alpha\delta)^2.
\]
To compare \(\Phi_j\) with the actual evolution on this interval, we also
bound the error from replacing \(\cL_s\) by \(\cL_{t_j}\).
Duhamel's formula gives
\[
 \cE(t_{j+1},t_j)-e^{\delta\cL_{t_j}}
 =\int_{t_j}^{t_{j+1}}
 \cE(t_{j+1},s)(\cL_s-\cL_{t_j})
 e^{(s-t_j)\cL_{t_j}}\dd{s}.
\]
The factors \(\cE(t_{j+1},s)\) and \(e^{(s-t_j)\cL_{t_j}}\)
are CPTP and have diamond norm one, so
\[
 \norm{\cE(t_{j+1},t_j)-e^{\delta\cL_{t_j}}}_\diamond
 \leq\int_{t_j}^{t_{j+1}}L_{\cL}(s-t_j)\dd{s}
 =\frac12L_{\cL}\delta^2.
\]
Consequently,
\[
 \norm{\Phi_j-\cE(t_{j+1},t_j)}_\diamond
 \leq10(\alpha\delta)^2+\frac12L_{\cL}\delta^2.
\]
We telescope the difference between
the two \(J\)-step channels:
\[
 \begin{aligned}
 \Phi_{J-1}\circ\cdots\circ\Phi_0-\cE(T,0)=\sum_{j=0}^{J-1}
 \Phi_{J-1}\circ\cdots\circ\Phi_{j+1}
 \circ\bigl(\Phi_j-\cE(t_{j+1},t_j)\bigr)\circ\cE(t_j,0).
 \end{aligned}
\]
For \(j=J-1\), the composition to the left of
\(\Phi_j-\cE(t_{j+1},t_j)\) is the identity; also,
\(\cE(t_0,0)=\Id\).
The compositions \(\Phi_{J-1}\circ\cdots\circ\Phi_{j+1}\) and
\(\cE(t_j,0)\) are CPTP\@.  Taking diamond norms and summing gives
\begin{equation}
 \begin{aligned}
 \norm{\Phi_{J-1}\circ\cdots\circ\Phi_0
 -\cE(T,0)}_\diamond
 \leq J\left(10\alpha^2\delta^2+\frac12L_{\cL}\delta^2\right)=\frac{10\tau^2+L_{\cL}T^2/2}{J}.
 \end{aligned}
 \label{eq:td-product-error}
\end{equation}
We next implement \(\Phi_{J-1}\circ\cdots\circ\Phi_0\).
Set \(\Omega_j=U_{H,j}\oplus U_{B,j}\oplus U_{B,j}^\dagger\).
The step size, normalization factors, ancilla registers, and projector
\(P_{\rm L}\) are the same for every \(j\).  Hence
Lemma~\ref{lem:local-step} applies
with the same unitary \(G\) and a possibly different catalyst map
\(\gamma^{(j)}\), satisfying
\[
 G\bigl[(\ket0_{\mathsf E_j}\otimes\varphi)
   \oplus\Omega_j\gamma^{(j)}\varphi\bigr]
 =V_j\varphi\oplus\gamma^{(j)}\varphi,
 \qquad
 (\gamma^{(j)})^\dagger\gamma^{(j)}\preceq\alpha\delta I.
\]
For \(\psi\in\cH_{\mathsf S}\), use the spaces and the convention for
omitting fixed zero factors from Section~\ref{sec:full-product}, and set
\[
 \psi_0=\psi,\qquad
 \psi_{j+1}=(I_{\mathcal R_{<j}}\otimes V_j)\psi_j,\qquad
 x_j=(I_{\mathcal R_{<j}}\otimes\gamma^{(j)})\psi_j.
\]
Define \(W_J\psi=\psi_J\), \(\Gamma\psi=\bigoplus_jx_j\), and
\begin{equation}
 Q=I_{\cH_J}\oplus\bigoplus_j
 (I_{\mathcal R_{<j}}\otimes\Omega_j),\qquad
 S=(G_{J-1}\cdots G_0)Q.
 \label{eq:td-transducer-factorization}
\end{equation}
Let \(S_{11}:\cK\to\cK\) denote the private-to-private block of \(S\).
Repeating the global induction argument leading to
\eqref{eq:global-induction-state} gives
\[
 S(\psi\oplus\Gamma\psi)=W_J\psi\oplus\Gamma\psi.
\]
Here \(Q\) uses one time-indexed query.  Moreover,
\[
 \norm{\Gamma\psi}^2=\sum_j\norm{x_j}^2
 \leq J\alpha\delta\norm\psi^2=\tau\norm\psi^2.
\]
The same induction used to establish the Stinespring representation
\eqref{eq:product-stinespring-channel} shows that \(W_J\) induces
\(\Phi_{J-1}\circ\cdots\circ\Phi_0\).

The time-dependent polynomial bound for the private block in
Proposition~\ref{prop:td-private-polynomial} gives
\[
 \norm{\mathcal F_q(S_{11})}
 \leq\left(\frac{C_0\tau}{q}\right)^q
\]
whenever \(\alpha\delta\leq1/2\), \(J\geq8q\), and \(q\geq C_0\tau\).
Apply the reuse construction of Section~\ref{sec:reuse} to this transducer,
and define
\[
 \widetilde W_J:=\sum_{N=1}^{20q}\lambda_NP_N
\]
using the coefficients in \eqref{eq:lambda}.
Lemmas~\ref{lem:lcu-reuse-maps} and~\ref{lem:reuse-combination} give a
normalization-\(2\) projected unitary encoding
\((\mathcal V,\iota_{\rm in},\iota_{\rm out})\) of \(\widetilde W_J\), where
\(\mathcal V\) uses at most \(20q\) calls to \(S\).  The approximation bound
\eqref{eq:weighted-reuse-error}, the coefficient identity
\eqref{eq:reuse-combination}, and the smoothing bound
\eqref{eq:smoothing-bound} give
\[
 \begin{aligned}
 \norm{W_J-\widetilde W_J}
 &\leq\sqrt\tau\,\norm{\mathcal Q_q(S_{11})}\leq\sqrt\tau\,\norm{\mathcal F_q(S_{11})}
 \leq\eta_q.
 \end{aligned}
\]
For \(\eta_q\leq1/8\), applying Lemma~\ref{lem:oaa} to \(\widetilde W_J\)
gives a circuit using at most \(60q\) queries, with channel error
\begin{equation}
 \norm{\widetilde{\cE}
 -\Phi_{J-1}\circ\cdots\circ\Phi_0}_\diamond
 \leq2(5\eta_q)=10\eta_q.
 \label{eq:uniform-product-error}
\end{equation}
Combining \eqref{eq:uniform-product-error} with
\eqref{eq:td-product-error} gives
\begin{equation}
 \norm{\widetilde{\cE}-\cE(T,0)}_\diamond
 \leq
 10\eta_q
 +\frac{10\tau^2+L_{\cL}T^2/2}{J}.
 \label{eq:td-ledger}
\end{equation}
Finally, take
\[
 J\geq\max\left\{
 2\tau,\,8q,\,
 \frac{20\tau^2+L_{\cL}T^2}{\varepsilon}
 \right\}.
\]
The two terms in \eqref{eq:td-ledger} are then each at most \(\varepsilon/2\).
The query bound follows from \(60q\) and \eqref{eq:q-no-additive-constant}.
\end{proof}

The choice of \(J\) depends on \(L_{\cL}\), whereas the query count
\(60q\) does not.  Section~\ref{sec:gate-complexity} bounds the resulting
gate cost. The matching lower bound in \eqref{eq:intro-ham-bound}
already holds for the time-independent Hamiltonian subclass
\(B=0\)~\cite{LowChuang2019,GilyenEtAl2019}.

\begin{remark}[Lipschitz bounds from \(H\) and \(B\)]
Suppose that the Hamiltonian \(H\) and the stacked jump operator \(B\) are
Lipschitz continuous with constants \(L_H\) and \(L_B\), respectively; that
is, for all \(r,s\in[0,T]\),
\[
 \norm{H(s)-H(r)}\leq L_H|s-r|,
 \qquad
 \norm{B(s)-B(r)}\leq L_B|s-r|.
\]
Then \eqref{eq:td-promises} holds with
\(L_{\cL}=2L_H+4\alpha_BL_B\).  To verify this, write
\(B_s=B(s)\), \(B_r=B(r)\), and estimate
\begin{align*}
 \norm{[H(s)-H(r),\,\cdot\,]}_\diamond
 &\leq2L_H|s-r|,\\
 \norm{B_s(\,\cdot\,)B_s^\dagger
       -B_r(\,\cdot\,)B_r^\dagger}_\diamond
 &\leq(\norm{B_s}+\norm{B_r})\norm{B_s-B_r}
 \leq2\alpha_BL_B|s-r|,\\
 \norm{B_s^\dagger B_s-B_r^\dagger B_r}
 &\leq(\norm{B_s}+\norm{B_r})\norm{B_s-B_r}
 \leq2\alpha_BL_B|s-r|.
\end{align*}
The partial trace has diamond norm one, and
\[
 \left\|\frac12\{B_s^\dagger B_s-B_r^\dagger B_r,\,\cdot\,\}
 \right\|_\diamond\leq\norm{B_s^\dagger B_s-B_r^\dagger B_r}.
\]
Adding the three contributions gives
\(\norm{\cL_s-\cL_r}_\diamond\leq(2L_H+4\alpha_BL_B)|s-r|\).
\end{remark}

\section{Gate complexity}
\label{sec:gate-complexity}

A direct implementation of \(S=(G_{J-1}\cdots G_0)Q\) uses \(J\) local
updates and \(J\) registers \(\mathsf E_0,\ldots,\mathsf E_{J-1}\)
to encode the Kraus labels.
We reduce the gate and qubit counts by factoring the update product into
rotations and exchanges, using a compressed Kraus-label representation that
stores only the positions and values of nonzero labels, and
implementing the rotations with the factorization in
\cite[Lemma~10]{ChenEtAl2026Gate}.

Recall that \(\tau=\alpha T\), that \(L_{\cL}\) is the Lipschitz constant in
\eqref{eq:td-promises}, and that \(a\) is the oracle ancilla size from
Section~\ref{sec:model}.
Set \(T=t\) and \(L_{\cL}=0\) in the time-independent case.

\begin{theorem}[Gate complexity]
\label{thm:gate-complexity}
Under the assumptions of Theorem~\ref{thm:time-dependent}, let
\(0<\varepsilon\leq1/2\) and set
\[
 \ell_\varepsilon=\log\!\frac{e(1+\tau+L_{\cL}T^2)}{\varepsilon}.
\]
The channel \(\cE(T,0)\) can be approximated to diamond-norm error \(\varepsilon\)
with the query complexity of Theorem~\ref{thm:time-dependent} and
\begin{equation}
 O\!\left[
 \left(\tau+
 \frac{\log(1/\varepsilon)}
 {\log\!\bigl(e+\log(1/\varepsilon)/\tau\bigr)}\right)
 \bigl(a+\ell_\varepsilon[\log(m+1)+\ell_\varepsilon]\bigr)
 \right]
\label{eq:gate-main-bound}
\end{equation}
one- and two-qubit gates, excluding oracle calls.
\end{theorem}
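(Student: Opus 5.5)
The plan is to implement the query-optimal circuit of Theorem~\ref{thm:time-dependent} with cheaper oracle-independent operations, and then to split the evolution into segments. I would proceed in four steps.

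\textbf{Step 1 (factoring each local update).} Write each \(G_j\) as a product of three pieces:
\begin{itemize}
\item a rotation given by \(M\otimes I\), acting on the \(\ket0\)-ancilla components of the public and private summands at step \(j\);
\item a Kraus-label exchange, which moves the \(\cH_{\rm L}\) component between the public register \(\mathsf E_j\) and the private \(z,b\) summands;
\item phases and permutations inside \(\cK_j\), namely \(i\) on \(h^\perp\) and \(-1\) on \(b^\perp\).
\end{itemize}
The phase–permutation factors commute past the other updates, because they act only on \(\cK_j\) and the \(G_{j'}\) with \(j'\neq j\) act trivially there. We therefore collect them immediately after \(Q\). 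Controlled on the register \(\mathsf T\), they form a single \(j\)-independent operation costing \(O(a+\log(m+1))\) gates.

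\textbf{Step 2 (compressed Kraus labels).} In the reuse/LCU/OAA circuit, \(S\) and \(S^\dagger\) are called \(O(q)\) times. Each call changes at most one nonzero Kraus label, because only the exchange at the current private index \(\mathsf T=j\) writes or erases a label. Starting from all-zero labels, every reachable state therefore has at most \(O(q)\) nonzero labels. I would store them as a list of pairs (position in \([J]\), value in \([m]\)), sorted by decreasing position. This uses \(O(q(\log J+\log(m+1)))\) qubits.

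The causal structure of \(S\) ensures that an exchange at step \(j\) only occurs after the last recorded nonzero label. Hence an insertion or removal always touches the head of the list, and can be done by a controlled shift of the register array together with a comparison. This costs \(O(q\log J)\) gates per call. To make this affordable, segment lengths must keep \(q\) per segment at \(O(\ell_\varepsilon)\), as arranged in Step 4.

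\textbf{Step 3 (rotations).} With the labels fixed, the rotations \(M\) act in sequence on the steps after the last nonzero label position \(p\). The product over \(j\in(p,J)\) of these step-dependent \(3\)-dimensional rotations, which act on the time-indexed private components, is what must be implemented. I would decompose \((p,J)\) into \(O(\log J)\) dyadic intervals and apply \cite[Lemma~10]{ChenEtAl2026Gate} on each, which gives \(\mathrm{polylog}(J)\) gates per interval. In the time-dependent case \(G\) is the same for all \(j\), so this lemma applies directly. Since \(J=\mathrm{poly}(\tau,L_{\cL}T^2,1/\varepsilon)\), we have \(\log J=O(\ell_\varepsilon)\).

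\textbf{Step 4 (segmentation).} Split \([0,T]\) into \(K\approx\lceil\tau\rceil\) segments of normalized length \(O(1)\), each run with error \(\varepsilon/K\). By Lemma~\ref{lem:factorial-inversion}, each segment uses
\[
q_{\rm seg}=O\bigl(1+\log(K/\varepsilon)/\log\log(K/\varepsilon)\bigr)
\]
queries, and the error budgets add by the triangle inequality for the diamond norm. If \(\tau\) is small relative to \(\log(1/\varepsilon)\), a single segment suffices. I expect the total query count to be preserved up to constants. The delicate bookkeeping is the regime \(\tau\lesssim\log(1/\varepsilon)\), where \(K\) must be balanced against the \(\log\log\) denominator. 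Per query, the gate cost is then
\[
O\bigl(a+q_{\rm seg}(\log(m+1)+\log J)\bigr)=O\bigl(a+\ell_\varepsilon[\log(m+1)+\ell_\varepsilon]\bigr),
\]
plus the cost of the LCU state preparation, \(F_N\), and the OAA reflections, which are \(O(\log q)\) per call. Multiplying by the query count gives \eqref{eq:gate-main-bound}.

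The main obstacle is the correctness of the reordering in Step 1 combined with the compressed representation in Step 2. One must check that the exchanges and rotations interleave so that at most one label changes per call, including calls to \(S^\dagger\) inside OAA. One must also check that the encoding into the compressed list acts unitarily on the entire reachable subspace, not just on the ideal trajectory. I would first verify this using the monotonicity of the \(\mathsf T\)-index structure, as in Appendix~\ref{sec:private-block-proof}.
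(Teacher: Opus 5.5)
Your Steps 1--3 follow the paper's construction closely. You factor each $G_j$ into a private phase/permutation, a Kraus-label exchange, and a rotation, and move the private factors next to the query. You store the nonzero labels as position--value pairs in decreasing order, so an exchange only inserts or removes the head of the list. You then apply \cite[Lemma~10]{ChenEtAl2026Gate} on aligned power-of-two intervals after the last nonzero label. The genuine gap is in Step~4. With $K\approx\lceil\tau\rceil$ segments of normalized length $O(1)$, each run at error $\varepsilon/K$, your own per-segment estimate gives a total query count of $K q_{\rm seg}=\Theta\bigl(\tau\log(\tau/\varepsilon)/\log\log(\tau/\varepsilon)\bigr)$. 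This is the Li--Wang-type bound, not the optimal one. For fixed $\varepsilon$ and $\tau\to\infty$ it exceeds $\tau+\log(1/\varepsilon)/\log(e+\log(1/\varepsilon)/\tau)=\Theta(\tau)$ by an unbounded factor. So neither the query complexity asserted in the theorem nor the gate bound, which is the query count times the per-query cost, follows. You also place the difficulty in the wrong regime. For $\tau\le\ell_\varepsilon$ a single segment works and the query count comes directly from Lemma~\ref{lem:factorial-inversion}. The danger is large $\tau$, where too many segments each pay the $\log/\log\log$ precision cost.

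The missing idea is to give the segments normalized length $\Theta(\ell_\varepsilon)$, not $O(1)$: take $n=\max\{1,\lceil\tau/\ell_\varepsilon\rceil\}$. Then $\tau/n\leq\ell_\varepsilon$ and $n\leq1+\tau$, so the per-segment degree satisfies $q=O(\tau/n+\log(20n/\varepsilon))=O(\ell_\varepsilon)$. That is all the gate count needs to control the $q^2$ label-update term. For $n>1$ one has $n<2\tau/\ell_\varepsilon$, hence $nq=O(\tau)$, and the per-segment precision cost is absorbed into the linear term.

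Two smaller points need tightening to reach the stated exponent $\ell_\varepsilon^2$. First, ``polylog$(J)$ per interval'' is too loose. With $O(\log J)$ intervals you need $O(\log J)$ gates per interval: the $W^{\rm in/out}_L$ circuits have $O(\nu)$ gates, and computing the endpoints $u_\nu$ from the list head plus the XOR translation costs $O(\log J)$. This gives an $O((\log J)^2)$ per-call term, which is missing from your per-query formula but is $O(\ell_\varepsilon^2)$. Second, before invoking Lemma~10 you must strip the diagonal phases from $M$. The paper puts the phase $i$ on all of $h$ and the signed $z\leftrightarrow b$ swap into $D$. The remaining factor $R=GG(0)^\dagger$ then acts by the real matrix $M\operatorname{diag}(1,-i,-1)$, which the basis change $V$ turns into $\operatorname{Rot}(d)\oplus1$.
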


For one \(J\)-step time segment, the construction below makes \(O(q)\) calls
to the transducer, and each call updates \(O(q)\) stored Kraus labels.  This
produces a \(q^2\) term in the gate count.  In the final parameter choice, we
divide the evolution into time segments so that the degree \(q\) on each
segment is \(O(\ell_\varepsilon)\), while the total query count remains within
the bound of Theorem~\ref{thm:time-dependent}.

\subsection{Factoring the update product}
\label{sec:gate-factorization}

We first construct a circuit for the transducer \(S\) built from
\(J\) evolution steps of length \(\delta\).  Fix an integer \(q\geq1\),
take \(J\geq8q\) to be a power of two, and assume
\(0<\alpha\delta\leq1/2\).  Section~\ref{sec:gate-parameters} will
choose \(q,J,\delta\) and divide \([0,T]\) into time segments on which
this circuit is used.

The coefficients \(\kappa,\beta,\mu\) in \eqref{eq:local-scalars} and
\(d=(1-\mu)/(1+\mu)\) depend only on
\(\delta,\alpha_H,\alpha_B\), so the unitary \(G\) is the same
for all \(j\).  Time dependence enters only through the oracles
\(U_{H,j},U_{B,j}\) in \(Q\).

To separate changes to the Kraus labels from rotations, set
\begin{equation}
 G(0):=G\big|_{\kappa=\beta=0},\qquad
 R:=GG(0)^\dagger,\qquad G=RG(0).
 \label{eq:gate-local-factorization}
\end{equation}
Setting \(\kappa=\beta=0\) in \eqref{eq:reflection-matrix} makes
\(M=\operatorname{diag}(1,i,-1)\).  Set
\[
 \begin{aligned}
 P_\perp
 &=I-\proj0_{\mathsf A_B}\otimes P_{\rm L}\\
 &=\bigl[(I_{\mathsf A_B}-\proj0_{\mathsf A_B})\otimes I_{\mathsf E}
   +\proj0_{\mathsf A_B}\otimes\proj0_{\mathsf E}\bigr]\otimes I_{\mathsf S}.
 \end{aligned}
\]
Thus \(P_\perp\) projects onto the subspace defined by
\(\mathsf A_B\ne0\) or \(\mathsf E=0\).  Substituting
\(M=\operatorname{diag}(1,i,-1)\) into the explicit action of \(G\) in
\eqref{eq:G-full-action}, we obtain
\begin{equation}
 G(0)\begin{pmatrix}\ket0_{\mathsf E}\otimes s\\\ell\\h\\z\\b\end{pmatrix}
 =\begin{pmatrix}
 \ket0_{\mathsf E}\otimes s\\
 (\bra0_{\mathsf A_B}\otimes P_{\rm L})z\\
 ih\\-b\\
 \ket0_{\mathsf A_B}\otimes\ell+P_\perp z
 \end{pmatrix}.
 \label{eq:gate-zero-update}
\end{equation}
Here \(s\in\cH_{\mathsf S}\), \(\ell\in\cH_{\rm L}\otimes\cH_{\mathsf S}\),
and \((h,z,b)^\top\in\cH_{\rm priv}\).

We factor \(G(0)\) into a unitary acting only on \(\cH_{\rm priv}\) and an
exchange with \(\cH_{\rm L}\otimes\cH_{\mathsf S}\).  Define
\begin{equation}
 D\begin{pmatrix}h\\z\\b\end{pmatrix}
 =\begin{pmatrix}ih\\-b\\z\end{pmatrix}
 \quad\text{on }\cH_{\rm priv},
 \label{eq:gate-private-update}
\end{equation}
and, on \((\cH_{\mathsf E}\otimes\cH_{\mathsf S})\oplus\cH_{\rm priv}\), define
\begin{equation}
 E\begin{pmatrix}\ket0_{\mathsf E}\otimes s\\\ell\\h\\z\\b\end{pmatrix}
 =\begin{pmatrix}
 \ket0_{\mathsf E}\otimes s\\
 (\bra0_{\mathsf A_B}\otimes P_{\rm L})b\\
 h\\z\\
 \ket0_{\mathsf A_B}\otimes\ell+P_\perp b
 \end{pmatrix}.
 \label{eq:gate-exchange}
\end{equation}
The operator \(E\) exchanges \(\ell\) with the
\(\ket0_{\mathsf A_B}\otimes\cH_{\rm L}\otimes\cH_{\mathsf S}\)
part of \(b\) and fixes all other components, so
\(E^\dagger=E=E^{-1}\).
Substituting \((ih,-b,z)\) into \eqref{eq:gate-exchange} reproduces
\eqref{eq:gate-zero-update}:
\[
 G(0)=E\bigl(I_{\cH_{\mathsf E}\otimes\cH_{\mathsf S}}\oplus D\bigr).
\]

To determine the subspaces on which \(R=GG(0)^\dagger\) acts nontrivially,
first invert
\eqref{eq:gate-zero-update}:
\[
 G(0)^\dagger
 \begin{pmatrix}\ket0_{\mathsf E}\otimes s\\\ell\\h\\z\\b\end{pmatrix}
 =
 \begin{pmatrix}
  \ket0_{\mathsf E}\otimes s\\
  (\bra0_{\mathsf A_B}\otimes P_{\rm L})b\\
  -ih\\
  \ket0_{\mathsf A_B}\otimes\ell+P_\perp b\\
  -z
 \end{pmatrix}.
\]
Write \(h_0=(\bra0_{\mathsf A_H}\otimes I_{\mathsf S})h\) and
\(z_0=(\bra0_{\mathsf A_B}\bra0_{\mathsf E}\otimes I_{\mathsf S})z\).
Applying \(G\) to the vector on the right-hand side, the matrix \(M\) acts on
\(s,-ih_0,-z_0\) by \eqref{eq:G-full-action}.  Set
\[
 \begin{pmatrix}s'\\h_0'\\z_0'\end{pmatrix}
 =\bigl[M\operatorname{diag}(1,-i,-1)\otimes I_{\mathsf S}\bigr]
 \begin{pmatrix}s\\h_0\\z_0\end{pmatrix}.
\]
Evaluating \(R=GG(0)^\dagger\) using the explicit action of \(G\) in
\eqref{eq:G-full-action} gives
\begin{equation}
 R\begin{pmatrix}\ket0_{\mathsf E}\otimes s\\\ell\\h\\z\\b\end{pmatrix}
 =
 \begin{pmatrix}
  \ket0_{\mathsf E}\otimes s'\\\ell\\
  h+\ket0_{\mathsf A_H}\otimes(h_0'-h_0)\\
  z+\ket0_{\mathsf A_B}\ket0_{\mathsf E}\otimes(z_0'-z_0)\\b
 \end{pmatrix}.
 \label{eq:gate-rotation-action}
\end{equation}
On \((s,h_0,z_0)\), the unitary \(R\) acts by
\[
 M\operatorname{diag}(1,-i,-1)
 =\frac1{1+\mu}
 \begin{pmatrix}
  1-\mu&-2\kappa&-2\beta\\
  2\kappa&1-\kappa^2+\beta^2&-2\kappa\beta\\
  2\beta&-2\kappa\beta&1+\kappa^2-\beta^2
 \end{pmatrix}.
\]

To reduce the action of \(R\) to a two-dimensional rotation, choose
\begin{equation}
 V=\frac1{\sqrt\mu}\begin{pmatrix}\kappa&-\beta\\\beta&\kappa\end{pmatrix},
 \qquad
 \operatorname{Rot}(x)=
 \begin{pmatrix}x&-\sqrt{1-x^2}\\\sqrt{1-x^2}&x\end{pmatrix}
 \quad(0\leq x\leq1).
 \label{eq:gate-basis-change}
\end{equation}
Since \(V^\dagger(\kappa,\beta)^\top=(\sqrt\mu,0)^\top\),
\(d=(1-\mu)/(1+\mu)\), and
\(\sqrt{1-d^2}=2\sqrt\mu/(1+\mu)\), conjugating by \(1\oplus V\) gives
\begin{equation}
 (1\oplus V^\dagger)M\operatorname{diag}(1,-i,-1)(1\oplus V)
 =\operatorname{Rot}(d)\oplus1.
 \label{eq:gate-rotation-matrix}
\end{equation}

We next reorder the factors in \(G_{J-1}\cdots G_0\).
As in Section~\ref{sec:full-product}, extend \(G(0)\) and \(E\) to
\(\cH_{j+1}\oplus\cK_j\) by tensoring with
\(I_{\mathcal R_{<j}}\), and by the identity on its orthogonal
complement; denote these extensions by \(G_j(0)\) and \(E_j\).
Let \(D_j\) act as \(I_{\mathcal R_{<j}}\otimes D\) on \(\cK_j\)
and as the identity elsewhere, and set \(R_j=G_jG_j(0)^\dagger\).
The local factorization then reads
\[
 G_j=R_jE_jD_j.
\]
To move \(E_j\) past \(R_0,\ldots,R_{j-1}\), first identify the
subspaces on which \(E_j\) and \(R_j\) can act nontrivially.
Recall that
\begin{equation}
 \cH_{j+1}\cap\cH_j^\perp
 =\mathcal R_{<j}\otimes\cH_{\rm L}\otimes
 \biggl(\bigotimes_{r=j+1}^{J-1}\mathbb C\ket0_{\mathsf E_r}\biggr)
 \otimes\cH_{\mathsf S}.
 \label{eq:gate-last-label-space}
\end{equation}
Equations \eqref{eq:gate-exchange} and \eqref{eq:gate-rotation-action}
give
\begin{equation}
 \begin{aligned}
 \operatorname{ran}(E_j-I)
 &\subseteq(\cH_{j+1}\cap\cH_j^\perp)\oplus\cK_j,\\
 \operatorname{ran}(R_j-I)&\subseteq\cH_j\oplus\cK_j,
 \qquad \operatorname{ran}(D_j-I)\subseteq\cK_j.
 \end{aligned}
 \label{eq:gate-supports}
\end{equation}
For \(k<j\), \(\cH_k\subseteq\cH_j\) and \(\cK_k\perp\cK_j\), so
\[
 \bigl[(\cH_{j+1}\cap\cH_j^\perp)\oplus\cK_j\bigr]
 \perp(\cH_k\oplus\cK_k),\qquad
 (E_j-I)(R_k-I)=(R_k-I)(E_j-I)=0.
\]
Thus \(E_jR_k=R_kE_j\) for \(k<j\).  Similarly,
\(\cK_j\perp(\cH_{k+1}\oplus\cK_k)\) for \(j\ne k\) gives
\(D_jE_k=E_kD_j\) and \(D_jR_k=R_kD_j\).
Move all \(D_j\) to the right, then move each \(E_j\) past
\(R_0,\ldots,R_{j-1}\):
\begin{equation}
 Q':=(D_{J-1}\cdots D_0)Q,\qquad
 S=(R_{J-1}\cdots R_0)(E_{J-1}\cdots E_0)Q'.
 \label{eq:gate-global-factorization}
\end{equation}
The factor \(Q'\) is one query followed by \(D\) on each private
summand.  It fixes \(\cH_J\), and for \(\xi\in\mathcal R_{<j}\),
\begin{equation}
 Q'\left[\xi\otimes\begin{pmatrix}h\\z\\b\end{pmatrix}\right]
 =\xi\otimes\begin{pmatrix}iU_{H,j}h\\-U_{B,j}^\dagger b\\U_{B,j}z\end{pmatrix}.
 \label{eq:gate-query-action}
\end{equation}
Here \(U_{H,j}=U_H,U_{B,j}=U_B\) in the time-independent case.

It remains to implement the exchanges and rotations in
\eqref{eq:gate-global-factorization} without applying all \(J\) factors
individually.
The spaces \((\cH_{j+1}\cap\cH_j^\perp)\oplus\cK_j\) are mutually
orthogonal, and each \(E_j\) fixes their orthogonal complement.
Consequently
\begin{equation}
 E_{J-1}\cdots E_0
 =I_{\cH_0}\oplus\bigoplus_{j=0}^{J-1}
 \left.E_j\right|_{(\cH_{j+1}\cap\cH_j^\perp)\oplus\cK_j}.
 \label{eq:gate-zero-product}
\end{equation}
The next subsection implements this direct sum using the positions and values
of the nonzero Kraus labels.

\subsection{Compressing and updating the Kraus labels}
\label{sec:gate-labels}

We represent only the subspace reached by the simulation circuit, storing the
positions and values of its nonzero Kraus labels.
For a string \(\boldsymbol k=(k_0,\ldots,k_{J-1})\in\{0,\ldots,m\}^J\), set
\[
 w(\boldsymbol k):=\#\{0\leq r<J:k_r\ne0\}.
\]
Write \(w=w(\boldsymbol k)\) when the string is fixed, and let
\(j_1<\cdots<j_w\) be its nonzero positions, with \(j_0=-1\).

We first bound \(w\) on states reached by the circuit.  The query operation
\eqref{eq:gate-query-action} leaves the earlier labels
\(k_0,\ldots,k_{j-1}\) unchanged.  On \(\cH_j\oplus\cK_j\), the operator
\(R_j=I_{\mathcal R_{<j}}\otimes R\) also leaves these labels unchanged.
Moreover, \eqref{eq:gate-rotation-action} shows that \(R\) only mixes
\(s,h_0,z_0\), all with label zero at position \(j\); the later labels are
zero throughout \(\cH_j\oplus\cK_j\).  Thus \(Q'\), every \(R_j\), and
their adjoints preserve \(w\).  The exchange \(E_j\) in
\eqref{eq:gate-exchange} adds or removes the label at position \(j\), so a
call to \(S\) or \(S^\dagger\) changes \(w\) by at most one.

The state preparations, SELECT controls, and reflections in the OAA circuit
do not change the Kraus labels.  Since that circuit makes at most \(60q\)
calls to \(S\) or \(S^\dagger\), a state that starts with \(w=0\) satisfies
\begin{equation}
 w(\boldsymbol k)\leq v\leq60q
 \label{eq:gate-label-bound}
\end{equation}
after \(v\) calls.  Thus every input to a call has \(w\leq60q\).  Because the
exchange within that call may add one label, it suffices to encode strings
with \(w\leq60q+1\).

For \(w\leq60q+1\), define
\begin{equation}
 \ket{\operatorname{enc}(\boldsymbol k)}
 :=\ket{j_w,k_{j_w}}\cdots\ket{j_1,k_{j_1}}
       \otimes\ket{0,0}^{\otimes(60q+1-w)}.
 \label{eq:gate-label-encoding}
\end{equation}
For \(w=0\), the string of nonzero position--label pairs is empty.
Write \((\mathsf J_r,\mathsf L_r)\) for the \(r\)-th register pair,
where \(\mathsf J_r\) has \(\log_2J\) qubits and \(\mathsf L_r\)
has \(\lceil\log_2(m+1)\rceil\) qubits.
In the compressed representation \eqref{eq:gate-label-encoding},
\(\mathsf L_1=0\) if and only if \(w=0\).  When \(w>0\),
\(\mathsf J_1=j_w\) and \(\mathsf L_1=k_{j_w}>0\).
The \(60q+1\) position--label register pairs use
\begin{equation}
 (60q+1)\left[\log_2J+\left\lceil\log_2(m+1)\right\rceil\right]
 =O\!\left(q[\log J+\log(m+1)]\right)
 \label{eq:gate-label-space}
\end{equation}
qubits.

Following the flag-encoding convention of
\cite[Convention~1]{ChenEtAl2026}, we encode this subspace of
\(\cH_J\oplus\cK\) using registers \(\mathsf P,\mathsf T\) for the
direct-sum indices.
The values \(\mathsf P=0\) and
\(\mathsf P=1,\mathsf T=j\) identify the public summand \(\cH_J\) and the
private summand \(\cK_j\), respectively.  Within \(\cH_{\rm priv}\), a
two-qubit register \(\mathsf C\) uses the values \(00,01,10\) to identify
\(h,z,b\), respectively.
Following the notation of Section~\ref{sec:full-product} and
Figure~\ref{fig:gj-register-spaces}, write \(\mathsf F\) for the private copy
of \(\mathsf E\) in \(\cH_{\rm priv}\).  Thus
\(z,b\in\cH_{\mathsf A_B}\otimes\cH_{\mathsf F}\otimes\cH_{\mathsf S}\)
occupy \(\mathsf A_B\mathsf F\mathsf S\).
A public basis vector in this subspace is encoded as
\begin{equation}
 \left(\bigotimes_{r=0}^{J-1}\ket{k_r}_{\mathsf E_r}\right)\otimes\psi
 \ \longleftrightarrow\
 \ket{\operatorname{enc}(\boldsymbol k)}
 \ket0_{\mathsf P}\ket0_{\mathsf T}\ket{00}_{\mathsf C}
 \ket0_{\mathsf A_H\mathsf A_B\mathsf F}\otimes\psi.
 \label{eq:gate-public-encoding}
\end{equation}
For a vector in \(\cK_j\), extend its earlier Kraus labels to a length-\(J\)
string by setting \(k_r=0\) for \(r\geq j\).  Then
\begin{equation}
 \begin{aligned}
 &\left(\bigotimes_{r=0}^{j-1}\ket{k_r}_{\mathsf E_r}\right)
       \otimes\begin{pmatrix}h\\z\\b\end{pmatrix}\longleftrightarrow
 \ket{\operatorname{enc}(\boldsymbol k)}\ket1_{\mathsf P}\ket j_{\mathsf T}
 \otimes\begin{multlined}[t]
 \Bigl[\ket{00}_{\mathsf C}\otimes
 (I_{\mathsf A_H}\otimes\ket0_{\mathsf A_B\mathsf F}\otimes I_{\mathsf S})h\\
 +\ket{01}_{\mathsf C}\ket0_{\mathsf A_H}\otimes z
 +\ket{10}_{\mathsf C}\ket0_{\mathsf A_H}\otimes b\Bigr].
 \end{multlined}
 \end{aligned}
 \label{eq:gate-private-encoding}
\end{equation}
The register order after \(\ket{\operatorname{enc}(\boldsymbol k)}\) is
\(\mathsf P\mathsf T\mathsf C\mathsf A_H\mathsf A_B\mathsf F\mathsf S\).
In addition to the position--label registers in
\eqref{eq:gate-label-encoding}, the register states in
\eqref{eq:gate-public-encoding} and
\eqref{eq:gate-private-encoding} satisfy
\begin{equation}
 \begin{array}{c|l}
 (\mathsf P,\mathsf C)&\text{conditions}\\ \hline
 (0,00)&\mathsf T=\mathsf A_H=\mathsf A_B=\mathsf F=0\\
 (1,00)&j_w<\mathsf T<J,\quad\mathsf A_B=\mathsf F=0\\
 (1,01)\text{ or }(1,10)&j_w<\mathsf T<J,\quad
   \mathsf A_H=0,\quad0\leq\mathsf F\leq m.
 \end{array}
 \label{eq:gate-encoding-conditions}
\end{equation}
We call a register representation of the form
\eqref{eq:gate-public-encoding} or~\eqref{eq:gate-private-encoding} a valid
encoding if its position--label registers have the form
\eqref{eq:gate-label-encoding} and its remaining registers satisfy
\eqref{eq:gate-encoding-conditions}.
In these registers, the private-space unitary \(D\) in
\eqref{eq:gate-private-update} acts as
\[
 \ket{00}_{\mathsf C}\mapsto i\ket{00}_{\mathsf C},\qquad
 \ket{01}_{\mathsf C}\mapsto\ket{10}_{\mathsf C},\qquad
 \ket{10}_{\mathsf C}\mapsto-\ket{01}_{\mathsf C}
 \quad(\mathsf P=1).
\]
It uses \(O(1)\) gates, so \(Q'\) still uses one oracle query.

\paragraph{Implementing \(E_{J-1}\cdots E_0\).}
We first determine which summand in \eqref{eq:gate-zero-product} contains a
given valid encoding.  A public encoding with \(w>0\) lies in
\(\cH_{j_w+1}\cap\cH_{j_w}^\perp\) by
\eqref{eq:gate-last-label-space}, so only \(E_{j_w}\) can act nontrivially;
when \(w=0\), the state lies in \(\cH_0\) and every \(E_j\) acts trivially.
A private encoding with \(\mathsf T=j\) lies in \(\cK_j\), so only \(E_j\)
can act nontrivially.  Moreover,
\eqref{eq:gate-encoding-conditions} ensures that its stored Kraus-label
positions satisfy \(j_w<j\).

Fix \(0\leq j<J\) and a valid encoding in the \(b\)-component of \(\cK_j\).
Suppose that its Kraus-label string \(\boldsymbol k\) has \(w\leq60q\)
nonzero positions \(j_1<\cdots<j_w<j\) and that
\(\mathsf F=k\in\{1,\ldots,m\}\).  By \eqref{eq:gate-exchange}, \(E_j\)
exchanges this state with the public state obtained by adding \(k_j=k\).
The private state has \(w\) nonzero Kraus labels, whereas the public state
has \(w+1\) and largest nonzero position \(j\).
With \(\mathsf A_H=\mathsf A_B=0\) and the common system state left implicit,
the exchange is
\begin{equation}
 \begin{aligned}
 &\ket{j,k}\ket{j_w,k_{j_w}}\cdots\ket{j_1,k_{j_1}}
 \ket{0,0}^{\otimes(60q-w)}
 \ket{0,0,00,0}_{\mathsf P\mathsf T\mathsf C\mathsf F}\\
 &\qquad\longleftrightarrow
 \ket{j_w,k_{j_w}}\cdots\ket{j_1,k_{j_1}}
 \ket{0,0}^{\otimes(60q+1-w)}
 \ket{1,j,10,k}_{\mathsf P\mathsf T\mathsf C\mathsf F}.
 \end{aligned}
 \label{eq:gate-label-exchange}
\end{equation}

Writing \((x_r,y_r)\) for the contents of
\((\mathsf J_r,\mathsf L_r)\), let \(U\) perform the following cyclic shift,
conditioned on \((\mathsf P,\mathsf C)=(1,10)\):
\begin{equation}
 \left(\bigotimes_{r=1}^{60q+1}\ket{x_r,y_r}\right)
       \ket{t,f}_{\mathsf T\mathsf F}
 \longmapsto
 \ket{t,f}\left(\bigotimes_{r=1}^{60q}\ket{x_r,y_r}\right)
       \ket{x_{60q+1},y_{60q+1}}_{\mathsf T\mathsf F},
 \label{eq:gate-label-shift}
\end{equation}

Let \(X\) exchange
\begin{equation}
 \ket{0,00}_{\mathsf P\mathsf C}
 \longleftrightarrow\ket{1,10}_{\mathsf P\mathsf C}
 \quad\text{when}\quad
 \mathsf A_H=\mathsf A_B=\mathsf T=\mathsf F=0,
 \quad 1\leq\mathsf L_1\leq m,
 \label{eq:gate-type-exchange}
\end{equation}
and fix every other basis state.  Undo the shift with \(U^\dagger\) after
applying \(X\).  Since \(X^\dagger=X\),
\begin{equation}
 (U^\dagger XU)^\dagger=U^\dagger XU.
 \label{eq:gate-exchange-circuit}
\end{equation}
\begin{algorithm}[!htb]
\caption{Implementing \(E_{J-1}\cdots E_0\)}
\label{alg:gate-zero-product}
\algrenewcommand\algorithmicrequire{\textbf{Input:}}
\small
\begin{algorithmic}[1]
 \Require A valid encoding
 \eqref{eq:gate-public-encoding}--\eqref{eq:gate-private-encoding}
 with \(w\leq60q\).
 \For{\(r=1,\ldots,60q+1\)}
  \State Swap \((\mathsf J_r,\mathsf L_r)\) with
    \((\mathsf T,\mathsf F)\), controlled on
    \((\mathsf P,\mathsf C)=(1,10)\).
 \EndFor
 \State Apply \(X\) from \eqref{eq:gate-type-exchange}.
 \For{\(r=60q+1,\ldots,1\)}
  \State Swap \((\mathsf J_r,\mathsf L_r)\) with
    \((\mathsf T,\mathsf F)\), controlled on
    \((\mathsf P,\mathsf C)=(1,10)\).
 \EndFor
\end{algorithmic}
\end{algorithm}

We verify that the algorithm implements the exchange
\eqref{eq:gate-label-exchange}.  The direct-sum form
\eqref{eq:gate-zero-product} and \eqref{eq:gate-exchange-circuit} show that
both the target operation and the circuit are self-adjoint.  It therefore
suffices to check the \(\cK_j\to\cH_J\) direction.  Applying \(U\), \(X\), and
\(U^\dagger\) to
the private encoding in the second line of
\eqref{eq:gate-label-exchange} gives
\[
 \begin{aligned}
 &U^\dagger XU\bigl[
 \ket{j_w,k_{j_w}}\cdots\ket{j_1,k_{j_1}}
 \ket{0,0}^{\otimes(60q+1-w)}
 \ket{1,j,10,k}_{\mathsf P\mathsf T\mathsf C\mathsf F}\bigr]\\
 &=U^\dagger X\bigl[\ket{j,k}\ket{j_w,k_{j_w}}\cdots\ket{j_1,k_{j_1}}
 \ket{0,0}^{\otimes(60q-w)}
 \ket{1,0,10,0}_{\mathsf P\mathsf T\mathsf C\mathsf F}\bigr]\\
 &=U^\dagger\bigl[\ket{j,k}\ket{j_w,k_{j_w}}\cdots\ket{j_1,k_{j_1}}
 \ket{0,0}^{\otimes(60q-w)}
 \ket{0,0,00,0}_{\mathsf P\mathsf T\mathsf C\mathsf F}\bigr]\\
 &=\ket{j,k}\ket{j_w,k_{j_w}}\cdots\ket{j_1,k_{j_1}}
 \ket{0,0}^{\otimes(60q-w)}
 \ket{0,0,00,0}_{\mathsf P\mathsf T\mathsf C\mathsf F}.
 \end{aligned}
\]
The first two equalities use \eqref{eq:gate-label-shift} and
\eqref{eq:gate-type-exchange}, respectively.  In the last equality,
\(U^\dagger\) does not act because \(\mathsf P=0\).

For every other valid encoding, one of the following conditions makes \(X\)
act as the identity after \(U\):
\[
 \begin{array}{c|c}
 \text{input condition}&\text{condition after }U\text{ that makes }X=I\\ \hline
 \mathsf P=0,\ w=0&\mathsf L_1=0\\
 \mathsf P=1,\ \mathsf C=00\text{ or }01&\mathsf P=1,\ \mathsf C=00\text{ or }01\\
 \mathsf P=1,\ \mathsf C=10,\ \mathsf A_B\ne0&\mathsf A_B\ne0\\
 \mathsf P=1,\ \mathsf C=10,\ \mathsf F=0&\mathsf L_1=0
 \end{array}
\]
Thus \(U^\dagger XU=U^\dagger U=I\) on these inputs, matching
\(E_{J-1}\cdots E_0\) by \eqref{eq:gate-exchange}.

Algorithm~\ref{alg:gate-zero-product} uses \(2(60q+1)\) swaps of
\((\mathsf J_r,\mathsf L_r)\) with \((\mathsf T,\mathsf F)\), each costing
\(O(\log J+\log(m+1))\) one- and two-qubit gates when controlled by
\(\mathsf P\) and \(\mathsf C\).
Throughout this section, we implement zero tests with multi-controlled NOT
gates \cite[Corollary~7.4]{BarencoEtAl1995} and additions and comparisons
with ripple-carry circuits \cite[Secs.~4.1--4.3]{CuccaroEtAl2004}.
The comparisons and zero tests for \(X\) cost
\(O(a+\log J+\log(m+1))\) gates.
Adding the swaps and the single application of \(X\), we obtain
\begin{equation}
 O\!\left(q[\log J+\log(m+1)]+a\right)
 \label{eq:gate-label-cost}
\end{equation}
gates.

\subsection{Implementing the rotations}
\label{sec:gate-rotations}

It remains to implement \(R_{J-1}\cdots R_0\) in
\eqref{eq:gate-global-factorization}.  Since each \(R_j\) leaves the
Kraus-label string unchanged, fix \(\boldsymbol k\) and determine which
factors \(R_j\) can act nontrivially on states with this Kraus-label string.
By
\eqref{eq:gate-supports}, \(R_j\) can act nontrivially only on
\(\cH_j\oplus\cK_j\).  A public state with Kraus-label string \(\boldsymbol k\)
belongs to \(\cH_j\) only when \(j>j_w\), because all labels at positions
\(r\geq j\) must be zero.  A valid encoding of a state in \(\cK_j\)
also has \(j>j_w\) by \eqref{eq:gate-encoding-conditions}.  Hence every state
\(v\) with a valid encoding and Kraus-label string \(\boldsymbol k\) satisfies
\begin{equation}
 R_{J-1}\cdots R_0v=R_{J-1}\cdots R_{j_w+1}v.
 \label{eq:gate-active-rotations}
\end{equation}

For \(j>j_w\), we now express the action of \(R_j\) on \(s,h_0,z_0\) using
the encodings \eqref{eq:gate-public-encoding} and
\eqref{eq:gate-private-encoding}.  Since \(R_j\) leaves
\(\ket{\operatorname{enc}(\boldsymbol k)}\) unchanged, we omit the
position--label registers below.  The action on \((s,h_0,z_0)^\top\) is
\([M\operatorname{diag}(1,-i,-1)]\otimes I_{\mathsf S}\), so we also omit
the system register.  In \eqref{eq:gate-public-encoding}--
\eqref{eq:gate-private-encoding}, \(s\) is the public component, while
\(h_0\) and \(z_0\) are the \(h\)- and \(z\)-components of \(\cK_j\).
The zero registers displayed in these encodings, together with the
definitions of \(h_0\) and \(z_0\), imply
\(\mathsf A_H=\mathsf A_B=\mathsf F=0\) for all three components.  Omitting
these fixed zero registers, their remaining register states are
\begin{equation}
 \begin{gathered}
  \underbrace{\ket{0,0,00}_{\mathsf P\mathsf T\mathsf C}}_{s},
  \qquad
  \underbrace{\ket{1,j,00}_{\mathsf P\mathsf T\mathsf C}}_{h_0},
  \qquad
  \underbrace{\ket{1,j,01}_{\mathsf P\mathsf T\mathsf C}}_{z_0},\\[-2pt]
  \text{with the common }\ket{\operatorname{enc}(\boldsymbol k)},\ \mathsf S,
  \text{ and the fixed zero registers }\mathsf A_H,\mathsf A_B,\mathsf F
  \text{ omitted}.
 \end{gathered}
 \label{eq:gate-rotation-registers}
\end{equation}
We implement the basis change \(V\) from \eqref{eq:gate-basis-change} on
the subspace spanned by
\(\ket{00}_{\mathsf C},\ket{01}_{\mathsf C}\), controlled on
\(\mathsf P=1\) and \(\mathsf A_H=\mathsf A_B=\mathsf F=0\).
By \eqref{eq:gate-rotation-matrix},
\(V^\dagger R_jV=\operatorname{Rot}(d)\oplus1\) on \((s,h_0,z_0)\), so
\(z_0\) is fixed, while the action on \((s,h_0)\) is
\begin{equation}
 \begin{aligned}
 (V^\dagger R_jV)\ket0_{\mathsf P}\ket0_{\mathsf T}
 &=d\ket0_{\mathsf P}\ket0_{\mathsf T}
       +\sqrt{1-d^2}\ket1_{\mathsf P}\ket j_{\mathsf T},\\
 (V^\dagger R_jV)\ket1_{\mathsf P}\ket j_{\mathsf T}
 &=-\sqrt{1-d^2}\ket0_{\mathsf P}\ket0_{\mathsf T}
       +d\ket1_{\mathsf P}\ket j_{\mathsf T}.
 \end{aligned}
 \label{eq:gate-qubit-rotation}
\end{equation}

\paragraph{Partitioning the time-index range.}
We implement the rotations \eqref{eq:gate-qubit-rotation} for
\(j_w<j<J\) using the factorization in
\cite[Lemma~10]{ChenEtAl2026Gate} on intervals of power-of-two length.
Since \(J-j_w-1\) need not be a power of two, we partition
\([j_w+1,J)\) into intervals of length \(2^\nu\), with each left
endpoint a multiple of \(2^\nu\).  Define
\begin{equation}
 u_\nu=2^\nu\left\lceil\frac{j_w+1}{2^\nu}\right\rceil
 \quad(0\leq\nu\leq\log_2J),
 \qquad u_{\log_2J+1}=J.
 \label{eq:gate-interval-start}
\end{equation}
For \(\nu<\log_2J\), \(u_\nu/2^\nu\) is an integer, so
\[
 \begin{aligned}
 u_{\nu+1}=2^{\nu+1}\left\lceil\frac{u_\nu}{2^{\nu+1}}\right\rceil,\quad
 u_{\nu+1}-u_\nu
 =2^\nu\left(2\left\lceil\frac{u_\nu/2^\nu}{2}\right\rceil
 -\frac{u_\nu}{2^\nu}\right)\in\{0,2^\nu\}.
 \end{aligned}
\]
The conclusion \(u_{\nu+1}-u_\nu\in\{0,2^\nu\}\) also holds for
\(\nu=\log_2J\), since
\(u_\nu\in\{0,J\}\) and \(u_{\nu+1}=J\).
Thus each nonempty interval has length \(2^\nu\), with
\(2^\nu\mid u_\nu\).  Since \(u_0=j_w+1\),
\begin{equation}
 [j_w+1,J)=
 \bigsqcup_{\substack{0\leq\nu\leq\log_2J\\u_{\nu+1}>u_\nu}}
 [u_\nu,u_{\nu+1}).
 \label{eq:gate-interval-partition}
\end{equation}
The intervals occur in increasing index order as \(\nu\) increases;
Figure~\ref{fig:gate-intervals} illustrates the partition.
\begin{figure}[!htb]
 \centering
 \begin{tikzpicture}[x=1cm,y=1cm,font=\small]
  \fill[blue!10] (0,0) rectangle (1,0.42);
  \fill[teal!13] (1,0) rectangle (3,0.42);
  \fill[blue!5] (3,0) rectangle (11,0.42);
  \draw[black!65,line width=0.7pt] (0,0) rectangle (11,0.42);
  \foreach \x in {1,3} {
   \draw[black!65,line width=0.7pt] (\x,0)--(\x,0.42);
  }
  \node at (0.5,0.21) {$2^0$};
  \node at (2,0.21) {$2^1$};
  \node at (7,0.21) {$2^3$};
  \foreach \x/\n/\u in {0/5/{u_0},1/6/{u_1},3/8/{u_2=u_3},11/16/{u_4=u_5}} {
   \draw[black!65] (\x,-0.06)--(\x,0.48);
   \node[above] at (\x,0.5) {$\u$};
   \node[below] at (\x,-0.06) {$\n$};
  }
 \end{tikzpicture}
 \caption{Partition of $[5,16)$, with interval lengths shown inside
 the bars.  Repeated endpoints give empty intervals.}
 \label{fig:gate-intervals}
\end{figure}

\paragraph{Implementing one interval.}
Fix a nonempty interval \([u,u+L)\) in
\eqref{eq:gate-interval-partition}, so \(L=2^\nu\) and \(L\mid u\).
For the base interval \([0,L)\), \cite[Lemma~10]{ChenEtAl2026Gate}, with
\(c\) replaced by \(d\), gives
\[
 V^\dagger(R_{L-1}\cdots R_0)V
 =W_L^{\rm out}\operatorname{Rot}(d^L)_{\mathsf P}
              (W_L^{\rm in})^\dagger.
\]
Here \(W_L^{\rm in}\) and \(W_L^{\rm out}\) are \(\nu\)-qubit unitaries
acting on the lowest \(\nu\) bits of \(\mathsf T\).  For \(L=1\), they act
on no qubits and are the identity.
The unitaries \(W_L^{\rm in/out}\) are built from the single-qubit gates
\[
 M_{\rm in}(x)=\frac1{\sqrt{1+x^2}}
 \begin{pmatrix}x&1\\1&-x\end{pmatrix},\qquad
 M_{\rm out}(x)=\frac1{\sqrt{1+x^2}}
 \begin{pmatrix}1&x\\x&-1\end{pmatrix}.
\]
For \(\star\in\{\mathrm{in},\mathrm{out}\}\), the circuit
\(W_L^\star\) applies \(M_\star(d^{2^\ell})\) to \(\mathsf T_\ell\), from
\(\ell=\nu-1\) down to \(0\), conditioned on
\(\mathsf T_0=\cdots=\mathsf T_{\ell-1}=0\), with no control when
\(\ell=0\).

To apply this circuit on \([u,u+L)\), translate the private time labels
from \(u+r\) to \(r\), for \(0\leq r<L\).  Since \(L\mid u\), the lowest
\(\nu\) bits of \(u\) are zero.  Since \(0\leq r<2^\nu\), adding \(r\)
changes only these bits and produces no carry.  Therefore
\[
 u+r=u\mathbin{\mathrm{xor}}r,\qquad
 (u+r)\mathbin{\mathrm{xor}}u=r.
\]
Define the controlled XOR
\begin{equation}
 X_u\ket p_{\mathsf P}\ket j_{\mathsf T}
 =\ket p_{\mathsf P}\ket{j\mathbin{\mathrm{xor}}(pu)}_{\mathsf T}
 \quad(p\in\{0,1\}).
 \label{eq:gate-time-xor}
\end{equation}
The operator \(X_u\) fixes the public state
\(\ket0_{\mathsf P}\ket0_{\mathsf T}\) and maps
\(\ket1_{\mathsf P}\ket{u+r}_{\mathsf T}\) to
\(\ket1_{\mathsf P}\ket r_{\mathsf T}\).  The controlled XOR is
self-inverse, so \(X_u^\dagger=X_u\).  Conjugating the factorization for
\([0,L)\) by \(X_u\) therefore gives
\begin{equation}
 V^\dagger(R_{u+L-1}\cdots R_u)V
 =X_u W_L^{\rm out}\operatorname{Rot}(d^L)_{\mathsf P}
              (W_L^{\rm in})^\dagger X_u.
 \label{eq:gate-translated-interval}
\end{equation}
Index the bits of \(\mathsf T\) starting with the least significant bit
\(\mathsf T_0\).  The unitaries \(W_L^{\rm in},W_L^{\rm out}\) and
\(\operatorname{Rot}(d^L)_{\mathsf P}\) in
\eqref{eq:gate-translated-interval} are applied with the following controls:
\begin{equation}
 \begin{array}{c|l}
 W_L^{\rm in},W_L^{\rm out}
 &\mathsf P=1,\ \mathsf C=00,\
  \mathsf A_H=\mathsf A_B=\mathsf F=0,\
  \mathsf T_\nu=\cdots=\mathsf T_{\log_2J-1}=0,\\
 \operatorname{Rot}(d^L)_{\mathsf P}
 &\mathsf T=0,\ \mathsf C=00,\
  \mathsf A_H=\mathsf A_B=\mathsf F=0.
 \end{array}
 \label{eq:gate-interval-controls}
\end{equation}
The condition
\(\mathsf T_\nu=\cdots=\mathsf T_{\log_2J-1}=0\) in the first row of
\eqref{eq:gate-interval-controls} means that the value stored in \(\mathsf T\)
is less than \(2^\nu=L\), and hence lies in \([0,L)\).
Together, the gates between the two applications of \(X_u\) implement
\eqref{eq:gate-qubit-rotation} for \(u\leq j<u+L\) and act as the identity
on every other valid encoding.  In the latter case, the complete circuit
reduces to \(X_u^2=I\).

\paragraph{Combining the intervals.}
For fixed \(\boldsymbol k\), \eqref{eq:gate-active-rotations} shows that only
the factors \(R_j\) with \(j_w<j<J\) need to be implemented.  Let
\(\nu_1<\cdots<\nu_s\) be the values of \(\nu\) for which
\([u_\nu,u_{\nu+1})\) in \eqref{eq:gate-interval-partition} is nonempty.  On
states with label string \(\boldsymbol k\),
\[
 \begin{aligned}
 R_{J-1}\cdots R_{j_w+1}
 &=\bigl(R_{u_{\nu_s+1}-1}\cdots R_{u_{\nu_s}}\bigr)\cdots
   \bigl(R_{u_{\nu_1+1}-1}\cdots R_{u_{\nu_1}}\bigr)\\
 &=V\bigl[V^\dagger(R_{u_{\nu_s+1}-1}\cdots R_{u_{\nu_s}})V\bigr]\cdots
   \bigl[V^\dagger(R_{u_{\nu_1+1}-1}\cdots R_{u_{\nu_1}})V\bigr]V^\dagger.
 \end{aligned}
\]
The second equality inserts \(VV^\dagger=I\) between consecutive factors
on the first line.  Each bracket on the second line is implemented by
\eqref{eq:gate-translated-interval}.

To implement this \(\boldsymbol k\)-dependent factorization coherently, we
compute its interval endpoints from
\(\ket{\operatorname{enc}(\boldsymbol k)}\) in
\eqref{eq:gate-label-encoding}.  Its first position--label pair stores the
largest nonzero position, so
\[
 j_w+1=\begin{cases}0,&\mathsf L_1=0,\\
                     \mathsf J_1+1,&\mathsf L_1\ne0.
       \end{cases}
\]
From \(j_w+1\), reversibly compute
\[
 u_\nu=2^\nu\left\lceil\frac{j_w+1}{2^\nu}\right\rceil,
 \quad 0\leq\nu\leq\log_2J,
 \qquad u_{\log_2J+1}=J,
\]
as in \eqref{eq:gate-interval-start}.  None of
\(X_u,W_L^{\rm in},W_L^{\rm out}\), and
\(\operatorname{Rot}(d^L)_{\mathsf P}\) modifies the position--label pairs, so
the endpoints can be uncomputed afterward.

Applying \(V^\dagger\), followed by the right-hand side of
\eqref{eq:gate-translated-interval} with the controls in
\eqref{eq:gate-interval-controls} for each nonempty interval, and finally
\(V\), therefore implements \(R_{J-1}\cdots R_0\) on every valid encoding.

\subsection{The complete circuit and its gate count}
\label{sec:gate-parameters}

The preceding subsections provide circuits for the exchange and rotation
products in \eqref{eq:gate-global-factorization}.
Algorithm~\ref{alg:gate-call} assembles them with \(Q'\) in the order prescribed
by that factorization.  We denote the resulting circuit by \(\widehat S\).
After verifying that it
implements \(S\), we use it in the reuse unitaries \(U_N\), the LCU unitary
\(\mathcal V\), and the OAA circuit \(\mathcal U_{\rm amp}\).

\begin{algorithm}[!htb]
\caption{One call to \(S\) using the compressed Kraus-label representation}
\label{alg:gate-call}
\algrenewcommand\algorithmicrequire{\textbf{Input:}}
\small
\begin{algorithmic}[1]
 \Require A valid encoding
 \eqref{eq:gate-public-encoding}--\eqref{eq:gate-private-encoding},
 with \(w\leq60q\).
 \State Apply the query \(Q\).\label{line:gate-call-query}
 \State Conditioned on \(\mathsf P=1\), apply phase \(i\) for
        \(\mathsf C=00\), phase \(-1\) for \(\mathsf C=10\), then
        swap the values \(\mathsf C=01\) and \(\mathsf C=10\).
        \Comment{Together with the preceding query, this implements \(Q'\).}
        \label{line:gate-call-D}
 \State Apply Algorithm~\ref{alg:gate-zero-product}.
        \label{line:gate-call-exchange}
 \State From the updated first pair, set \(j_w+1=0\) if
        \(\mathsf L_1=0\), and set \(j_w+1=\mathsf J_1+1\) otherwise.
        \label{line:gate-call-left-endpoint}
 \State Compute the predicate \(\mathsf A_H=\mathsf A_B=\mathsf F=0\) into a
        work qubit, and apply \(V^\dagger\) conditioned on this qubit and
        \(\mathsf P=1\).
        \label{line:gate-call-zero-test}
 \For{\(\nu=0,\ldots,\log_2J\)}
  \State Compute \(u_\nu,u_{\nu+1}\) from \eqref{eq:gate-interval-start}.
         \label{line:gate-call-interval-endpoints}
  \If{\(u_{\nu+1}>u_\nu\)}
   \State Apply, in order, \(X_{u_\nu},\ (W_{2^\nu}^{\rm in})^\dagger,\
          \operatorname{Rot}(d^{2^\nu})_{\mathsf P},\
          W_{2^\nu}^{\rm out},\ X_{u_\nu}\),
          with the controls in \eqref{eq:gate-interval-controls}; use the work
          qubit for the test \(\mathsf A_H=\mathsf A_B=\mathsf F=0\).
          \label{line:gate-call-interval}
  \EndIf
  \State Reverse the computations of \(u_\nu,u_{\nu+1}\).
         \label{line:gate-call-interval-uncompute}
 \EndFor
 \State Apply \(V\) conditioned on the work qubit and \(\mathsf P=1\), clear
        the work qubit, and reverse the
        computation of \(j_w+1\).\label{line:gate-call-cleanup}
\end{algorithmic}
\end{algorithm}

We now verify the action of Algorithm~\ref{alg:gate-call}.
Lines~\ref{line:gate-call-query}--\ref{line:gate-call-D} apply \(Q\) followed
by the operation \(D\) in \eqref{eq:gate-private-update} on every private
summand, and hence implement \(Q'\) as specified in
\eqref{eq:gate-query-action}.  Line~\ref{line:gate-call-exchange} implements
\(E_{J-1}\cdots E_0\), and
lines~\ref{line:gate-call-left-endpoint}--\ref{line:gate-call-cleanup}
implement \(R_{J-1}\cdots R_0\) by \eqref{eq:gate-active-rotations} and
\eqref{eq:gate-translated-interval}.  Thus the algorithm applies
\(Q'\), \(E_{J-1}\cdots E_0\), and \(R_{J-1}\cdots R_0\) in this order.
On valid encodings with \(w\leq60q\), the factorization
\eqref{eq:gate-global-factorization} shows that \(\widehat S\) implements
\(S\) exactly.  Reversing Algorithm~\ref{alg:gate-call} implements
\(S^\dagger\).

\paragraph{Cost of one call to \(S\).}
The work qubit prepared in line~\ref{line:gate-call-zero-test} controls
\(V^\dagger\) in that line, \(V\) in line~\ref{line:gate-call-cleanup}, and
\((W_{2^\nu}^{\rm in})^\dagger\), \(W_{2^\nu}^{\rm out}\), and
\(\operatorname{Rot}(d^{2^\nu})_{\mathsf P}\) in
line~\ref{line:gate-call-interval}.  Preparing and clearing this qubit costs
\(O(a+\log(m+1))\) gates.  The circuits for \(W_L^{\rm in/out}\) in
line~\ref{line:gate-call-interval} have \(O(\nu)\) gates
\cite[Proposition~11]{ChenEtAl2026Gate}; controlling them changes this cost by
only a constant factor.  For each interval, computing and uncomputing the
endpoints in lines~\ref{line:gate-call-interval-endpoints}
and~\ref{line:gate-call-interval-uncompute}, together with the tests on
\(\mathsf T\) and the controlled XORs in line~\ref{line:gate-call-interval},
costs \(O(\log J)\) gates.  Computing and uncomputing \(j_w+1\) in
lines~\ref{line:gate-call-left-endpoint}
and~\ref{line:gate-call-cleanup} costs \(O(\log J+\log(m+1))\) gates, and the
controlled \(V,V^\dagger\) have constant cost.
Summing over at most
\(1+\log_2J\) intervals gives
\begin{equation}
 O\!\left(a+\log(m+1)+
           \sum_{\nu=0}^{\log_2J}(\nu+\log J)\right)
 =O\!\left(a+\log(m+1)+(\log J)^2\right)
 \label{eq:gate-rotation-cost}
\end{equation}
gates for \(R_{J-1}\cdots R_0\).

The only oracle-dependent operation in Algorithm~\ref{alg:gate-call} is the
query in line~\ref{line:gate-call-query}.  Thus a controlled \(\widehat S\) or
\(\widehat S^\dagger\) uses one
query.  Adding \eqref{eq:gate-label-cost},
\eqref{eq:gate-rotation-cost}, and the constant-size implementation of \(D\),
the gate count per call is
\begin{equation}
 O\!\left(a+q[\log J+\log(m+1)]+(\log J)^2\right).
 \label{eq:gate-one-call}
\end{equation}

\paragraph{Implementing \(\mathcal V\) and \(\mathcal U_{\rm amp}\).}
We next implement the controlled calls to \(S\) in the SELECT circuit of
Lemma~\ref{lem:lcu-reuse-maps} using \(\widehat S\).
In the reuse space \((\cH_{\mathsf M}\otimes\cH_J)\oplus\cK\),
\(\mathsf M\) labels the public copies of \(\cH_J\); on the shared private
space \(\cK\), this register is initialized to zero.  For
\(0\leq\ell<20q\), conditioned on a coefficient label
\(N\in\{1,\ldots,20q\}\) in \(\mathsf N\) and on \(\ell<N\), implement the
\(\ell\)-th call by
\begin{equation}
 \begin{aligned}
 &\mathsf M\leftarrow\mathsf M\mathbin{\mathrm{xor}}((1-\mathsf P)\ell),\\
 &\widehat S\text{ conditioned on }\mathsf M=0,\\
 &\mathsf M\leftarrow\mathsf M\mathbin{\mathrm{xor}}((1-\mathsf P)\ell).
 \end{aligned}
 \label{eq:gate-reuse-control}
\end{equation}
Before this call, the selected public component has
\((\mathsf P,\mathsf M)=(0,\ell)\), while the private component has
\((\mathsf P,\mathsf M)=(1,0)\).  The first XOR brings both components to
\(\mathsf M=0\), where \(\widehat S\) acts.  The second XOR uses the output
value of \(\mathsf P\) to return the public component to \(\mathsf M=\ell\)
and keep the private component at \(\mathsf M=0\).  Thus
\eqref{eq:gate-reuse-control} implements the action of \(S\) on
\((\ket\ell_{\mathsf M}\otimes\cH_J)\oplus\cK\) in the reuse circuit, without
changing the other public components.
Each call uses \(O(\log(q+1))\) gates for the comparison \(\ell<N\), the
zero test on \(\mathsf M\), and the two controlled XORs.

Conditioned on \(\mathsf N=N\) and \(\mathsf P=0\), SELECT applies \(F_N\)
before these calls and \(F_N^\dagger\) afterward.
The phase in \eqref{eq:lcu-select} is \(-1\) precisely when
\(N\) is even and \(4q\leq N\leq8q\), by \eqref{eq:lambda}
and the proof of Lemma~\ref{lem:reuse-combination}.
Implement \(F_N\) and \(F_N^\dagger\) using the exact uniform-superposition
state-preparation circuit of \cite[Algorithm~1]{ShuklaVedula2024}, controlled
by \(\mathsf N=N\) and \(\mathsf P=0\).
For \(F_\lambda\) and \(F_\lambda^\dagger\) in
\eqref{eq:lcu-unitary}, use the general state-preparation construction
of~\cite{MottonenEtAl2005}.

For amplification, we implement the reflections about the ranges of
\(\iota_{\rm in},\iota_{\rm out}\) in \eqref{eq:lcu-embeddings}.
Both isometries initialize \(\mathsf N=\mathsf M=0\), and
\(\varphi\) in the definition of \(\iota_{\rm out}\) ranges over all of
\(\cH_J\).  The public encoding \eqref{eq:gate-public-encoding} therefore
shows that
\(\operatorname{ran}\iota_{\rm out}\) is characterized by
\(\mathsf P=\mathsf T=\mathsf C=\mathsf A_H=\mathsf A_B=\mathsf F=0\),
with the position--label registers and \(\mathsf S\) unrestricted.
For \(\iota_{\rm in}\), all Kraus labels are also initialized to zero, which
is equivalent to \(\mathsf L_1=0\) in
\eqref{eq:gate-label-encoding}.  Thus, on valid encodings,
\begin{equation}
 \begin{array}{c|l}
 \operatorname{ran}\iota_{\rm out}
 &\mathsf N=\mathsf M=\mathsf P=\mathsf T=
   \mathsf A_H=\mathsf A_B=\mathsf F=0,\quad\mathsf C=00,\\
 \operatorname{ran}\iota_{\rm in}
 &\mathsf N=\mathsf M=\mathsf P=\mathsf T=
   \mathsf A_H=\mathsf A_B=\mathsf F=\mathsf L_1=0,\quad\mathsf C=00.
 \end{array}
 \label{eq:gate-label-projectors}
\end{equation}  Each reflection
\(2\Pi_{\rm in/out}-I\) applies phase \(+1\) on the corresponding range and
phase \(-1\) on its orthogonal complement.  It can therefore be implemented
using zero tests on
\(O(a+\log J+\log(m+1)+\log(q+1))\) qubits.
Since \(J\geq8q\), each reflection costs
\(O(a+\log J+\log(m+1))\) gates.

Starting with all Kraus labels zero,
\eqref{eq:gate-label-bound} ensures that the input to every call to \(S\) or
\(S^\dagger\) is supported on valid encodings with \(w\leq60q\).
Algorithm~\ref{alg:gate-call} and its reverse therefore implement every such
call exactly.  Since the register encodings
\eqref{eq:gate-public-encoding}--\eqref{eq:gate-private-encoding} leave
\(\mathsf S\) unchanged, the resulting OAA circuit induces the channel
\(\widetilde{\cE}\) analyzed in Section~\ref{sec:oaa}, whose error satisfies
\eqref{eq:oaa-channel-error}.

\paragraph{Gate count.}
The oracle-independent operations outside the calls to \(\widehat S\) and
\(\widehat S^\dagger\) have the following costs:
\begin{equation}
 \begin{array}{ll}
 F_\lambda,F_\lambda^\dagger\text{ in }\eqref{eq:lcu-unitary}&O(q),\\
 \text{controlled }F_N,F_N^\dagger\text{ in SELECT},\quad 1\leq N\leq20q
     &O(q\log(q+1)),\\
 \text{sign phase in SELECT and controls in }\eqref{eq:gate-reuse-control}
     &O(q\log(q+1)),\\
 \text{each reflection }2\Pi_{\rm in/out}-I
     &O(a+\log J+\log(m+1)).
 \end{array}
 \label{eq:gate-lcu-costs}
\end{equation}
The two calls to \(\mathcal V\) and one call to \(\mathcal V^\dagger\) in
\eqref{eq:oaa} use at most \(60q\) calls to \(\widehat S\) or
\(\widehat S^\dagger\) in total.
Multiplying \eqref{eq:gate-one-call} by \(60q\) and adding
\eqref{eq:gate-lcu-costs}, and using \(\log(q+1)=O(\log J)\) because
\(J\geq8q\), gives the gate count for one \(J\)-step time segment:
\begin{equation}
 O\!\left(q\bigl[a+q(\log J+\log(m+1))+(\log J)^2\bigr]\right).
 \label{eq:gate-segment-cost}
\end{equation}

\paragraph{Parameter choice and total cost.}
We now choose \(n,q,J\), apply the \(J\)-step circuit on each of \(n\) equal
time segments, and sum the costs to prove
Theorem~\ref{thm:gate-complexity}.

\Needspace{10\baselineskip}
\begin{proof}[Proof of Theorem~\ref{thm:gate-complexity}]
Recall \(\ell_\varepsilon=\log(e(1+\tau+L_{\cL}T^2)/\varepsilon)\).
For \(2\tau\leq\varepsilon\), the identity-channel bound
\eqref{eq:td-identity-error} shows that no gates are required.  Suppose that
\(2\tau>\varepsilon\).  To reduce the \(q^2\) contribution in
\eqref{eq:gate-segment-cost}, divide \([0,T]\) into \(n\) time segments
\([vT/n,(v+1)T/n]\), \(0\leq v<n\).
We allocate error \(\varepsilon/n\) to each segment and set
\begin{equation}
 \begin{aligned}
 n&=\max\!\left\{1,
              \left\lceil\frac{\tau}{\ell_\varepsilon}\right\rceil\right\},\\
 q&=\left\lceil c\left[
 \frac{\tau}{n}+
 \frac{\log(20n/\varepsilon)}
 {\log\!\bigl(e+(n/\tau)\log(20n/\varepsilon)\bigr)}
 \right]\right\rceil,
 \end{aligned}
 \label{eq:gate-segment-parameters}
\end{equation}
where \(c\) is the constant in Lemma~\ref{lem:factorial-inversion}.  This is
the choice in that lemma with \(\tau\) and \(\varepsilon\) replaced by
\(\tau/n\) and \(\varepsilon/n\), respectively.  Hence
\[
 \sqrt{\frac{\tau}{n}}
 \left(\frac{C_0\tau}{nq}\right)^q\leq\frac{\varepsilon}{20n},
\]
and \eqref{eq:uniform-product-error} bounds the amplification error by
\(\varepsilon/(2n)\).  To make the discretization error at most
\(\varepsilon/(2n)\), take \(J\) to be the smallest power
of two satisfying
\begin{equation}
 J\geq\max\!\left\{
 \frac{2\tau}{n},\ 8q,\
 \frac{20\tau^2+L_{\cL}T^2}{n\varepsilon}\right\},
 \qquad \delta=\frac{T}{nJ}.
 \label{eq:gate-grid}
\end{equation}

The first two bounds on \(J\) ensure \(\alpha\delta\leq1/2\) and
\(J\geq8q\), as required in Section~\ref{sec:gate-factorization}.  Applying
the time-dependent error bound \eqref{eq:td-ledger} on one segment, with
\(T\) and \(\tau\) replaced by \(T/n\) and \(\tau/n\), gives
\[
 10\sqrt{\frac{\tau}{n}}
       \left(\frac{C_0\tau}{nq}\right)^q
 +\frac{10\tau^2+L_{\cL}T^2/2}{n^2J}
 \leq\frac{\varepsilon}{2n}+
       \frac{\varepsilon}{2n}=\frac{\varepsilon}{n}.
\]

Composing the circuits for the \(n\) time segments in increasing order of
\(v\) gives the full simulation.  Since both the implemented and target
channels are CPTP, telescoping their compositions bounds the total error by
\(\sum_{v=0}^{n-1}\varepsilon/n=\varepsilon\).

We next estimate \(nq\), which determines the total query count.  Since
\(\ell_\varepsilon\geq1\), the definition of \(n\) gives
\[
 \tau/n\leq\ell_\varepsilon,\qquad n\leq1+\tau.
\]
It follows that
\(\log(20n/\varepsilon)\leq\log(20(1+\tau)/\varepsilon)
=O(\ell_\varepsilon)\).
The logarithm in the denominator of \(q\) in
\eqref{eq:gate-segment-parameters} is at least one, so
\[
 q\leq1+c\bigl[\tau/n+\log(20n/\varepsilon)\bigr]
 =O(\ell_\varepsilon).
\]

To bound \(nq\), consider separately \(n=1\) and \(n>1\).  If \(n=1\),
the asymptotic estimate \eqref{eq:q-no-additive-constant} gives
\[
 nq=q=O\!\left(\tau+
 \frac{\log(1/\varepsilon)}
 {\log\!\bigl(e+\log(1/\varepsilon)/\tau\bigr)}\right).
\]

If \(n>1\), then \(\tau/\ell_\varepsilon>1\), and the ceiling in
the definition of \(n\) gives
\[
 n<2\tau/\ell_\varepsilon,\qquad
 nq=O(n\ell_\varepsilon)=O(\tau).
\]
The two cases yield
\begin{equation}
 nq=O\!\left(\tau+
 \frac{\log(1/\varepsilon)}
 {\log\!\bigl(e+\log(1/\varepsilon)/\tau\bigr)}\right).
 \label{eq:gate-parameter-bounds}
\end{equation}

For the gate count, we also need \(\log J=O(\ell_\varepsilon)\).
By the minimal choice of the power of two in \eqref{eq:gate-grid},
\[
 \begin{aligned}
 J&<2\max\!\left\{
 \frac{2\tau}{n},\ 8q,\
 \frac{20\tau^2+L_{\cL}T^2}{n\varepsilon}\right\},\\
 \log J&=O\!\left(\log(q+1)+
       \log\!\frac{1+\tau^2+L_{\cL}T^2}{\varepsilon}\right)
       =O(\ell_\varepsilon).
 \end{aligned}
\]
Here \(q=O(\ell_\varepsilon)\), and
\(1+\tau^2+L_{\cL}T^2\leq(1+\tau+L_{\cL}T^2)^2\) bounds the second logarithm
by \(O(\ell_\varepsilon)\).

To simulate these \(n\) time segments, use the supplied oracles
\(U_H^{(nJ)}\) and \(U_B^{(nJ)}\) on the grid of \(nJ\) sample times.
On the \(v\)-th segment, step \(j\) corresponds to the oracle index
\(vJ+j\), since
\[
 \frac{vT}{n}+j\delta=\frac{(vJ+j)T}{nJ},
 \qquad 0\leq v<n,\quad0\leq j<J.
\]
Because \(J\) is a power of two, use the high-order bits \(\mathsf V\) of the
oracle's time-index register to store \(v\) and the low-order bits
\(\mathsf T\) to store \(j\):
\[
 \ket v_{\mathsf V}\ket j_{\mathsf T}
 =\ket{vJ+j}_{\mathsf V\mathsf T}.
\]
Within the circuit for the \(v\)-th segment, \(\mathsf T\) already stores
\(j\).  Initialize \(\mathsf V\) to \(\ket v\) before this circuit and clear
it afterward, at a cost of \(O(\log n)\) gates per segment.

The \(n\) time segments use at most \(60nq\) oracle queries, which satisfies
the query bound in Theorem~\ref{thm:time-dependent} by
\eqref{eq:gate-parameter-bounds}.  Writing \(N_{\rm gates}\) for the total
number of one- and two-qubit gates, summing \eqref{eq:gate-segment-cost} over
these segments and adding \(O(n\log(n+1))\) gates to initialize and clear
\(\mathsf V\) gives
\begin{equation}
 \begin{aligned}
 N_{\rm gates}
 &=O\!\left(nq\bigl[a+q(\log J+\log(m+1))+(\log J)^2\bigr]
                  +n\log(n+1)\right)\\
 &=O\!\left(nq\bigl[a+q(\log J+\log(m+1))
                  +(\log J)^2+\log(n+1)\bigr]\right)\\
 &=O\!\left(nq\bigl[a+\ell_\varepsilon
                   (\log(m+1)+\ell_\varepsilon)\bigr]\right).
 \end{aligned}
 \label{eq:gate-total-cost}
\end{equation}
Since \(q\geq1\),
\(n\log(n+1)=O(nq\log(n+1))\), which gives the second line of
\eqref{eq:gate-total-cost}.  The bounds \(q=O(\ell_\varepsilon)\),
\(\log J=O(\ell_\varepsilon)\), and
\(\log(n+1)=O(\ell_\varepsilon)\) then give the last line of
\eqref{eq:gate-total-cost}.  Substituting
\eqref{eq:gate-parameter-bounds} into \eqref{eq:gate-total-cost} gives
\begin{equation}
 N_{\rm gates}=O\!\left[
 \left(\tau+
 \frac{\log(1/\varepsilon)}
 {\log\!\bigl(e+\log(1/\varepsilon)/\tau\bigr)}\right)
 \bigl[a+\ell_\varepsilon(\log(m+1)+\ell_\varepsilon)\bigr]
 \right].
 \label{eq:gate-sharper-bound}
\end{equation}
Equation~\eqref{eq:gate-sharper-bound} is the gate bound
\eqref{eq:gate-main-bound} in Theorem~\ref{thm:gate-complexity}.
Setting \(T=t\) and \(L_{\cL}=0\) gives
\eqref{eq:main-gate-complexity} in Theorem~\ref{thm:main}.
\end{proof}

\section{Conclusion}

We have shown that general Lindbladian evolution can be simulated with a
query complexity that matches the Hamiltonian joint time--precision lower
bound in \eqref{eq:intro-ham-bound}.  Thus adding dissipative terms does not
worsen the optimal dependence on normalized time and precision in the oracle
model considered here.  The same query bound holds for Lipschitz-continuous
time-dependent Lindbladians under coherent time-indexed oracle access; the
Lipschitz constant affects the discretization and the gate complexity, but
not the query complexity.

It remains open whether locality, small commutators, or other properties of
the generator can reduce the simulation cost further while retaining the
optimal dependence on time and precision.

\section*{Acknowledgments}

We thank Jacob Kitchen for sharing his manuscript and for helpful discussions
on the relationship between the two works.

\paragraph{Statement on AI use.} The project was initiated and motivated by the authors. The main ideas underlying the work, including the central proof strategies, were generated by large language models. The human authors carefully studied, refined, and verified these ideas, and substantially rewrote and reorganized their presentation, adding the motivation and exposition necessary to communicate the arguments clearly. The final paper reflects the authors' own understanding of the results, and the authors take full responsibility for every claim, proof, and citation in it.

\clearpage
\appendix

\section{Proof of the polynomial bound for the private block}
\label{sec:private-block-proof}

This appendix proves Proposition~\ref{prop:private-polynomial}.
The proof compares \(S_{11}\) with its value at \(\delta=0\).
Recall that \(S_{11}:\cK\to\cK\) is the private-to-private
block of \(S\), where \(\cK=\bigoplus_{j=0}^{J-1}\cK_j\).
Define \(S_{11}(0)\) by setting \(\delta=0\), equivalently
\(\kappa=\beta=0\), in every \(G_j\), while keeping
\(\Omega=U_H\oplus U_B\oplus U_B^\dagger\) unchanged.
Write \(G_j(0)\) for \(G_j\) at these parameter values.
Set
\[
 \Delta:=S_{11}-S_{11}(0).
\]
Recall from \eqref{eq:Fq} that
\[
 p(\zeta)=\zeta^2(1+\zeta^2),\qquad
 \mathcal F_q(\zeta)=\left(\frac{p(\zeta)}2\right)^{2q}.
\]
The key property of \(S_{11}(0)\) is
\(p(S_{11}(0))=0\).  Once this identity is established, substituting
\(S_{11}=S_{11}(0)+\Delta\) into
\(p(S_{11})-p(S_{11}(0))\) expresses \(p(S_{11})\) as a sum of ordered
products, each containing at least one factor \(\Delta\).  Hence every
product in the expansion of
\(p(S_{11})^{2q}\) contains at least \(2q\) such factors.
We split \(\Delta\) into a no-jump part \(\Delta_0\) and a jump part
\(\Delta_1\).  Lemma~\ref{lem:private-block-bounds} proves
\(p(S_{11}(0))=0\) and bounds the norms of the maps
\((\Delta_s)_{kj}:\cK_j\to\cK_k\) for \(s=0,1\).
Proposition~\ref{prop:ordered-product-bound} then bounds ordered products of
\(S_{11}(0),\Delta_0,\Delta_1\).  Finally, we sum these bounds to prove
Proposition~\ref{prop:private-polynomial}.

We use the orthogonal projections associated with
\(\cK=\bigoplus_{j=0}^{J-1}\cK_j\).  For each \(j\), let \(\Pi_j\) be the
orthogonal projection from \(\cH_J\oplus\cK\) onto \(\cK_j\).  On \(\cK\),
these projections satisfy
\[
 \Pi_j\Pi_k=\delta_{jk}\Pi_j,\qquad
 \sum_{j=0}^{J-1}\Pi_j=I_{\cK}.
\]
For \(k>j\), we split the map
\(\Pi_k\Delta\Pi_j:\cK_j\to\cK_k\) according to whether the Kraus-label
register \(\mathsf E_j\) contains zero or a nonzero label.  Define
\((\Delta_0)_{kj}\) and \((\Delta_1)_{kj}\) by projecting \(\mathsf E_j\) onto
\(\mathbb C\ket0\) and \(\cH_{\rm L}\), respectively:
\begin{align}
 &(\Delta_0)_{kj}:=(\proj{0}_{\mathsf E_j}\otimes I)\Pi_k\Delta\Pi_j,
 \quad
 (\Delta_1)_{kj}:=[(I-\proj{0}_{\mathsf E_j})\otimes I]\Pi_k\Delta\Pi_j
 \quad(k>j),\nonumber\\
 \intertext{When \(k=j\), the time index is unchanged, and we set}
 &(\Delta_0)_{jj}:=\Pi_j\Delta\Pi_j,
 \qquad(\Delta_1)_{jj}:=0.
 \label{eq:correction-blocks}
\end{align}
For \(k<j\), set \((\Delta_0)_{kj}=(\Delta_1)_{kj}:=0\).
Set
\(\Delta_s:=\sum_{j,k=0}^{J-1}(\Delta_s)_{kj}\) for \(s=0,1\).

The norm bounds for \((\Delta_s)_{kj}\) will be expressed in terms of
\begin{equation}
 d=\frac{1-\mu}{1+\mu},\qquad
 \theta=1-d^2=\frac{4\mu}{(1+\mu)^2}.
 \label{eq:d-theta}
\end{equation}

The off-diagonal bounds below contain a factor \(\theta\) for
\(\Delta_0\) and \(\sqrt\theta\) for \(\Delta_1\).  We keep the two maps
separate because, for \(k>j\), \((\Delta_1)_{kj}\) creates a nonzero label in
\(\mathsf E_j\), whereas the output of \((\Delta_0)_{kj}\) has label zero in
\(\mathsf E_j\).  The resulting orthogonality between different
nonzero-label positions is used in Proposition~\ref{prop:ordered-product-bound}.

\begin{lemma}[Bounds for \(S_{11}(0)\), \(\Delta_0\), and \(\Delta_1\)]
\label{lem:private-block-bounds}
If \(\alpha\delta\leq1/2\), then
\begin{equation}
 p(S_{11}(0))=0,\qquad \norm{S_{11}(0)}\leq1,
 \label{eq:S11-zero-polynomial}
\end{equation}
\begin{equation}
 \Delta=\Delta_1+\Delta_0,
 \label{eq:private-decomposition}
\end{equation}
and, for \(k>j\),
\begin{align}
 \norm{(\Delta_1)_{kj}}
 &\leq\sqrt\theta\,d^{k-j-1},
 \label{eq:delta-one-block-bound}\\
 \norm{(\Delta_0)_{kj}}
 &\leq\theta d^{k-j-1},
 \qquad
 \norm{(\Delta_0)_{jj}}\leq1-d\leq\theta.
 \label{eq:delta-zero-block-bound}
\end{align}
\end{lemma}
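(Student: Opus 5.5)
The plan is to work throughout with the factorized form \eqref{eq:gate-global-factorization},
\[
S=(R_{J-1}\cdots R_0)(E_{J-1}\cdots E_0)Q'.
\]
Setting \(\kappa=\beta=0\) gives \(R=I\), so \(S(0)=(E_{J-1}\cdots E_0)Q'\) with the same \(Q'\). The argument has three parts: compute \(S_{11}(0)\) explicitly to get the polynomial identity; expand \(\Delta=\Pi_{\cK}(R_{J-1}\cdots R_0-I)(E_{J-1}\cdots E_0)Q'\Pi_{\cK}\) into paths through the time indices; then read off the block bounds from the single-qubit rotation form \eqref{eq:gate-qubit-rotation}.

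\emph{Step 1: \(p(S_{11}(0))=0\) and \(\norm{S_{11}(0)}\le1\).} The factor \(Q'\) preserves each \(\cK_j\), acting there as \(I_{\mathcal R_{<j}}\otimes D\Omega\) by \eqref{eq:gate-query-action}. By \eqref{eq:gate-exchange}, the private-to-private block of \(E_j\) is \(I-\Pi_\times\), where \(\Pi_\times\) projects the \(b\)-entry onto \(\ket0_{\mathsf A_B}\otimes\cH_{\rm L}\otimes\cH_{\mathsf S}\); that is, it replaces \(b\) by \(P_\perp b\). Hence on each \(\cK_j\),
\[
S_{11}(0)(h,z,b)=\bigl(iU_Hh,\;-U_B^\dagger b,\;P_\perp U_Bz\bigr).
\]
Applying this map twice, using \(U_H^2=I\), I expect
\[
S_{11}(0)^2=-\bigl(I\oplus U_B^\dagger P_\perp U_B\oplus P_\perp\bigr)=:-\Pi,
\]
with \(\Pi\) an orthogonal projection. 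Then \(p(S_{11}(0))=-\Pi(I-\Pi)=0\). The norm bound holds because \(S_{11}(0)\) is a compression of a unitary. (In the time-dependent case \(U_{H,j},U_{B,j}\) replace \(U_H,U_B\) blockwise; nothing changes.)

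\emph{Step 2: block structure of \(\Delta\).} Start from \(x\in\cK_j\). After \(EQ'\), the vector lies in \(\cK_j\oplus(\cH_{j+1}\cap\cH_j^\perp)\); the second piece comes from the exchange and carries a nonzero label in \(\mathsf E_j\). Now apply \(R_0,\dots,R_{J-1}\) in order, using the supports \eqref{eq:gate-supports}. The factors \(R_r\) with \(r<j\) act trivially on both pieces. \(R_j\) fixes the exchanged piece and, in the \(V\)-rotated basis, rotates the \(h_0\)-direction of \(\cK_j\) with the public \(s\)-direction in \(\cH_j\) via \(\operatorname{Rot}(d)\). For \(r>j\), each \(R_r\) keeps a public vector in \(\cH_r\) with amplitude \(d\) and leaks amplitude \(\sqrt\theta\) into \(\cK_r\), never returning it to earlier summands.

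It follows that \(\Pi_k\Delta\Pi_j=0\) for \(k<j\). It also follows that for \(k>j\) the contribution splits into two paths:
\begin{enumerate}
\item the exchange path, which leaves a nonzero \(\mathsf E_j\) label, is survived with factor \(d\) at each of \(R_{j+1},\dots,R_{k-1}\), and ends with a \(\sqrt\theta\) leak at \(R_k\);
\item the rotation path, which has a zero \(\mathsf E_j\) label and picks up \(\sqrt\theta\) at \(R_j\), \(d^{k-j-1}\) in between, and \(\sqrt\theta\) at \(R_k\).
\end{enumerate}
This gives \eqref{eq:private-decomposition} together with \eqref{eq:delta-one-block-bound} and the off-diagonal part of \eqref{eq:delta-zero-block-bound}, since \(V\), \(Q'\), and \(E\) are contractions. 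The diagonal block \(\Pi_j\Delta\Pi_j\) is \(\Pi_j(R_j-I)\Pi_j\) composed with contractions, and in the rotated basis this is \(d-1\) on a single direction, so its norm is at most \(1-d\). Finally, \(\alpha\delta\le1/2\) gives \(\mu\le1/4\), hence \(d\in[0,1]\) and \(1-d\le1-d^2=\theta\).

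\emph{Main obstacle.} I expect the bookkeeping in Step 2 to be the hardest part. One must check that the two-index rotation description of \(R_r\), which involves \(h_0\) together with \(z_0\) via \(V\), really yields a scalar chain \(d\) per intermediate step acting on the \(\mathcal R_{<r}\)-tensored public vector, including the case where the label string has later zeros. One must also check that the exchanged vector with nonzero label at \(j\) still lies in \(\cH_{r}\) for \(r>j\), so that it is acted on exactly like a public vector. I would verify this by writing \(R_r\) on \(\cH_r\oplus\cK_r\) as \(I_{\mathcal R_{<r}}\otimes R\) and tracking only the scalar coefficient.
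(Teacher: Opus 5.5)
Your argument is correct, but it is built on a different factorization from the paper's. You start from the reordered form $S=(R_{J-1}\cdots R_0)(E_{J-1}\cdots E_0)Q'$ of Section~\ref{sec:gate-factorization}. That identity follows purely algebraically from $G_j=R_jE_jD_j$ and the support relations \eqref{eq:gate-supports}, so using it here is not circular, and it needs neither $J$ to be a power of two nor $J\geq8q$.

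\textbf{What your route buys.} It makes $S(0)=(E_{J-1}\cdots E_0)Q'$ immediate. It also confines all $\kappa,\beta$-dependence of $\Delta=\Pi_{\cK}(R_{J-1}\cdots R_0-I)(E_{J-1}\cdots E_0)Q'\Pi_{\cK}$ to the rotations. As a result, $(\Delta_1)_{kj}$ is exactly the exchange path through $E_j$ and $(\Delta_0)_{kj}$ is the rotation path through $R_j$, and every block norm can be read off from $\operatorname{Rot}(d)$.

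\textbf{What the paper does instead.} The paper keeps $S=(G_{J-1}\cdots G_0)Q$. It removes inactive updates using their supports and inserts $I-\Pi_r$ to obtain the sequential factorization \eqref{eq:sequential-block-factorization}. It then reads the entry, propagation, and exit factors from the matrix $M$. In that picture, both the zero-label and the jump contributions come out of the single factor $(I-\Pi_j)G_jQ\Pi_j$ in \eqref{eq:local-from-K}. This keeps the appendix independent of the gate-complexity section. It also yields the explicit factorization \eqref{eq:offdiagonal-block-factorization}, including the exact value $\norm{(\Delta_0)_{jj}}=1-d$, which Proposition~\ref{prop:ordered-product-bound} reuses to track output Kraus labels. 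Your path picture supplies the same label information: a nonzero label at $\mathsf E_j$ and zeros at $\mathsf E_{j+1},\dots,\mathsf E_{k-1}$ for $\Delta_1$, and zeros throughout for $\Delta_0$.

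\textbf{Your flagged obstacles.} Both points you list as the main obstacle are immediate.
\begin{itemize}
\item For $r>j$, $\cH_{j+1}\cap\cH_j^\perp\subseteq\cH_{j+1}\subseteq\cH_r$.
\item On $\cH_{r+1}\oplus\cK_r$, $R_r=I_{\mathcal R_{<r}}\otimes R$ acts on $\ket0_{\mathsf E_r}\otimes s$ through the first column of $M\operatorname{diag}(1,-i,-1)$. That is, $s\mapsto ds$ plus a leak of norm $2\sqrt\mu/(1+\mu)=\sqrt\theta$ into $\cK_r$, whatever the contents of $\mathcal R_{<r}$.
\end{itemize}
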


\begin{proof}
\textbf{The action of \(G_r\).}
We first write the action of each
update \(G_r\) on its public and private components.  The definition of
\(S_{11}\) gives \(\Pi_kS_{11}\Pi_j=\Pi_kS\Pi_j\).  Recall the definition
of \(S\) in \eqref{eq:global-transducer-factorization}:
\begin{equation}
\label{eq:appendix-S-factorization}
 S=(G_{J-1}\cdots G_0)Q.
\end{equation}
Moreover, \(G_r\) acts as the identity outside
\(\cH_{r+1}\oplus\cK_r\).  Since \(\cK_j\) is orthogonal to this space for
\(r\ne j\),
\begin{equation}
 G_r\Pi_j=\Pi_jG_r=\Pi_j\qquad(r\ne j).
 \label{eq:inactive-update-on-Kj}
\end{equation}
For these calculations, we use the representation in
\eqref{eq:step-active-space}.  As illustrated in
Figure~\ref{fig:gj-register-spaces}, after
suppressing the fixed zero states of
\(\mathsf E_{r+1},\ldots,\mathsf E_{J-1}\),
and writing \(\mathsf F\) for the private copy of \(\mathsf E\),
\begin{align}
 \cH_{r+1}\oplus\cK_r
 &\cong\mathcal R_{<r}\otimes\left[
 (\cH_{\mathsf E_r}\otimes\cH_{\mathsf S})
 \oplus\cH_{{\rm priv},r}\right],\nonumber\\
 \cH_{{\rm priv},r}
 &\cong
 (\cH_{\mathsf A_H}\otimes\cH_{\mathsf S})
 \oplus
 (\cH_{\mathsf A_B}\otimes\cH_{\mathsf F}\otimes
  \cH_{\mathsf S})^{\oplus2},
 \qquad \cH_{\mathsf F}\cong\cH_{\mathsf E}.
 \label{eq:Gr-space-decomposition}
\end{align}
In the representation \eqref{eq:Gr-space-decomposition}, \(G_r\) acts as
\(I_{\mathcal R_{<r}}\otimes G\).  The factor \(\mathcal R_{<r}\) is
therefore unchanged and it is enough to specify the
action of \(G\) on
\((\cH_{\mathsf E_r}\otimes\cH_{\mathsf S})
 \oplus\cH_{{\rm priv},r}\).  We write a vector in this space as
\begin{equation}
 \underbrace{\begin{pmatrix}\ket0_{\mathsf E_r}\otimes s\\\ell\end{pmatrix}}
 _{\cH_{\mathsf E_r}\otimes\cH_{\mathsf S}}
 \oplus
 \underbrace{\begin{pmatrix}h\\z\\b\end{pmatrix}}
 _{\cH_{{\rm priv},r}},
 \label{eq:local-five-components}
\end{equation}
where \(s\in\cH_{\mathsf S}\),
\(\ell\in\cH_{\rm L}\otimes\cH_{\mathsf S}\),
\(h\in\cH_{\mathsf A_H}\otimes\cH_{\mathsf S}\), and
\(z,b\in\cH_{\mathsf A_B}\otimes\cH_{\mathsf F}\otimes\cH_{\mathsf S}\).
In \eqref{eq:local-five-components}, the two-entry vector is the public
component, split into its no-jump and jump parts.  The three-entry vector is
the private component, whose entries \(h,z,b\) belong, in order, to the
three private summands in \eqref{eq:Gr-space-decomposition}.

To apply \(G\) to \eqref{eq:local-five-components}, decompose \(h\) and
\(b\) into their zero-ancilla components and their orthogonal complements:
\[
 \begin{aligned}
 h_0&=(\bra0_{\mathsf A_H}\otimes I_{\mathsf S})h,
 &h^\perp&=h-(\ket0_{\mathsf A_H}\otimes I_{\mathsf S})h_0,\\
 b_0&=(\bra0_{\mathsf A_B}\bra0_{\mathsf F}\otimes I_{\mathsf S})b,
 &b^\perp&=b-(\ket0_{\mathsf A_B}\ket0_{\mathsf F}\otimes I_{\mathsf S})b_0.
 \end{aligned}
\]
The unitary matrix \(M\) from \eqref{eq:reflection-matrix} defines
\(s',h_0',b_0'\) by
\begin{equation}
 \begin{pmatrix}s'\\h_0'\\b_0'\end{pmatrix}
 =\left[\frac1{1+\mu}
 \begin{pmatrix}
  1-\mu&-2i\kappa&2\beta\\
  2\kappa&i(1-\kappa^2+\beta^2)&2\kappa\beta\\
  2\beta&-2i\kappa\beta&-1-\kappa^2+\beta^2
 \end{pmatrix}\otimes I_{\mathsf S}\right]
 \begin{pmatrix}s\\h_0\\b_0\end{pmatrix}.
 \label{eq:local-block-coefficients}
\end{equation}
To write the terms depending on \(z\), set
\begin{equation}
 P_\perp:=I-\proj0_{\mathsf A_B}\otimes P_{\rm L}.
 \label{eq:B-output-complement}
\end{equation}
The action of \(G\) in \eqref{eq:G-full-action} on the vector
\eqref{eq:local-five-components} is
\begin{equation}
 G\left[
 \begin{pmatrix}\ket0_{\mathsf E_r}\otimes s\\\ell\end{pmatrix}
 \oplus\begin{pmatrix}h\\z\\b\end{pmatrix}
 \right]
 =\begin{pmatrix}
  \ket0_{\mathsf E_r}\otimes s'\\
  (\bra0_{\mathsf A_B}\otimes P_{\rm L})z
 \end{pmatrix}\oplus\begin{pmatrix}
  (\ket0_{\mathsf A_H}\otimes I_{\mathsf S})h_0'+ih^\perp\\
  (\ket0_{\mathsf A_B}\ket0_{\mathsf F}\otimes I_{\mathsf S})b_0'-b^\perp\\
  \ket0_{\mathsf A_B}\otimes\ell+P_\perp z
 \end{pmatrix}.
 \label{eq:local-full-action}
\end{equation}
After restoring \(\mathcal R_{<r}\) and the suppressed zero states of
\(\mathsf E_{r+1},\ldots,\mathsf E_{J-1}\), the two direct-sum terms on the
right-hand side of \eqref{eq:local-full-action} lie in \(\cH_{r+1}\) and
\(\cK_r\), respectively.  Since \(\Pi_r\) projects onto \(\cK_r\), they are
the public and private components of the \(G_r\) output, respectively.

\smallskip
\noindent\textbf{The restrictions to each \(\cK_j\).}
We first compute \(\Pi_jS_{11}\Pi_j\), \(\Pi_jS_{11}(0)\Pi_j\), and
\((\Delta_0)_{jj}\).  We then show that \(S_{11}(0)\) preserves each
\(\cK_j\), so these restrictions determine its full action.
Let \(G_{11}\) denote the block of \(G\) from \(\cH_{\rm priv}\) to itself,
and let \(G_{11}(0)\) denote the corresponding block with
\(\kappa=\beta=0\).
In the representation \eqref{eq:Gr-space-decomposition} with \(r=j\), the
definition of the global query in \eqref{eq:global-query} gives
\begin{equation}
 \left.Q\right|_{\cK_j}=I_{\mathcal R_{<j}}\otimes\Omega.
 \label{eq:query-on-Kj}
\end{equation}
In particular, \(Q\cK_j\subseteq\cK_j\).  Equation
\eqref{eq:inactive-update-on-Kj} shows that any consecutive product not
containing \(G_j\) acts as the identity on \(\cK_j\).  Hence
\begin{equation}
 \begin{aligned}
 \Pi_j(G_bG_{b-1}\cdots G_a)
 &=(G_bG_{b-1}\cdots G_a)\Pi_j=\Pi_j
 &&\bigl(0\leq a\leq b<j\ \text{or}\ j<a\leq b<J\bigr),\\
 (G_rG_{r-1}\cdots G_0)Q\Pi_j
 &=(G_rG_{r-1}\cdots G_0)\Pi_jQ\Pi_j=Q\Pi_j
 &&(0\leq r<j).
 \end{aligned}
 \label{eq:remove-inactive-updates}
\end{equation}
Substituting \eqref{eq:remove-inactive-updates} into the
factorization \eqref{eq:appendix-S-factorization} removes every
\(G_r\) with \(r\ne j\).  In the representation
\eqref{eq:Gr-space-decomposition} with \(r=j\), the private-to-private block
of \(G_j\) is \(I_{\mathcal R_{<j}}\otimes G_{11}\).  Together with
\eqref{eq:query-on-Kj}, this gives
\begin{equation}
 \begin{aligned}
 \Pi_jS_{11}\Pi_j
 &=\Pi_j(G_{J-1}\cdots G_{j+1})G_j
   (G_{j-1}\cdots G_0)Q\Pi_j\\
 &=\Pi_jG_jQ\Pi_j
 =I_{\mathcal R_{<j}}\otimes G_{11}\Omega.
 \end{aligned}
 \label{eq:S11-diagonal-block}
\end{equation}
Setting \(\kappa=\beta=0\) in \eqref{eq:S11-diagonal-block} gives
\(\Pi_jS_{11}(0)\Pi_j
=I_{\mathcal R_{<j}}\otimes G_{11}(0)\Omega\).  By the definition of
\((\Delta_0)_{jj}\) in \eqref{eq:correction-blocks},
\begin{equation}
 \begin{aligned}
 (\Delta_0)_{jj}
 &=\Pi_j[S_{11}-S_{11}(0)]\Pi_j\\
 &=I_{\mathcal R_{<j}}\otimes[(G_{11}-G_{11}(0))\Omega].
 \end{aligned}
 \label{eq:delta-zero-diagonal-form}
\end{equation}
At \(\kappa=\beta=0\), the matrix in
\eqref{eq:local-block-coefficients} is \(\operatorname{diag}(1,i,-1)\).
The \(\cH_{\rm priv}\to\cH_{\rm priv}\) block of
\eqref{eq:local-full-action}, obtained by setting \(s=\ell=0\), is therefore
\begin{equation}
 G_{11}(0)
 =(iI_{\cH_{\mathsf A_H}\otimes\cH_{\mathsf S}})\oplus
 \begin{pmatrix}0&-I\\P_\perp&0\end{pmatrix}.
 \label{eq:local-zero-private}
\end{equation}
We next prove that \(\Pi_kS_{11}(0)\Pi_j=0\) whenever \(k\ne j\).
With \(\kappa=\beta=0\) and \(\ell=h=z=b=0\),
\eqref{eq:local-full-action} gives
\(s'=s\), \(h_0'=b_0'=0\), and zero output in the private space.
In the representation \eqref{eq:Gr-space-decomposition}, the subspace
\(\cH_r\) corresponds to
\(\mathcal R_{<r}\otimes
(\mathbb C\ket0_{\mathsf E_r}\otimes\cH_{\mathsf S})\), which is the first
public row in \eqref{eq:local-five-components}.
Therefore \(G_r(0)\zeta=\zeta\) for every \(\zeta\in\cH_r\).
For \(r>j\), the inclusion \(\cH_{j+1}\subseteq\cH_r\) shows that
\(G_r(0)\) fixes \(\cH_{j+1}\).  Specializing
\eqref{eq:inactive-update-on-Kj} to \(G_r(0)\) shows that it also fixes
\(\cK_j\).  Hence
\[
 \left.G_{J-1}(0)\cdots G_{j+1}(0)
 \right|_{\cH_{j+1}\oplus\cK_j}=I.
\]
Specializing the second relation in \eqref{eq:remove-inactive-updates} to
\(G_r(0)\) gives
\[
 [G_{j-1}(0)\cdots G_0(0)]Q\Pi_j=Q\Pi_j.
\]
Moreover, \(G_j(0)Q\cK_j\subseteq\cH_{j+1}\oplus\cK_j\), while \(\Pi_k\)
vanishes on this space for \(k\ne j\).  It follows that
\begin{equation}
 \begin{aligned}
 \Pi_kS_{11}(0)\Pi_j
 &=\Pi_k[G_{J-1}(0)\cdots G_{j+1}(0)]G_j(0)
 [G_{j-1}(0)\cdots G_0(0)]Q\Pi_j\\
 &=\Pi_kG_j(0)Q\Pi_j
 =0\qquad(k\ne j).
 \end{aligned}
 \label{eq:zero-private-offdiagonal}
\end{equation}
Equation~\eqref{eq:zero-private-offdiagonal} implies
\(S_{11}(0)\cK_j\subseteq\cK_j\) for every \(j\).  Its action on \(\cK_j\)
is obtained by substituting \eqref{eq:local-zero-private} and
\(\Omega=U_H\oplus U_B\oplus U_B^\dagger\) into
\eqref{eq:S11-diagonal-block}:
\begin{equation}
 \begin{aligned}
 \left.S_{11}(0)\right|_{\cK_j}
 &=\left.(\Pi_jS_{11}(0)\Pi_j)\right|_{\cK_j}
 =I_{\mathcal R_{<j}}\otimes G_{11}(0)\Omega\\
 &=I_{\mathcal R_{<j}}\otimes
 \left[(iU_H)\oplus
 \begin{pmatrix}0&-U_B^\dagger\\P_\perp U_B&0\end{pmatrix}\right],
 \quad 0\leq j<J.
 \end{aligned}
 \label{eq:zero-private}
\end{equation}
To prove \(p(S_{11}(0))=0\), we compute the squares of the two
direct-sum components in \eqref{eq:zero-private}:
\begin{align}
 (iU_H)^2&=-I,\nonumber\\
 \begin{pmatrix}0&-U_B^\dagger\\P_\perp U_B&0\end{pmatrix}^{\!2}
 &=-\operatorname{diag}\!\left(
 U_B^\dagger P_\perp U_B,P_\perp\right).
 \label{eq:zero-private-square}
\end{align}
Both \(P_\perp\) and \(U_B^\dagger P_\perp U_B\) are projections.  Thus
\begin{align*}
 p(iU_H)&=(-I)(I-I)=0,\\
 p\!\left(\begin{pmatrix}0&-U_B^\dagger\\P_\perp U_B&0\end{pmatrix}\right)
 &=-\operatorname{diag}(U_B^\dagger P_\perp U_B,P_\perp)
 \operatorname{diag}(I-U_B^\dagger P_\perp U_B,I-P_\perp)=0.
\end{align*}
Applying \(p\) to each direct-sum component in \eqref{eq:zero-private},
\[
 p(S_{11}(0))
 =\bigoplus_{j=0}^{J-1}I_{\mathcal R_{<j}}\otimes(0\oplus0)=0.
\]
To bound \(\norm{S_{11}(0)}\), compute
\[
 \begin{pmatrix}0&-U_B^\dagger\\P_\perp U_B&0\end{pmatrix}^{\!\dagger}
 \begin{pmatrix}0&-U_B^\dagger\\P_\perp U_B&0\end{pmatrix}
 =
 \operatorname{diag}\!\left(
 U_B^\dagger P_\perp U_B,I\right)\preceq I.
\]
Together with \(\norm{iU_H}=1\), the direct-sum form
\eqref{eq:zero-private} gives
\begin{equation}
 \norm{S_{11}(0)}\leq1.
 \label{eq:A0-contraction}
\end{equation}

We next bound \(\norm{(\Delta_0)_{jj}}\).
By \eqref{eq:local-scalars},
\(\mu\leq\alpha\delta/2\leq1/4\) under the assumption
\(\alpha\delta\leq1/2\).  Hence
\begin{equation}
 0<d\leq1,
 \qquad
 1-d=\frac{2\mu}{1+\mu}\leq\theta
 \leq4\mu\leq2\alpha\delta.
 \label{eq:d-theta-bounds}
\end{equation}
For \(s=\ell=0\), the private output in \eqref{eq:local-full-action} is
\(G_{11}(h,z,b)^\top\).  By
\eqref{eq:local-zero-private}, subtracting
\(G_{11}(0)(h,z,b)^\top=(ih,-b,P_\perp z)^\top\) gives
\[
 (G_{11}-G_{11}(0))\begin{pmatrix}h\\z\\b\end{pmatrix}
 =\begin{pmatrix}
  \ket0_{\mathsf A_H}\otimes(h_0'-ih_0)\\
  \ket0_{\mathsf A_B}\ket0_{\mathsf F}\otimes(b_0'+b_0)\\0
 \end{pmatrix}.
\]
The differences \(h_0'-ih_0\) and \(b_0'+b_0\) follow from the last
two rows of \eqref{eq:local-block-coefficients} with \(s=0\):
\begin{align}
 \begin{pmatrix}h_0'-ih_0\\b_0'+b_0\end{pmatrix}
 &=\left\{\left[
 \frac1{1+\mu}
 \begin{pmatrix}
  i(1-\kappa^2+\beta^2)&2\kappa\beta\\
  -2i\kappa\beta&-1-\kappa^2+\beta^2
 \end{pmatrix}
 -\begin{pmatrix}i&0\\0&-1\end{pmatrix}\right]\otimes I_{\mathsf S}\right\}
 \begin{pmatrix}h_0\\b_0\end{pmatrix}
 \nonumber\\
 &=\frac{2}{1+\mu}\left[
 \begin{pmatrix}\kappa\\\beta\end{pmatrix}
 \begin{pmatrix}-i\kappa&\beta\end{pmatrix}\otimes I_{\mathsf S}\right]
 \begin{pmatrix}h_0\\b_0\end{pmatrix}.
 \label{eq:diagonal-correction}
\end{align}
The embeddings \(h_0\mapsto\ket0_{\mathsf A_H}\otimes h_0\) and
\(b_0\mapsto\ket0_{\mathsf A_B}\ket0_{\mathsf F}\otimes b_0\)
are isometries.
By \eqref{eq:delta-zero-diagonal-form} and the unitarity of \(\Omega\),
\[
 \begin{aligned}
 \norm{(\Delta_0)_{jj}}
 &=\norm{G_{11}-G_{11}(0)}
 =\frac{2}{1+\mu}\left\|
 \begin{pmatrix}\kappa\\\beta\end{pmatrix}
 \begin{pmatrix}-i\kappa&\beta\end{pmatrix}\right\|\\
 &=\frac{2}{1+\mu}
   \sqrt{\kappa^2+\beta^2}\sqrt{\kappa^2+\beta^2}
 =\frac{2\mu}{1+\mu}=1-d\leq\theta.
 \end{aligned}
\]

\begin{figure}[!b]
 \centering
 \begin{tikzpicture}[
  x=1cm,y=0.9cm,font=\small,
  >={Stealth[length=1.6mm,width=1.2mm]},
  every node/.style={inner sep=2pt},
  background path/.style={draw=black!27,line width=0.45pt,->},
  selected path/.style={draw=blue!60!black,line width=1.4pt,->},
  stage label/.style={font=\small,align=center}
 ]
 % Rows are direct-sum subspaces, not tensor-product wires.
 \node[anchor=west,font=\small\bfseries] at (-0.9,1.65)
  {$S=(G_3G_2G_1G_0)Q$ on
   $\cH_4\oplus\cK_0\oplus\cK_1\oplus\cK_2\oplus\cK_3$};
 \fill[blue!3] (0,-0.30) rectangle (13,0.30);
 \foreach \y/\name in {0/{\cH_4},-1/{\cK_0},-2/{\cK_1},-3/{\cK_2},-4/{\cK_3}} {
  \node[anchor=east] at (-0.15,\y) {$\name$};
  \foreach \x/\xx in {0/2.6,2.6/5.2,5.2/7.8,7.8/10.4,10.4/13} {
   \draw[background path] (\x,\y)--(\xx,\y);
  }
 }
 \node[stage label] at (1.3,0.88) {$Q$\\[-1pt]\scriptsize one query};
 \foreach \x/\r/\h in {3.9/0/1,6.5/1/2,9.1/2/3,11.7/3/4} {
  \node[stage label] at (\x,0.88)
   {$G_{\r}$\\[-1pt]\scriptsize $\cH_{\h}\oplus\cK_{\r}$};
 }
 \node[fill=white] at (1.3,0) {$I$};
 \foreach \y in {-1,-2,-3,-4} {
  \node[fill=white] at (1.3,\y) {$I\otimes\Omega$};
 }
 % Each G_r couples only the indicated H subspace and K_r.
 \foreach \x/\xx/\y in {2.6/5.2/-1,5.2/7.8/-2,7.8/10.4/-3,10.4/13/-4} {
  \draw[background path] (\x,0)--(\xx,\y);
  \draw[background path] (\x,\y)--(\xx,0);
 }
 % The component selected by Pi_3 S_11 Pi_0.
 \draw[selected path] (0,-1)--(2.6,-1);
 \node[fill=white,text=blue!60!black] at (1.3,-1) {$I\otimes\Omega$};
 \draw[selected path] (2.6,-1)--(5.2,0);
 \draw[selected path] (5.2,0)--(7.8,0)
  node[midway,below=3pt,fill=white,text=blue!60!black] {$d$};
 \draw[selected path] (7.8,0)--(10.4,0)
  node[midway,below=3pt,fill=white,text=blue!60!black] {$d$};
 \draw[selected path] (10.4,0)--(13,-4);
 \foreach \x/\h in {5.2/1,7.8/2,10.4/3} {
  \fill[blue!60!black] (\x,0) circle (1.3pt);
  \node[above=3pt,text=blue!60!black] at (\x,0) {$\cH_{\h}$};
 }
 \fill[blue!60!black] (0,-1) circle (1.5pt);
 \fill[blue!60!black] (13,-4) circle (1.5pt);
 \node[anchor=north east,text=blue!60!black] at (13,-4.10)
  {$\Pi_3S_{11}\Pi_0$};
 \end{tikzpicture}
 \caption{The sequential product for $J=4$, with rows representing
 direct-sum subspaces.  The highlighted arrows show the only sequence of
 subspaces contributing to $\Pi_3S_{11}\Pi_0$: it passes through
 $\cH_1\subseteq\cH_2\subseteq\cH_3$, where $G_1$ and $G_2$ each act by
 multiplication by $d$.  The updates $G_{r+1},\ldots,G_{J-1}$ act as the
 identity on $\cK_r$.}
 \label{fig:sequential-blocks}
\end{figure}

\smallskip
\noindent\textbf{Maps between different \(\cK_j\).}
First suppose that \(k<j\).  The factorization
\eqref{eq:appendix-S-factorization} gives
\begin{equation}
 \begin{aligned}
 \Pi_kS_{11}\Pi_j
 &=\Pi_k(G_{J-1}\cdots G_j)(G_{j-1}\cdots G_0)Q\Pi_j\\
 &=\Pi_k(G_{J-1}\cdots G_j)Q\Pi_j\\
 &=\Pi_kQ\Pi_j\\
 &=0.
 \end{aligned}
 \label{eq:S11-lower-blocks-zero}
\end{equation}
The second and third equalities use the second and first identities in
\eqref{eq:remove-inactive-updates}, respectively.  The last equality follows
from \eqref{eq:query-on-Kj} and \(\cK_k\perp\cK_j\).

For \(k>j\), the same factorization gives
\[
 \begin{aligned}
 \Pi_kS_{11}\Pi_j
 &=\Pi_k(G_{J-1}\cdots G_{k+1})G_k\cdots G_j
   (G_{j-1}\cdots G_0)Q\Pi_j\\
 &=\Pi_kG_kG_{k-1}\cdots G_jQ\Pi_j.
 \end{aligned}
\]
Here the second equality applies the two identities in
\eqref{eq:remove-inactive-updates} with indices \(k\) and \(j\), respectively.
For \(r<k\), the first identity in
\eqref{eq:remove-inactive-updates}, applied with \(j=r\), shows that a
component in \(\cK_r\) cannot subsequently reach \(\cK_k\):
\[
 \Pi_kG_kG_{k-1}\cdots G_{r+1}\Pi_r
 =\Pi_k\Pi_r=0.
\]
For each \(j\leq r<k\), inserting \(I=(I-\Pi_r)+\Pi_r\) therefore gives
\[
 \begin{aligned}
 \Pi_kG_k\cdots G_jQ\Pi_j
 &=\Pi_kG_k\cdots G_{r+1}\bigl[(I-\Pi_r)+\Pi_r\bigr]
 G_r\cdots G_jQ\Pi_j\\
 &=\Pi_kG_k\cdots G_{r+1}(I-\Pi_r)G_r\cdots G_jQ\Pi_j.
 \end{aligned}
\]
At this point the output of \(G_r\) lies in
\(\cH_{r+1}\oplus\cK_r\).  Hence \(I-\Pi_r\) removes its \(\cK_r\)
component and leaves its \(\cH_{r+1}\) component.
Repeating the insertion for \(r=j,\ldots,k-1\) gives
\begin{equation}
 \begin{aligned}
 \Pi_kS_{11}\Pi_j
 &=\Pi_kG_k\cdots G_jQ\Pi_j\\
 &=\underbrace{\Pi_kG_k}_{\cH_k\to\cK_k}
 \underbrace{\bigl[(I-\Pi_{k-1})G_{k-1}\bigr]\cdots
 \bigl[(I-\Pi_{j+1})G_{j+1}\bigr]}_{\cH_{j+1}\to\cH_k}
 \underbrace{(I-\Pi_j)G_jQ\Pi_j}_{\cK_j\to\cH_{j+1}}.
 \end{aligned}
 \label{eq:sequential-block-factorization}
\end{equation}
Figure~\ref{fig:sequential-blocks} illustrates
\eqref{eq:sequential-block-factorization} for \(J=4\).

First consider \((I-\Pi_j)G_jQ\Pi_j:\cK_j\to\cH_{j+1}\), the rightmost
factor in \eqref{eq:sequential-block-factorization}.  For
\(\xi\in\mathcal R_{<j}\) and \(\chi\in\cH_{{\rm priv},j}\),
\eqref{eq:query-on-Kj} maps \(\xi\otimes\chi\) to
\(\xi\otimes\Omega\chi\); write \(\Omega\chi=(h,z,b)^\top\).  This input
has no public component.
Setting \(s=\ell=0\) in \eqref{eq:local-full-action} and using the first row of
\eqref{eq:local-block-coefficients} therefore gives
\begin{equation}
 (I-\Pi_j)G_jQ(\xi\otimes\chi)
 =\xi\otimes\begin{pmatrix}
  \dfrac{2}{1+\mu}\ket0_{\mathsf E_j}\otimes(-i\kappa h_0+\beta b_0)\\
  (\bra0_{\mathsf A_B}\otimes P_{\rm L})z
 \end{pmatrix}.
 \label{eq:local-from-K}
\end{equation}
By the definitions of \(h_0,b_0\) and \(\Omega\chi\),
\[
 -i\kappa h_0+\beta b_0
 =\begin{pmatrix}
  -i\kappa(\bra0_{\mathsf A_H}\otimes I_{\mathsf S})&0&
  \beta(\bra0_{\mathsf A_B}\bra0_{\mathsf F}\otimes I_{\mathsf S})
 \end{pmatrix}\Omega\chi.
\]

For the other two factors in
\eqref{eq:sequential-block-factorization}, we need
\((I-\Pi_r)G_r:\cH_r\to\cH_{r+1}\) for \(j<r<k\) and
\(\Pi_kG_k:\cH_k\to\cK_k\).
An input in \(\cH_r\) has zero Kraus label in \(\mathsf E_r\) and no
\(\cK_r\) component, so set \(\ell=h=z=b=0\) in
\eqref{eq:local-full-action} and use the first column of
\eqref{eq:local-block-coefficients}.  For
\(\xi\in\mathcal R_{<r}\)
and \(j<r\leq k\), this gives
\begin{equation}
 G_r\left\{\xi\otimes\left[
 \begin{pmatrix}\ket0_{\mathsf E_r}\otimes s\\0\end{pmatrix}
 \oplus\begin{pmatrix}0\\0\\0\end{pmatrix}\right]\right\}
 =\xi\otimes\left[
 d\begin{pmatrix}\ket0_{\mathsf E_r}\otimes s\\0\end{pmatrix}
 \oplus\frac{2}{1+\mu}\begin{pmatrix}
  \kappa(\ket0_{\mathsf A_H}\otimes s)\\
  \beta(\ket0_{\mathsf A_B}\ket0_{\mathsf F}\otimes s)\\0
 \end{pmatrix}\right].
 \label{eq:local-H-action}
\end{equation}
Projecting \eqref{eq:local-H-action} onto \(\cH_{r+1}\) gives the middle
factors of \eqref{eq:sequential-block-factorization}.  Since the displayed
inputs span \(\cH_r\),
\begin{equation}
 (I-\Pi_r)G_r\zeta=d\zeta,
 \qquad \zeta\in\cH_r.
 \label{eq:local-from-H}
\end{equation}
Iterating \eqref{eq:local-from-H} along
\(\cH_{j+1}\subseteq\cdots\subseteq\cH_k\) gives
\[
 \bigl[(I-\Pi_{k-1})G_{k-1}\bigr]\cdots
 \bigl[(I-\Pi_{j+1})G_{j+1}\bigr]\zeta
 =d^{k-j-1}\zeta,
 \qquad \zeta\in\cH_{j+1}.
\]
Taking the \(\cK_k\) component of \eqref{eq:local-H-action} with \(r=k\)
gives the leftmost factor \(\Pi_kG_k:\cH_k\to\cK_k\).

\par\Needspace{13\baselineskip}
We now separate the zero- and nonzero-label components in \(\mathsf E_j\),
which define \((\Delta_0)_{kj}\) and \((\Delta_1)_{kj}\).  For
\(j<r\leq k\), both \(G_r\) and \(\Pi_r\) leave \(\mathsf E_j\) unchanged,
so
\[
 G_r(\proj0_{\mathsf E_j}\otimes I)=(\proj0_{\mathsf E_j}\otimes I)G_r,
 \qquad
 \Pi_r(\proj0_{\mathsf E_j}\otimes I)=(\proj0_{\mathsf E_j}\otimes I)\Pi_r.
\]
For \(k>j\), \eqref{eq:zero-private-offdiagonal} gives
\(\Pi_k\Delta\Pi_j=\Pi_kS_{11}\Pi_j\).  Combining the definition
\eqref{eq:correction-blocks} with
\eqref{eq:sequential-block-factorization}, and then commuting
\(\proj0_{\mathsf E_j}\otimes I\) through the factors between \(G_j\) and
the final projection \(\Pi_k\),
gives
\begin{equation}
 \begin{aligned}
 (\Delta_0)_{kj}
 &=(\proj0_{\mathsf E_j}\otimes I)\Pi_kG_k
 \bigl[(I-\Pi_{k-1})G_{k-1}\bigr]\cdots
 \bigl[(I-\Pi_j)G_j\bigr]Q\Pi_j\\
 &=\Pi_kG_k\bigl[(I-\Pi_{k-1})G_{k-1}\bigr]\cdots
 \bigl[(I-\Pi_{j+1})G_{j+1}\bigr]
 (\proj0_{\mathsf E_j}\otimes I)(I-\Pi_j)G_jQ\Pi_j.
 \end{aligned}
 \label{eq:output-projection}
\end{equation}
Replacing \(\proj0_{\mathsf E_j}\) by \(I-\proj0_{\mathsf E_j}\) gives
the corresponding expression for \((\Delta_1)_{kj}\).  Thus the upper and
lower entries in \eqref{eq:local-from-K} contribute to
\((\Delta_0)_{kj}\) and \((\Delta_1)_{kj}\), respectively.

To write the product in \eqref{eq:output-projection} explicitly, we also
record how the output of the rightmost factor is embedded into \(\cH_k\).
As this output propagates from \(\cH_{j+1}\) to \(\cH_k\),
\eqref{eq:local-from-H} shows that the intervening registers
\(\mathsf E_{j+1},\ldots,\mathsf E_{k-1}\) remain in \(\ket0\).  Define
\[
 \iota_{kj}:\cH_{\mathsf E_j}\otimes\cH_{\mathsf S}
 \longrightarrow
 \left(\bigotimes_{r=j}^{k-1}\cH_{\mathsf E_r}\right)\otimes\cH_{\mathsf S}
\]
by
\begin{equation}
 \iota_{kj}(\ket{\ell}_{\mathsf E_j}\otimes\psi)
 =\ket{\ell}_{\mathsf E_j}\otimes
   \ket{0}_{\mathsf E_{j+1}\cdots\mathsf E_{k-1}}\otimes\psi,
   \qquad 0\leq\ell\leq m.
 \label{eq:zero-label-embedding}
\end{equation}
When \(k=j+1\), \(\iota_{kj}=I\).  Since \(\iota_{kj}\) leaves
\(\mathsf E_j\) unchanged,
\begin{equation}
 (\proj{0}_{\mathsf E_j}\otimes I)\iota_{kj}
 =\iota_{kj}(\proj{0}_{\mathsf E_j}\otimes I).
 \label{eq:zero-label-projection}
\end{equation}

Substituting \eqref{eq:local-from-K} and \eqref{eq:local-H-action} into
\eqref{eq:sequential-block-factorization}, writing the middle product as
\(d^{k-j-1}\iota_{kj}\), and applying the \(\mathsf E_j\) projections in
\eqref{eq:output-projection} and \eqref{eq:zero-label-projection} gives the
following factorization.  The common factor
\(I_{\mathcal R_{<j}}\) is omitted.
\begin{samepage}
\begin{align}
 (\Delta_0)_{kj}
 &=\underbrace{\frac{2}{1+\mu}
 \begin{pmatrix}
  \kappa(\ket0_{\mathsf A_H}\otimes I_{\mathsf S})\\
  \beta(\ket0_{\mathsf A_B}\ket0_{\mathsf F}\otimes I_{\mathsf S})\\0
 \end{pmatrix}}_{\eqref{eq:local-H-action}}
 \underbrace{d^{k-j-1}\iota_{kj}}
 _{\eqref{eq:local-from-H},\,\eqref{eq:zero-label-embedding}}
 \nonumber\\[-0.2em]
 &\times
 \underbrace{\frac{2}{1+\mu}(\ket0_{\mathsf E_j}\otimes I_{\mathsf S})
 \begin{pmatrix}
  -i\kappa(\bra0_{\mathsf A_H}\otimes I_{\mathsf S})&0&
  \beta(\bra0_{\mathsf A_B}\bra0_{\mathsf F}\otimes I_{\mathsf S})
 \end{pmatrix}\Omega}_{\eqref{eq:local-from-K}},
 \nonumber\\
 (\Delta_1)_{kj}
 &=\underbrace{\frac{2}{1+\mu}
 \begin{pmatrix}
  \kappa(\ket0_{\mathsf A_H}\otimes I_{\mathsf S})\\
  \beta(\ket0_{\mathsf A_B}\ket0_{\mathsf F}\otimes I_{\mathsf S})\\0
 \end{pmatrix}}_{\eqref{eq:local-H-action}}
 \underbrace{d^{k-j-1}\iota_{kj}}
 _{\eqref{eq:local-from-H},\,\eqref{eq:zero-label-embedding}}
 \underbrace{(\bra0_{\mathsf A_B}\otimes P_{\rm L})
 \begin{pmatrix}0&I&0\end{pmatrix}\Omega}_{\eqref{eq:local-from-K}}.
 \label{eq:offdiagonal-block-factorization}
\end{align}
\end{samepage}
To bound \((\Delta_s)_{kj}\) for \(s=0,1\), use
\(\iota_{kj}^\dagger\iota_{kj}=I\) and
the following column and row norms:
\begin{equation}
\begin{aligned}
 \left\|
 \begin{pmatrix}\kappa(\ket0_{\mathsf A_H}\otimes I_{\mathsf S})\\
 \beta(\ket0_{\mathsf A_B}\ket0_{\mathsf F}\otimes I_{\mathsf S})\\0\end{pmatrix}
 \right\|&=\sqrt\mu,
 \\[-0.1em]
 \left\|
 \begin{pmatrix}-i\kappa(\bra0_{\mathsf A_H}\otimes I_{\mathsf S})&0&
 \beta(\bra0_{\mathsf A_B}\bra0_{\mathsf F}\otimes I_{\mathsf S})\end{pmatrix}
 \right\|&=\sqrt\mu,
 \\[-0.1em]
 \left\|(\bra0_{\mathsf A_B}\otimes P_{\rm L})
 \begin{pmatrix}0&I&0\end{pmatrix}\Omega\right\|&\leq1.
\end{aligned}
 \label{eq:factor-norms}
\end{equation}
The first two equalities follow from
\(\kappa^2+\beta^2=\mu\) in \eqref{eq:local-scalars}.  The map
\(\ket0_{\mathsf E_j}\otimes I_{\mathsf S}\) is also an isometry.  For the
last bound,
\(\norm{\bra0_{\mathsf A_B}\otimes P_{\rm L}}\leq1\),
\(\left\|\begin{pmatrix}0&I&0\end{pmatrix}\right\|=1\), and
\(\norm\Omega=1\).
Equations \eqref{eq:offdiagonal-block-factorization} and
\eqref{eq:factor-norms} give, for \(k>j\),
\[
 \norm{(\Delta_1)_{kj}}
 \leq\frac{2\sqrt\mu}{1+\mu}d^{k-j-1}
 =\sqrt\theta\,d^{k-j-1},
 \qquad
 \norm{(\Delta_0)_{kj}}
 \leq\frac{4\mu}{(1+\mu)^2}d^{k-j-1}
 =\theta d^{k-j-1}.
\]
Combining \eqref{eq:S11-lower-blocks-zero},
\eqref{eq:zero-private-offdiagonal}, and the definitions in
\eqref{eq:correction-blocks} gives
\begin{equation}
 \Pi_k\Delta\Pi_j=
 \begin{cases}
  0,&k<j,\\
  (\Delta_0)_{jj},&k=j,\\
  (\Delta_0)_{kj}+(\Delta_1)_{kj},&k>j.
 \end{cases}
 \label{eq:blockwise-decomposition}
\end{equation}
Summing over \(j,k\) gives \eqref{eq:private-decomposition}.
\end{proof}

\Needspace{10\baselineskip}
The expansion of \(\mathcal F_q(S_{11})\) will contain ordered products of
\(S_{11}(0),\Delta_1,\Delta_0\).  We next establish a bound for every such
product.

\begin{proposition}[Bound for an ordered product]
\label{prop:ordered-product-bound}
Assume \(\alpha\delta\leq1/2\).  Fix an ordered product
\(T:=X_L\cdots X_1\) with
\(X_i\in\{S_{11}(0),\Delta_1,\Delta_0\}\), containing \(r\) copies of
\(\Delta_1\) and \(u\) copies of \(\Delta_0\).  Then
\begin{equation}
 \norm T\leq
 \theta^{r/2+u}
 \sqrt{\binom{J+r}{r}}\binom{J+u}{u}.
 \label{eq:ordered-product-bound}
\end{equation}
\end{proposition}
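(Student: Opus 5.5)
We expand $T$ blockwise in the decomposition $\cK=\bigoplus_j\cK_j$ and track the time index after each factor. By Lemma~\ref{lem:private-block-bounds}, $S_{11}(0)$ is block diagonal with blocks of norm at most one. Each $\Delta_s$ is block upper triangular in the sense that $(\Delta_s)_{kj}=0$ for $k<j$. Inserting $I=\sum_j\Pi_j$ between consecutive factors therefore writes
\[
 T=\sum_{\mathbf j}\,(X_L)_{j_Lj_{L-1}}\cdots(X_1)_{j_1j_0},
\]
where the sum runs over nondecreasing index sequences. Moreover, the index changes only at the positions of the $\Delta$ factors.

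So only the indices at the $r+u$ correction factors matter. Let $a_1\le b_1\le\cdots$ be the input/output indices of the $\Delta_0$ factors, and let $c_i\le e_i$ be those of the $\Delta_1$ factors, interleaved monotonically. The plan is to handle the $\Delta_0$ and $\Delta_1$ contributions differently.

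\textbf{Step 1: the $\Delta_0$ factors.} For fixed choices at the $\Delta_1$ factors, sum over the index jumps of the $\Delta_0$ factors using the triangle inequality. Each block is bounded by $\theta d^{k-j-1}$ (or by $\theta$ when $k=j$) via \eqref{eq:delta-zero-block-bound}. Counting nondecreasing placements of $u$ jumps among $J$ indices gives at most $\binom{J+u}{u}$ patterns. Each pattern contributes at most $\theta^u$, since $d\le1$.

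Rather than bounding $\sum_{k}d^{k-j-1}$, which would give $1/(1-d)$ and lose the $\theta$ factor, we choose a cruder bound. We simply count the possible sequences of (input, output) indices as monotone lattice paths and bound every $d^{\cdot}$ by $1$.

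\textbf{Step 2: the $\Delta_1$ factors.} Here we want only the square root of the count, which requires orthogonality. The key observation is that $(\Delta_1)_{kj}$ writes a nonzero label into $\mathsf E_j$. After that, no later factor changes $\mathsf E_j$, because every subsequent factor acts with $I_{\mathcal R_{<j'}}$ for $j'\geq k>j$.

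Consequently, the terms indexed by different sets of jump positions $c_1<\cdots$ (the $c_i$ are strictly increasing, since a label at $\mathsf E_c$ forces the later index to exceed $c$) have output ranges that are mutually orthogonal. They are distinguished by which registers $\mathsf E_c$ carry nonzero labels. The same holds for their adjoint structure.

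To exploit this, write $T=\sum_{\mathbf c}T_{\mathbf c}$, grouping by the $\Delta_1$ input positions $\mathbf c$. For each fixed $\mathbf c$, bound $\|T_{\mathbf c}\|$ by Step~1 together with the $\sqrt\theta$ factors from \eqref{eq:delta-one-block-bound}. We then need the inequality
\[
 \Bigl\|\sum_{\mathbf c}T_{\mathbf c}\Bigr\|^2\le\sum_{\mathbf c}\|T_{\mathbf c}\|^2 .
\]
The ranges are orthogonal, which gives $\|\sum_{\mathbf c}T_{\mathbf c}v\|^2=\sum_{\mathbf c}\|T_{\mathbf c}v\|^2$. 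However, $\sum_{\mathbf c}\|T_{\mathbf c}v\|^2$ is not automatically at most $\sum_{\mathbf c}\|T_{\mathbf c}\|^2\|v\|^2$ with the right count, and this is where care is needed.

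We would first try a Cauchy--Schwarz-type argument. We accept $\sum_{\mathbf c}\|T_{\mathbf c}\|^2\|v\|^2$, where the number of $\mathbf c$ is at most $\binom{J+r}{r}$ (monotone sequences of $r$ positions) and each $\|T_{\mathbf c}\|\le\theta^{r/2}\,\theta^u\binom{J+u}{u}$. This gives exactly \eqref{eq:ordered-product-bound}.

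\textbf{Main obstacle.} The main difficulty is making the orthogonal-range decomposition rigorous when $\Delta_0$ factors and $\Delta_1$ factors interleave. The interleaving could cause the $\Delta_0$ summation for a fixed $\mathbf c$ to depend on $\mathbf c$, but since each bound is uniform this is harmless. Two further points must also be checked. First, label $0$ in $\mathsf E_j$ created by $\Delta_0$ and by $S_{11}(0)$ does not collide with the nonzero labels. Second, later factors truly act trivially on earlier $\mathsf E$ registers. Both facts follow from the tensor structure $I_{\mathcal R_{<j}}\otimes(\cdot)$ in \eqref{eq:zero-private} and \eqref{eq:offdiagonal-block-factorization}.
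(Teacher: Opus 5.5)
Your orthogonality mechanism is the right one, and the paper uses it too. The $i$-th $\Delta_1$ factor writes a nonzero label at its input index, later factors never change that register, and every nonzero input label lies below the starting index. Hence the output determines the input indices $s_1<\cdots<s_r$ of the $\Delta_1$ factors (your $\mathbf c$).

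The gap is in the counting. For fixed $\mathbf c$ there remain $u+1$ free index parameters, not $u$: the $\Delta_0$ increments plus the increment $a_r$ of the last $\Delta_1$ factor (for $r=0$, the starting index $j_0$). Consequently your per-group estimate $\|T_{\mathbf c}\|\le\theta^{r/2+u}\binom{J+u}{u}$ is false. Take $T=\Delta_1$ and $\mathbf c=(0)$. Then $T_{\mathbf c}=\sum_{k\geq1}(\Delta_1)_{k0}$. By \eqref{eq:offdiagonal-block-factorization}, the summands share the same right factor and have ranges in the mutually orthogonal spaces $\cK_k$. So for a suitable $v\in\cK_0$,
$\|T_{\mathbf c}v\|^2=\theta\sum_{k=1}^{J-1}d^{2(k-1)}\|v\|^2=(1-d^{2(J-1)})\|v\|^2\geq\theta(1+d^2)\|v\|^2$,
which exceeds $\theta\|v\|^2$ whenever $J\geq3$ (as $0<d<1$).

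Your Step~1 count $\binom{J+u}{u}$ is likewise unjustified. A triangle inequality over absolute index sequences already costs $\binom{J+u}{u+1}$ terms when $r=0$, a factor $J/(u+1)$ too many. By contrast, the step you flagged as delicate is immediate: $\|\sum_{\mathbf c}T_{\mathbf c}\|^2\le\sum_{\mathbf c}\|T_{\mathbf c}\|^2$ for orthogonal ranges follows from $\|T_{\mathbf c}v\|\le\|T_{\mathbf c}\|\,\|v\|$.

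The missing idea is to parametrize by increments rather than absolute positions, so the absolute offset is absorbed by a direct sum. Fix the increment vectors $a$ (at the $\Delta_1$ factors) and $b$ (at the $\Delta_0$ factors). Terms with different starting indices $j_0$ have orthogonal inputs $\cK_{j_0}$ and orthogonal outputs $\cK_{j_0+|a|+|b|}$. Hence $\|T_{a,b}\|=\max_{j_0}\|T_{a,b}\Pi_{j_0}\|\le\theta^{r/2+u}$, with no counting loss.

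For fixed $b$, distinct $a$ have orthogonal ranges because $a$ can be recovered from the output. Let $g_i$ be the sum of the $\Delta_0$ increments after the $i$-th $\Delta_1$ factor and before the $(i+1)$-st (or before the end of the product, for $i=r$); these are fixed by $b$. Setting $s_{r+1}:=j_L$, one has $a_i=s_{i+1}-s_i-g_i$. This recovery uses the final index $j_L$ as well as the label positions, and grouping by $\mathbf c$ alone discards $j_L$; that is where the extra degree of freedom escapes.

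This gives $\|\sum_aT_{a,b}\|\le\sqrt{\#\{a\}}\,\theta^{r/2+u}$ with $\#\{a\}\le\binom{J+r}{r}$. The triangle inequality over the at most $\binom{J+u}{u}$ vectors $b$ then finishes the proof.
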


\begin{proof}
To expand \(T\) with respect to \(\cK=\bigoplus_j\cK_j\), first note from
\eqref{eq:zero-private-offdiagonal} and \eqref{eq:correction-blocks} that
\[
 \begin{aligned}
 \Pi_kS_{11}(0)\Pi_j&=0 &&(k\ne j),\\
 \Pi_k\Delta_1\Pi_j&=0 &&(k\leq j),\\
 \Pi_k\Delta_0\Pi_j&=0 &&(k<j).
 \end{aligned}
\]
Thus inserting \(I_{\cK}=\sum_j\Pi_j\) between consecutive factors and on
both sides of the product restricts the nonzero terms to
nondecreasing indices:
\begin{equation}
 T=\sum_{0\leq j_0\leq\cdots\leq j_L<J}
 \Pi_{j_L}X_L\Pi_{j_{L-1}}\cdots\Pi_{j_1}X_1\Pi_{j_0}.
 \label{eq:private-summand-expansion}
\end{equation}
For each nonzero term, collect the increments \(j_i-j_{i-1}\) at the
\(\Delta_1\) factors into \(a\) and those at the \(\Delta_0\) factors into
\(b\), retaining their order:
\[
 a:=(j_i-j_{i-1})_{i:X_i=\Delta_1},\qquad
 b:=(j_i-j_{i-1})_{i:X_i=\Delta_0},\qquad
 a\in\mathbb N^r,\quad b\in\mathbb N_0^u.
\]
When \(X_i=S_{11}(0)\), the factor
\(\Pi_{j_i}S_{11}(0)\Pi_{j_{i-1}}\) can be nonzero only if
\(j_i=j_{i-1}\).
Since the order of \(X_1,\ldots,X_L\) is fixed, \(a,b\), and \(j_0\)
determine every \(j_i\), with
\(j_L=j_0+|a|+|b|\), where \(|a|:=\sum_i a_i\) and
\(|b|:=\sum_i b_i\).  For fixed \(a,b\), define \(T_{a,b}\) by summing the
terms in \eqref{eq:private-summand-expansion} over
\(0\leq j_0<J-|a|-|b|\), with all increments fixed by \(a,b\).  Then
\begin{equation}
 T=\sum_b\sum_aT_{a,b}.
 \label{eq:increment-decomposition}
\end{equation}
Only pairs with \(|a|+|b|\leq J-1\) are included.

We first bound each \(T_{a,b}\).
For fixed \(a,b\) and \(j_0\), \(T_{a,b}\Pi_{j_0}\) is a single term in
\eqref{eq:private-summand-expansion}, with input space \(\cK_{j_0}\) and
output space \(\cK_{j_0+|a|+|b|}\).  Hence
\begin{equation}
 T_{a,b}\Pi_{j_0}
 =\Pi_{j_0+|a|+|b|}T_{a,b}\Pi_{j_0},
 \qquad
 T_{a,b}=\bigoplus_{0\leq j_0<J-|a|-|b|}T_{a,b}\Pi_{j_0}.
 \label{eq:fixed-increment-blocks}
\end{equation}
Different \(j_0\) give orthogonal input and output spaces.  Using
submultiplicativity and the contraction bound \eqref{eq:A0-contraction} for
the \(S_{11}(0)\) factors, and then applying
\eqref{eq:delta-one-block-bound}--\eqref{eq:delta-zero-block-bound}, gives
\begin{equation}
 \begin{aligned}
 \norm{T_{a,b}}
 &=\max_{0\leq j_0<J-|a|-|b|}\norm{T_{a,b}\Pi_{j_0}}\\
 &\leq\max_{0\leq j_0<J-|a|-|b|}
 \left[
 \prod_{i:X_i=\Delta_1}\norm{(\Delta_1)_{j_i j_{i-1}}}
 \prod_{i:X_i=\Delta_0}\norm{(\Delta_0)_{j_i j_{i-1}}}
 \right]\\
 &\leq
 \prod_{i=1}^r\bigl(\sqrt\theta\,d^{a_i-1}\bigr)
 \prod_{i:b_i=0}\theta
 \prod_{i:b_i>0}\bigl(\theta d^{b_i-1}\bigr)\\
 &=\theta^{r/2+u}
 d^{\sum_i(a_i-1)+\sum_{i:b_i>0}(b_i-1)}
 \leq\theta^{r/2+u}.
 \end{aligned}
 \label{eq:fixed-increment-bound}
\end{equation}
For fixed \(b\), we next show that distinct \(a\) give orthogonal ranges.
The key is that \(a\) can be recovered from the positions carrying nonzero
output labels.  If \(r=0\),
the list \(a\) is empty and the sum over \(a\) has one term.  For \(r\geq1\),
let \(\Pi_{s_i}\) be the projector immediately to the right of the \(i\)-th
\(\Delta_1\) factor in \eqref{eq:private-summand-expansion}; equivalently,
the input to this factor lies in \(\cK_{s_i}\).  The indices in
\eqref{eq:private-summand-expansion} are nondecreasing, and every
\(\Delta_1\) factor has a positive increment.  Therefore
\(s_1<\cdots<s_r\), and the \(i\)-th such factor is
\((\Delta_1)_{s_i+a_i,s_i}:\cK_{s_i}\to\cK_{s_i+a_i}\).

We now track the Kraus labels through these factors.  Equations
\eqref{eq:zero-private} and
\eqref{eq:delta-zero-diagonal-form} show that both
\(\left.S_{11}(0)\right|_{\cK_j}\) and \((\Delta_0)_{jj}\) have the form
\(I_{\mathcal R_{<j}}\otimes(\cdot)\).  Hence they leave the Kraus labels in
\(\mathsf E_0,\ldots,\mathsf E_{j-1}\) unchanged.
For \(k>j\), the labels added by \((\Delta_s)_{kj}\) can be read from
\eqref{eq:offdiagonal-block-factorization}.  The omitted factor
\(I_{\mathcal R_{<j}}\) leaves the input labels
\((\ell_0,\ldots,\ell_{j-1})\) unchanged.  The upper entry in
\eqref{eq:local-from-K} is supported on \(\ket0_{\mathsf E_j}\), whereas the
lower entry is supported on labels
\(\ell\in\{1,\ldots,m\}\).  The embedding \(\iota_{kj}\)
in \eqref{eq:zero-label-embedding} sets the remaining labels
\(\mathsf E_{j+1},\ldots,\mathsf E_{k-1}\) to zero.  Therefore
\[
 \begin{array}{c|l}
 \text{map}&\text{possible output Kraus labels}\\[0.3em]
 (\Delta_0)_{kj}
   &(\ell_0,\ldots,\ell_{j-1},\underbrace{0,\ldots,0}_{k-j})\\[0.8em]
 (\Delta_1)_{kj}
   &(\ell_0,\ldots,\ell_{j-1},\ell,
     \underbrace{0,\ldots,0}_{k-j-1}),\quad1\leq\ell\leq m.
 \end{array}
\]
Consequently, for every Kraus-label string
\((\ell_0,\ldots,\ell_{j_L-1})\) appearing in the output of
\(T_{a,b}\Pi_{j_0}\), the positions carrying nonzero labels satisfy
\[
 \{0\leq j<j_L:\ell_j\ne0\}
 =\underbrace{\{0\leq j<j_0:\ell_j\ne0\}}_{\text{from the input}}
 \cup\underbrace{\{s_1,\ldots,s_r\}}_{\text{from the \(\Delta_1\) factors}},
 \qquad j_0\leq s_1<\cdots<s_r.
\]
All nonzero input labels occur at positions below \(j_0\leq s_1\).
Therefore the output determines \(s_1,\ldots,s_r\) as the \(r\) largest
positions carrying nonzero labels.  Figure~\ref{fig:output-labels}
illustrates a product containing two \(\Delta_1\) factors.

For \(1\leq i<r\), let \(c_i\) be the sum of the \(\Delta_0\)
increments between the \(i\)-th and \((i+1)\)-st \(\Delta_1\) factors;
let \(c_r\) be the sum after the last \(\Delta_1\) factor.  These sums
are determined by \(b\) and the fixed order of \(X_1,\ldots,X_L\).
The \(i\)-th \(\Delta_1\) factor maps into \(\cK_{s_i+a_i}\).  Thus, setting
\(s_{r+1}:=j_L\), we obtain
\begin{equation}
 s_{i+1}=(s_i+a_i)+c_i,
 \qquad
 a_i=s_{i+1}-s_i-c_i
 \quad(1\leq i\leq r).
 \label{eq:recover-increments}
\end{equation}
\begin{figure}[!tb]
 \centering
 \begin{tikzpicture}[
  x=1.15cm,y=1cm,font=\small,
  >={Stealth[length=1.6mm,width=1.2mm]},
  every node/.style={inner sep=2pt},
  block group/.style={draw=black!55,fill=white,line width=0.8pt,
                      rounded corners=1.5pt},
  nonzero label/.style={draw=blue!60!black,fill=blue!9,line width=0.65pt}
 ]
 % Block widths equal the numbers of new registers in this example.
 \foreach \l/\r in {0/2,2/3,3/6,6/8,8/10,10/11} {
  \draw[block group] ({\l+0.06},-0.42) rectangle ({\r-0.06},1.72);
 }
 \draw[block group,fill=black!8] (0.06,-0.42) rectangle (1.94,1.72);
 \foreach \l/\r in {3/6,8/10} {
  \draw[block group,draw=blue!60!black,fill=blue!2]
   ({\l+0.06},-0.42) rectangle ({\r-0.06},1.72);
 }
 \draw[->,draw=black!55] (2,2.02)--(11,2.02)
  node[midway,above=2pt,font=\scriptsize] {application order};
 \node at (1,1.40) {input};
 \node at (2.5,1.40) {$\Delta_0$};
 \node[text=blue!60!black] at (4.5,1.40) {$\Delta_1$};
 \node at (7,1.40) {$\Delta_0$};
 \node[text=blue!60!black] at (9,1.40) {$\Delta_1$};
 \node at (10.5,1.40) {$\Delta_0$};
 \foreach \l/\r/\lbl in {
  0/2/{\text{unchanged}},2/3/{b_1},3/6/{a_1},
  6/8/{c_1},8/10/{a_2},10/11/{c_2}} {
  \node[font=\scriptsize] at ({(\l+\r)/2},0.93) {$\lbl$};
 }
 % The full row is a basis label of E_0 ... E_10 in the output K_11.
 \foreach \e in {1,4,5,7,9} {
  \draw[draw=black!20,line width=0.35pt] (\e,0.02)--(\e,0.60);
 }
 \foreach \e in {0,...,10} {
  \node[font=\scriptsize] at ({\e+0.5},-0.23) {$\mathsf E_{\e}$};
 }
 \foreach \e in {2,4,5,6,7,9,10} {
  \node at ({\e+0.5},0.31) {$\ket0$};
 }
 \foreach \e in {0,1} {
  \node at ({\e+0.5},0.31) {$\ket{\ell_{\e}}$};
 }
 \foreach \e in {3,8} {
  \draw[nonzero label] ({\e+0.12},0) rectangle ({\e+0.88},0.62);
  \node[text=blue!60!black] at ({\e+0.5},0.31) {$\ket{\ell_{\e}}$};
 }
 % The last two nonzero positions and the final block index recover a.
 \foreach \x/\lbl in {3/{s_1=3},8/{s_2=8},11/{j_L=11}} {
  \draw[densely dashed,draw=blue!45!black,line width=0.45pt]
   (\x,-0.44)--(\x,-0.64);
  \node[text=blue!60!black,font=\scriptsize] at (\x,-0.80) {$\lbl$};
 }
 \end{tikzpicture}
 \caption{Output Kraus labels for one summand of
 \eqref{eq:private-summand-expansion}.  The horizontal widths show the initial
 index \(j_0\) and the subsequent increments.
 The two \(\Delta_1\) factors create nonzero labels at
 $s_1=3$ and $s_2=8$.  The increments $c_1=b_2$ and $c_2=b_3$ are fixed
 by $b$; \eqref{eq:recover-increments} then determines $a$.}
 \label{fig:output-labels}
\end{figure}
For fixed \(b\), the final index \(j_L\) and the \(r\) largest positions
carrying nonzero output labels therefore determine \(a\).  Thus, for fixed
\(j_L\), distinct \(a,a'\) give outputs supported on disjoint sets of
Kraus-label basis states in \(\cK_{j_L}\).  Their ranges are orthogonal;
outputs in different \(\cK_{j_L}\) are also orthogonal.  Hence
\begin{equation}
 T_{a,b}^\dagger T_{a',b}=0\qquad(a\neq a').
 \label{eq:increment-orthogonality}
\end{equation}
\Needspace{7\baselineskip}
Let \(\#\{a\}\) and \(\#\{b\}\) count all values of \(a\) and \(b\)
that occur in the decomposition of \(T\).  In particular, \(\#\{a\}\)
bounds the number of terms for any fixed \(b\).  Using
\eqref{eq:increment-orthogonality} and \eqref{eq:fixed-increment-bound},
\begin{equation}
 \begin{aligned}
 \left\|\sum_aT_{a,b}\right\|^2
 &=\left\|\sum_{a,a'}T_{a,b}^\dagger T_{a',b}\right\|
 =\left\|\sum_aT_{a,b}^\dagger T_{a,b}\right\|\\
 &\leq\sum_a\norm{T_{a,b}}^2
 \leq\#\{a\}\,\theta^{r+2u}.
 \end{aligned}
 \label{eq:fixed-b-bound}
\end{equation}

We now count the possible values of \(a\) and \(b\).  Every pair that occurs has
\(|a|+|b|=j_L-j_0\leq J-1\), so \(|a|\leq J\) and \(|b|\leq J\).
Relaxing the condition \(a_i\geq1\) to \(a_i\geq0\) only enlarges the count.
After adjoining \(a_{r+1}=J-|a|\) and \(b_{u+1}=J-|b|\), we count
nonnegative integer tuples with a fixed sum:
\begin{equation}
 \begin{aligned}
 \#\{a\}
 &\leq\#\left\{a\in\mathbb N_0^{r+1}:
       \sum_{i=1}^{r+1}a_i=J\right\}
 =\binom{J+r}{r},\\
 \#\{b\}
 &\leq\#\left\{b\in\mathbb N_0^{u+1}:
       \sum_{i=1}^{u+1}b_i=J\right\}
 =\binom{J+u}{u}.
 \end{aligned}
 \label{eq:increment-count}
\end{equation}
Using the decomposition \eqref{eq:increment-decomposition}, taking the
triangle inequality over \(b\), and substituting
\eqref{eq:fixed-b-bound} and \eqref{eq:increment-count} yields
\[
 \norm T
 \leq\sum_b\left\|\sum_aT_{a,b}\right\|
 \leq\#\{b\}\sqrt{\#\{a\}}\,\theta^{r/2+u}
 \leq\theta^{r/2+u}
 \sqrt{\binom{J+r}{r}}\binom{J+u}{u}.
\]
\end{proof}

With Proposition~\ref{prop:ordered-product-bound} established, we now
complete the proof of Proposition~\ref{prop:private-polynomial}.

\begin{proof}[Proof of Proposition~\ref{prop:private-polynomial}]
Substitute \(S_{11}=S_{11}(0)+\Delta\) into
\(p(S_{11})=S_{11}^2+S_{11}^4\).  Using
\(p(S_{11}(0))=0\) from \eqref{eq:S11-zero-polynomial} yields
\begin{equation}
 \begin{aligned}
 p(S_{11})
 &=\sum_{\ell\in\{2,4\}}
 \bigl[(S_{11}(0)+\Delta)^\ell-S_{11}(0)^\ell\bigr]\\
 &=\sum_{\ell\in\{2,4\}}
 \ \sum_{\substack{X_1,\ldots,X_\ell\in\{S_{11}(0),\Delta\}\\
                    X_i=\Delta\text{ for some }i}}
 X_\ell\cdots X_1.
\end{aligned}
\label{eq:pA-expansion}
\end{equation}
For each \(\ell\), the inner sum contains \(2^\ell-1\) terms: each of the
\(\ell\) factors has two choices, and the term \(S_{11}(0)^\ell\) is
excluded.  Each term contains between one and \(\ell\) factors \(\Delta\).
Since \(\ell\in\{2,4\}\), \eqref{eq:pA-expansion} has
\((2^2-1)+(2^4-1)=18\) terms, each containing one to four factors \(\Delta\).
Expanding \(p(S_{11})^{2q}\) therefore gives
at most \(18^{2q}\) ordered products, each containing between \(2q\) and
\(8q\) factors \(\Delta\).  Using \eqref{eq:private-decomposition} to replace
each \(\Delta\) by \(\Delta_1+\Delta_0\) splits each product into at most
\(2^{8q}\)
ordered products.  Let \(T\) range over the resulting products, counted with
multiplicity.  The triangle inequality gives
\begin{equation}
 \left\|\mathcal F_q(S_{11})\right\|
 \leq2^{-2q}18^{2q}2^{8q}\max_T\norm T.
 \label{eq:expanded-product-bound}
\end{equation}

It remains to bound each \(T\).  Let \(r\) and \(u\) count its
\(\Delta_1\) and \(\Delta_0\) factors, and set \(n=r/2+u\), the
exponent of \(\theta\) in Proposition~\ref{prop:ordered-product-bound}.
Since \(2q\leq r+u\leq8q\),
\begin{equation}
 q\leq\frac{r+u}{2}\leq n\leq r+u\leq8q.
 \label{eq:correction-count}
\end{equation}
By \eqref{eq:d-theta-bounds} and
\(\alpha\delta=\tau/J\) from \eqref{eq:local-scalars},
\(J\theta\leq2\tau\).  Applying
Proposition~\ref{prop:ordered-product-bound} and
\(\binom Mk\leq(eM/k)^k\), with \(r,u\leq8q\leq J\), gives the following
estimate.  When \(r=0\) or \(u=0\), the corresponding factor is omitted:
\[
 \begin{aligned}
 \norm T
 &\leq\theta^{r/2+u}
 \left(\frac{e(J+r)}r\right)^{r/2}
 \left(\frac{e(J+u)}u\right)^u\\
 &\leq\left(\frac{2eJ\theta}{r}\right)^{r/2}
       \left(\frac{2eJ\theta}{u}\right)^u
 \leq\frac{(4e\tau)^n}{r^{r/2}u^u}.
 \end{aligned}
\]
To express the denominator
in terms of \(n=r/2+u\), convexity of \(x\log x\), with
\(0\log0=0\), gives
\begin{equation}
 \begin{aligned}
 \frac r2\log\frac r2+u\log u&\geq n\log(n/2),\\
 r^{r/2}u^u&\geq(r/2)^{r/2}u^u\geq(n/2)^n.
 \end{aligned}
 \label{eq:weighted-denominator}
\end{equation}
Consequently, with \(C=8e\), every product satisfies
\begin{equation}
 \norm T\leq\frac{(4e\tau)^n}{(n/2)^n}
 =\left(\frac{C\tau}{n}\right)^n.
 \label{eq:single-product-bound}
\end{equation}

\Needspace{9\baselineskip}
If \(\tau=0\), every product is zero by \eqref{eq:single-product-bound}.
Otherwise, choose \(C_0\geq20736C\).  Since
\(n\geq q\geq C_0\tau\geq C\tau\),
\begin{equation}
 \left(\frac{C\tau}{n}\right)^n
 \leq\left(\frac{C\tau}{q}\right)^n
 \leq\left(\frac{C\tau}{q}\right)^q.
 \label{eq:decreasing-base}
\end{equation}
Substituting \eqref{eq:single-product-bound} and \eqref{eq:decreasing-base} into
\eqref{eq:expanded-product-bound}, with \(2^{-2q}18^{2q}2^{8q}=20736^q\), gives
\[
 \left\|\mathcal F_q(S_{11})\right\|
 \leq20736^q\left(\frac{C\tau}{q}\right)^q
 \leq\left(\frac{C_0\tau}{q}\right)^q.\qedhere
\]
\end{proof}

The time-dependent construction in Section~\ref{sec:time-dependent} has the
same decomposition into the private spaces \(\cK_j\), but the query on
\(\cK_j\) is \(\Omega_j\) rather than a common \(\Omega\).  We conclude with
the corresponding polynomial bound used in that section.

\begin{proposition}[Time-dependent polynomial bound for the private block]
\label{prop:td-private-polynomial}
Under the assumptions and notation of Theorem~\ref{thm:time-dependent}, let
\(S\) be the transducer in \eqref{eq:td-transducer-factorization}, with
private-to-private block \(S_{11}\).
If \(\alpha\delta\leq1/2\), \(J\geq8q\), and \(q\geq C_0\tau\), then
\begin{equation}
 \norm{\mathcal F_q(S_{11})}
 \leq\left(\frac{C_0\tau}{q}\right)^q.
 \label{eq:td-factorial-bound}
\end{equation}
\end{proposition}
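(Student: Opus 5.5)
The plan is to rerun the time-independent argument of Lemma~\ref{lem:private-block-bounds}, Proposition~\ref{prop:ordered-product-bound}, and the proof of Proposition~\ref{prop:private-polynomial} with \(\Omega\) replaced by \(\Omega_j\) on \(\cK_j\), and to check that no step uses that the oracle is the same for every \(j\). The key structural observation is that \(G\), and hence every \(G_r\) and \(G_r(0)\), depends only on \(\kappa,\beta\). These depend only on \(\delta,\alpha_H,\alpha_B\), which are common to all steps. Thus \(S=(G_{J-1}\cdots G_0)Q\) with
\[
 Q=I_{\cH_J}\oplus\bigoplus_j(I_{\mathcal R_{<j}}\otimes\Omega_j),
\]
and the only change is that \(\left.Q\right|_{\cK_j}=I_{\mathcal R_{<j}}\otimes\Omega_j\).

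First I will establish the analogue of Lemma~\ref{lem:private-block-bounds}. The identities \eqref{eq:inactive-update-on-Kj}, \eqref{eq:remove-inactive-updates}, \eqref{eq:zero-private-offdiagonal}, and \eqref{eq:S11-lower-blocks-zero} use only the supports of the \(G_r\) and the fact that \(Q\cK_j\subseteq\cK_j\), so they are unchanged. The restriction \(\left.S_{11}(0)\right|_{\cK_j}\) becomes
\[
 I_{\mathcal R_{<j}}\otimes\Bigl[(iU_{H,j})\oplus\begin{pmatrix}0&-U_{B,j}^\dagger\\P_\perp U_{B,j}&0\end{pmatrix}\Bigr].
\]
Since \(U_{H,j}\) is a Hermitian unitary and \(U_{B,j}\) is unitary, the squares computed in \eqref{eq:zero-private-square} are again minus projections, blockwise in \(j\). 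Hence \(p(S_{11}(0))=0\) and \(\norm{S_{11}(0)}\leq1\). The diagonal correction satisfies \((\Delta_0)_{jj}=I\otimes(G_{11}-G_{11}(0))\Omega_j\), whose norm is still \(1-d\). In the off-diagonal factorization \eqref{eq:offdiagonal-block-factorization}, the rightmost factor contains \(\Omega_j\) in place of \(\Omega\). The bounds \eqref{eq:factor-norms} used only \(\norm{\Omega}=1\) and \(\norm{\bra0_{\mathsf A_B}\otimes P_{\rm L}}\leq1\), so \eqref{eq:delta-one-block-bound} and \eqref{eq:delta-zero-block-bound} hold verbatim. The middle factors \(d^{k-j-1}\iota_{kj}\) and the left factor involve no oracle at all.

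Next, Proposition~\ref{prop:ordered-product-bound} goes through unchanged. It uses only four inputs: the block-triangular support pattern, the block norm bounds, the form \(I_{\mathcal R_{<j}}\otimes(\cdot)\) of \(\left.S_{11}(0)\right|_{\cK_j}\) and \((\Delta_0)_{jj}\), and the Kraus-label table for \((\Delta_s)_{kj}\). The label table depends only on the \(\ket0_{\mathsf E_j}\) versus \(P_{\rm L}\) supports in \eqref{eq:local-from-K} and on \(\iota_{kj}\), and neither is oracle dependent. Consequently the orthogonality \eqref{eq:increment-orthogonality} and the counting \eqref{eq:increment-count} are identical. The final expansion of \(p(S_{11})^{2q}\) and the estimates \eqref{eq:expanded-product-bound}--\eqref{eq:decreasing-base} are purely algebraic given these inputs, with \(J\theta\leq2\tau\) still following from \(\alpha\delta=\tau/J\). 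They therefore yield \eqref{eq:td-factorial-bound} with the same constant \(C_0\).

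The main obstacle is verification rather than new estimation. I must confirm that every place the appendix wrote \(\Omega\) either sits inside a single \(\cK_j\), so that it can be replaced by \(\Omega_j\), or is bounded only through unitarity. The one subtle point is that the factorization \eqref{eq:remove-inactive-updates} commuted \(Q\) past inactive updates using \(\Pi_jQ\Pi_j=Q\Pi_j\). This still holds because the time-indexed query acts diagonally on the \(\mathsf T\) label, which I will check explicitly for \(U_H^{(J)},U_B^{(J)}\) in \eqref{eq:lind-t-oracles}.
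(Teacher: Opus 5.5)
Your proposal is correct and matches the paper's proof, which simply notes that the time-independent argument of Appendix~\ref{sec:private-block-proof} carries over with $\Omega$ on $\cK_j$ replaced by $\Omega_j$. That argument uses only per-$j$ oracle properties (a Hermitian unitary $U_{H,j}$, a unitary $U_{B,j}$, an oracle-independent $G$) and never $\Omega_j=\Omega_k$; your step-by-step check is a more explicit version of that remark, and the point you flag as subtle, $Q\cK_j\subseteq\cK_j$, already follows from the direct-sum definition of $Q$ in \eqref{eq:td-transducer-factorization}.
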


\Needspace{4\baselineskip}
\begin{proof}
The proof of Proposition~\ref{prop:private-polynomial} applies with \(\Omega\)
on \(\cK_j\) replaced by \(\Omega_j\): it uses the oracle assumptions
separately for each \(j\) and never requires \(\Omega_j=\Omega_k\) for
\(j\ne k\).
\end{proof}

\bibliographystyle{alphaurl}
\bibliography{lindbladian_references}

\end{document}